\def\BibTeX{{\rm B\kern-.05em{\sc i\kern-.025em b}\kern-.08em
		T\kern-.1667em\lower.7ex\hbox{E}\kern-.125emX}}
\theoremstyle{plain}
\newtheorem{theorem}{Theorem}
\newtheorem{lemma}[theorem]{Lemma}
\newtheorem{corollary}[theorem]{Corollary}
\newtheorem{proposition}[theorem]{Proposition}
\theoremstyle{definition}
\newtheorem{definition}[theorem]{Definition}
\newtheorem{remark}[theorem]{Remark}
\newcommand{\claim}[1]{\medskip\noindent\texttt{\underline{Claim #1}:}}
\newcommand{\claimproof}[1]{\smallskip\noindent\texttt{{Proof of Claim #1}:}}
\newcommand{\claimproofend}{\hfill$\dashv$\medskip}
\newcommand{\calA}{\mathcal{A}} % caligraphic letters
\newcommand{\calB}{\mathcal{B}}
\newcommand{\calE}{\mathcal{E}}
\newcommand{\calK}{\mathcal{K}}
\newcommand{\calL}{\mathcal{L}}
\newcommand{\calM}{\mathcal{M}}
\newcommand{\calN}{\mathcal{N}}
\newcommand{\calT}{\mathcal{T}}
\newcommand{\calG}{\mathcal{G}}
\newcommand{\bbA}{\mathbb{A}} % bold letters
\newcommand{\bbB}{\mathbb{B}}
\newcommand{\bbS}{\mathbb{S}}
\newcommand{\bbG}{\mathbb{G}}
\newcommand{\sfN}{\mathsf{N}}
\newcommand{\sfR}{\mathsf{R}}
\newcommand{\sfS}{\mathsf{S}}
\newcommand{\sfT}{\mathsf{T}} 
\newcommand{\sff}{\mathsf{f}} 
\renewcommand{\|}{\mid}
\renewcommand{\phi}{\varphi}
\renewcommand{\epsilon}{\varepsilon}
\newcommand{\mybar}[1]{\overline{#1}}
\newcommand{\Nat}{\mathbb{N}}
\newcommand{\powset}{\mathcal{P}}
\newcommand{\ran}{\mathrm{ran}}
\newcommand{\isdef}{\mathrel{:=}}
\newcommand{\nada}{\varnothing}
\newcommand{\subst}[2]{[#1 / #2]} % unary substitution
\newcommand{\impl}{\rightarrow} % logical implication
\newcommand{\liff}{\leftrightarrow} % logical iff
\newcommand{\Land}{\bigwedge}
\renewcommand{\land}{\wedge}
\renewcommand{\lor}{\vee}
\newcommand{\proves}{\vdash}
\newcommand{\Prop}{\mathsf{Prop}}
\newcommand{\Var}{\mathsf{Var}}
\newcommand{\FV}{\mathsf{FV}}
\newcommand{\Act}{\mathsf{Act}}
\newcommand{\Voc}{\mathsf{Voc}}
\newcommand{\sat}{\Vdash}
\newcommand{\modImpl}{\vDash}
\newcommand{\muML}{\calL_{\mu}}
\newcommand{\muMLtwo}{\calL_{\mu}^2}
\newcommand{\Clos}{\mathsf{Clos}}
\newcommand{\ClosN}{\mathsf{Clos}^{\neg}}
\newcommand{\Fix}{\mathsf{Fix}}
\newcommand{\nil}{\mathrm{nil}}
\newcommand{\lbox}{\scalebox{0.8}{$\square$}}
\newcommand{\lm}{\mbox{\ooalign{\ld \cr \hidewidth\raise.05ex\hbox{$* \mkern3.1mu$}\cr}}} 
\newcommand{\lbm}{\ooalign{$\lbox$ \cr \hidewidth\raise.05ex\hbox{$* \mkern5.5mu$}\cr}\hspace{-0.1cm}} \newcommand{\lcm}{\ooalign{$\lbox$ \cr \hidewidth\raise.05ex\hbox{$\cdot \mkern2.9mu$}\cr}}
\newcommand{\lboxp}[1][a]{\ensuremath{[#1]}\xspace}
\newcommand{\ldiap}[1][a]{\ensuremath{\langle#1\rangle}\xspace}
\newcommand{\Ru}{\ensuremath{\mathsf{R}}\xspace}
\newcommand{\AxLit}{\ensuremath{\mathsf{Ax1}}\xspace}
\newcommand{\AxBot}{\ensuremath{\mathsf{Ax2}}\xspace}
\newcommand{\RuOr}{\ensuremath{\mathsf{R}_{\lor}}\xspace}
\newcommand{\RuAnd}{\ensuremath{\mathsf{R}_{\land}}\xspace}
\newcommand{\RuDia}[1][\ensuremath{a}]{\ensuremath{\mathsf{\mathsf{R}_{\ldiap[#1]}}}\xspace}
\newcommand{\RuDiaL}[1][\ensuremath{a}]{\ensuremath{\mathsf{R}_{\ldiap[#1]}^l}\xspace}
\newcommand{\RuDiaR}[1][\ensuremath{a}]{\ensuremath{\mathsf{R}^r_{\ldiap[#1]}}\xspace}
\newcommand{\RuFp}[1]{\ensuremath{\mathsf{R}_{#1}}\xspace}
\newcommand{\RuMu}{\RuFp{\mu}}
\newcommand{\RuNu}{\RuFp{\nu}}
\newcommand{\RuEta}{\RuFp{\eta}}
\newcommand{\RuWeak}{\ensuremath{\mathsf{weak}}\xspace}
\newcommand{\RuExp}{\ensuremath{\mathsf{exp}}\xspace}
\newcommand{\RuDischarge}[1]{\ensuremath{\mathsf{D}_{#1}}\xspace}
\newcommand{\RuReset}[1][\dx]{\ensuremath{\mathsf{Reset_{#1}}}\xspace}
\newcommandx{\RuCompress}[2][1= ,2= ]{\ensuremath{\mathsf{Compress}_{#1}^{#2}}\xspace}
\newcommand{\RuThin}{\ensuremath{\mathsf{thin}}\xspace}
\newcommand{\ruleskip}{\vspace{0.4cm}} %controls vertical space between proof rules
\newcommand{\gdist}{\hspace{0.08cm}|\hspace{0.08cm}} %controls white space in grammar def of formulas
\newcommand{\Tokens}{\mathcal{D}}
\newcommand{\dx}{\ensuremath{\mathsf{x}}}
\newcommand{\dy}{\ensuremath{\mathsf{y}}}
\newcommand{\dz}{\ensuremath{\mathsf{z}}}
\newcommand{\dd}{\ensuremath{\mathsf{d}}}
\newcommand{\de}{\ensuremath{\mathsf{e}}}
\newcommand{\df}{\ensuremath{\mathsf{f}}}
\newcommand{\discharge}[2]{\ensuremath{\lceil #1 \rceil^{#2}}\xspace}
\newcommand{\downto}{\downharpoonright}
\newcommand{\fina}[1]{[#1]}
\newcommand{\NWtwo}{\ensuremath{\mathsf{NW^2}}\xspace}
\newcommand{\JStwo}{\ensuremath{\mathsf{JS^2}}\xspace}
\newcommand{\JStwoInfty}{\ensuremath{\mathsf{JS^2_{\infty}}}\xspace}
\newcommand{\scs}{\ensuremath{\mathrm{scs}}\xspace}
\newcommand{\cyclicPT}{\calT_{\pi}^C}
\newcommand{\green}{\mathrm{green}}
\newcommand{\white}{\mathrm{white}}
\newcommand{\conv}[1]{\breve{#1}}
\newcommand{\trace}[1][]{\rightsquigarrow_{#1}}
\newcommand{\ntrace}[1][]{\not\rightsquigarrow_{#1}}
\newcommand{\tracestep}{\to_{C}}
\renewcommand{\to}[1][]{\overset{#1}{\rightarrow}}
\newcommand{\AxTraceNeg}{\ensuremath{\mathsf{Ax3}}\xspace}
\newcommand{\AxTraceLoop}{\ensuremath{\mathsf{Ax4}}\xspace}
\newcommand{\RuTrans}{\ensuremath{\mathsf{trans}}\xspace}
\newcommand{\RuJump}{\ensuremath{\mathsf{jump}}\xspace}
\newcommand{\RuCut}{\ensuremath{\mathsf{cut}}\xspace}
\newcommand{\RuTCut}{\ensuremath{\mathsf{tcut}}\xspace}
\newcommand{\EG}{\mathcal{E}}
\newcommand{\Own}{O}
\newcommand{\WC}{W}
\newcommand{\Om}{\Omega}
\newcommand{\last}{\mathsf{last}}
\newcommand{\first}{\mathsf{first}}
\newcommand{\PM}{\mathit{PM}}
\newcommand{\eloi}{\exists}
\newcommand{\abel}{\forall}
\newcommand{\eClos}{\epsilon\mathrm{Clos}}
\newcommand{\Seq}{\mathsf{Seq}}
\newcommand{\Inst}{\mathsf{Inst}}
\newcommand{\conc}{\mathsf{conc}}
\title{Interpolation for the two-way modal $\mu$-calculus\thanks{The research of the first author has been made possible by a grant from the Dutch Research Council NWO, project nr. 617.001.857. The authors are grateful for comments by the anonymous referees.}}
\author{\IEEEauthorblockN{Johannes Kloibhofer}
	\IEEEauthorblockA{{University of Amsterdam}\\ Amsterdam, Netherlands}\\
%		Amsterdam, Netherlands}\\
	\and
	\IEEEauthorblockN{Yde Venema}
		\IEEEauthorblockA{{University of Amsterdam}\\ Amsterdam, Netherlands}\\
%		Amsterdam, Netherlands}
	}
\begin{document}
	
\maketitle
	
\begin{abstract}
The two-way modal $\mu$-calculus is the extension of the (standard) one-way 
$\mu$-calculus with converse (backward-looking) modalities.
For this logic we introduce two new sequent-style proof calculi: a non-wellfounded
system admitting infinite branches
% in the style of Niwi\'{n}ski \& Walukiewicz, 
and a finitary, cyclic version of this that employs annotations.
% as in the works of Jungteerapanitch and Stirling.

As is common in sequent systems for two-way modal logics, our calculi 
feature an analytic cut rule. 
What distinguishes our approach is the use of so-called trace atoms, 
% introduced by Rooduijn \& Venema, 
which serve to apply Vardi's two-way automata in 
a proof-theoretic setting.

We prove soundness and completeness for both systems and subsequently use the 
cyclic calculus to show that the two-way $\mu$-calculus has the (local) Craig
interpolation property, with respect to both propositions and modalities.
Our proof uses a version of Maehara's method adapted to cyclic proof systems.
As a corollary we prove that the two-way $\mu$-calculus also enjoys
Beth's definability property.
\end{abstract}

\begin{IEEEkeywords}
fixpoint logic, two-way modal $\mu$-calculus, cyclic proofs, sequent 
systems, interpolation
\end{IEEEkeywords}

\section{Introduction}\label{sec.introduction}

\paragraph{The modal $\mu$-calculus}

The (standard, one-way) modal $\mu$-calculus $\muML$ 
(\cite{brad:moda06,brad:muca18})
is an extension of propositional (poly-)modal logic with explicit least- and 
greatest fixpoint operators which allow the expression of various recursive
concepts in the language.
% Its origins lying in the work of Scott \& de Bakker (CITE) 
% and others in the 1970s, t
The version used today, which was introduced by Kozen~\cite{Kozen1983}, has 
become a key tool in the formal study of the behaviour of programs and the 
dynamics of processes in general.
Many temporal and dynamic logics such as the computational tree logics CTL and 
CTL${}^{*}$, and propositional dynamic logic PDL, have natural embeddings in 
$\muML$.
It has been shown that when it comes to bisimulation-invariant 
properties, $\muML$ has the same expressive power as monadic second-order 
logic~\cite{Janin96}.
Despite this increase in expressiveness, the $\mu$-calculus has remarkably good
computational properties, such as an \textsc{ExpTime}-complete satisfiablity 
problem~\cite{Emerson99}.

Furthermore, $\muML$ displays some excellent meta-logical behaviour.
For instance, it enjoys the finite model property (Kozen~\cite{koze:fini88}), 
and a strong, \emph{uniform interpolation} property as shown by D'Agostino and 
Hollenberg~\cite{dago:logi00}.
The set of valid $\muML$-formulas has an elegant Hilbert-style axiomatisation, 
already proposed by Kozen~\cite{Kozen1983} but only proved to be complete by 
Walukiewicz~\cite{Walukiewicz2000}.

The \emph{proof theory} of the modal $\mu$-calculus is generally considered to
be hard, largely due to the subtleties of nested least- and greatest fixpoint
operators.
Niwi\'{n}ski \& Walukiewicz~\cite{Niwinski1996} introduced, in a framework of
infinite games, a certain sequent system in which sequents can be derived by 
means of non-wellfounded proofs allowing infinite branches, as long as these 
satisfy a certain success (or `progress') condition.\footnote{%
   Niwi\'{n}ski \& Walukiewicz actually work with \emph{tableaux}, which are 
   somewhat dual to sequent systems by establishing
   satisfiability rather than validity of finite sets of formulas.
   However,
%    despite this dual nature and different background, 
   tableau calculi and sequent-style proof systems have so much in common, that 
   in this paper we do not differentiate between the two approaches.
   } 
Using results from automata theory one may show that any such 
non-wellfounded proof can be replaced by one that is regular - it has only finitely many distinct subderivations.
Such a regular proof can be represented as the unfolding of a 
\emph{cyclic} proof, that is, a certain finite derivation tree with back edges.
As a further step, Jungteerapanitch~\cite{Jungteerapanich2010} and 
Stirling~\cite{Stirling2014} annotate formulas to encode the automata-theoretic information,
which guarantees that all infinite branches in a proof 
tree satisfy the mentioned success condition.
We shall build on this work.

\paragraph{The two-way $\mu$-calculus}
Despite its expressive power, \emph{enrichments} of the standard modal
$\mu$-calculus have been introduced with various features that are well-known 
in the setting of basic (that is, fixpoint-free) modal logic, such as converse
modalities, nominals or counting modalities~\cite{bona:comp08}.
In this paper we shall focus on the \emph{two-way $\mu$-calculus} 
$\muMLtwo$~\cite{Vardi1998}, perhaps the most natural and simple extension of 
$\muML$.
Concretely, we obtain the language of $\muMLtwo$ by adding, for each modality $a$, a modality $\conv{a}$ which in the semantics will be interpreted as 
the converse of the accessibility relation for $a$.
This addition enables $\muMLtwo$ to reason about the past, which is attractive 
from the perspective of formal program verification~\cite{lich:glor85}, but also
in the area of description logics, where converse modalities correspond to 
inverse roles~\cite{degi:desc94}.

Compared to its one-way version, surprisingly little seems to be known about 
this logic.
While it is not hard to see that the two-way $\mu$-calculus lacks the finite
model property, a key result by Vardi~\cite{Vardi1998} states that the 
satisfiablity problem for the two-way $\mu$-calculus can be solved in 
exponential time. 
Vardi introduces the notion of an alternating two-way tree automaton, and the 
key argument in his proof is based on a reduction of these two-way automata 
to one-way deterministic tree automata.
Building on this, French~\cite{fren:bisi06,fren:idem07} proves various results 
on bisimulation quantifiers for $\muMLtwo$ via a transfer of results for $\muML$.
Afshari, J\"ager \& Leigh proposed a sound, complete and cut-free derivation 
system, which features an infinitely branching proof rule~\cite{afsh:infi19}.
A finitary, cyclic proof calculus was given by Afshari et al.~\cite{afsh:proo23};
this calculus is not cut-free, but its restrictions on the cut rule make the
system suitable for proof search procedures.
\smallskip

\textbf{Our goal} is to contribute to the knowledge on the two-way 
$\mu$-calculus by showing that it has two important properties: Craig 
interpolation and Beth definability.
Our main tool for establishing this, is a new, cyclic proof system for 
$\muMLtwo$ that is of independent interest.

\paragraph{Craig interpolation and Beth definability}
A logic has the Craig Interpolation Property (CIP) if for any pair of formulas 
$\phi$ and $\psi$ such that $\psi$ is a consequence of $\phi$ (denoted as 
$\phi \modImpl \psi$), one may find an \emph{interpolant}; that is, a formula
$\theta$ in the common vocabulary of $\phi$ and $\psi$ such that $\phi \modImpl
\theta$ and $\theta \modImpl \psi$.
% In the setting of the (two-way) $\mu$-calculus we define the vocabulary of 
% a formula simply as the set of atomic actions and proposition letters occurring
% in the formula.
% Furthermore, t
% The consequence relation that we consider is the \emph{local} one, 
% meaning that $\phi \modImpl \psi$ is equivalent to the validity of the 
% implication $\phi\to\psi$.
The property was named of William Craig who proved it for first-order 
logic~\cite{crai:thre57}, since then it has been studied intensively in logic
(see for instance van Benthem~\cite{bent:many08}) while it also has many computer
science applications in for instance knowledge representation~\cite{jung:sepa21}
and model checking~\cite{mcmi:appl05}.

Beth Definability is a related property; informally it states that the implicit
definability of a concept in a logic implies the existence of an explicit
definition.
This notion has its origin in the work of Beth~\cite{beth:pao53} who discovered 
it as a property of classical first-order logic.
Beth definability is studied intensively in the area of description logics,
where it is used to optimize reasoning ~\cite{cate:beth13}. 

The main result in our paper is the following.

\begin{theorem}
The two-way modal $\mu$-calculus has the Craig interpolation property and, 
as a corollary, the Beth definability property.
\end{theorem}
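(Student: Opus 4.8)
The plan is to derive the interpolation property from the cyclic proof system $\JStwoCut$ via a suitably adapted version of Maehara's method, and then obtain Beth definability as a routine consequence. Suppose $\phi \modImpl \psi$. Writing $\phi^{c}$ for the negation normal form of $\neg\phi$, local consequence means that the sequent $\{\phi^{c},\psi\}$ is valid, so by completeness of $\JStwoCut$ it has a cyclic proof $\pi$. I would fix the \emph{split} $(\{\phi^{c}\};\{\psi\})$ of its root and, more generally, work with split sequents $(\Gamma;\Delta)$ in which the formulas are partitioned into a left and a right block. For such a split an \emph{interpolant} is a formula $\theta$ with $\Voc(\theta)\subseteq\Voc(\Gamma)\cap\Voc(\Delta)$ — where the vocabulary records both propositional letters and modalities — such that both $\Gamma\cup\{\theta\}$ and $\Delta\cup\{\theta^{c}\}$ are provable. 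An interpolant for the root split is exactly a Craig interpolant for $\phi\modImpl\psi$, so the task reduces to associating an interpolant to the root split of $\pi$.

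The first, local, ingredient is a rule-by-rule Maehara analysis: for every rule of $\JStwoCut$ I would show how a split of the conclusion induces splits of the premises, and how interpolants for the premises combine into an interpolant for the conclusion. For the axioms and the Boolean rules this is the classical construction (conjunctions and disjunctions of premise-interpolants), and the cut rule is handled as usual since interpolation is precisely closed under cut. The modal rules require the care that underlies \emph{modal} interpolation: when the principal modality $a$ is common to both blocks the interpolant is prefixed by $\langle a\rangle$ or $[a]$ accordingly, whereas when $a$ is private to one side the premise-interpolant is instead collapsed to $\top$ or $\bot$, keeping the result within the shared vocabulary. The annotation- and trace-bookkeeping rules are transparent to the construction, merely transporting the split.

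The genuinely new, global, ingredient concerns the cycles. Since $\pi$ is a finite tree with back edges, I would proceed by induction on its finite tree representation, but a discharged leaf pointing to a companion $c$ cannot be handed a concrete interpolant; instead I assign to each companion a fresh propositional variable $x_{c}$ (shared by both blocks, hence harmless for the vocabulary condition) and let the interpolant at such a leaf be $x_{c}$ or its complement. At the companion node itself the variable is then bound by a fixpoint operator, producing a genuine $\muMLtwo$-interpolant. The crux is to choose $\mu$ versus $\nu$ correctly: the choice must be dictated by the success condition witnessed on the cycle, so that the two candidate derivations — of $\Gamma\cup\{\theta\}$ and of $\Delta\cup\{\theta^{c}\}$ — are themselves successful cyclic proofs. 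This is exactly where the paper's trace atoms and the reduction to Vardi's two-way automata do the work: they let me read off, for each cycle and each block, whether the relevant trace is a $\nu$- or a $\mu$-trace, and hence fix the binder so that every infinite branch of each derived proof carries a successful trace.

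I expect the main obstacle to be precisely this transfer of the success condition through the split, complicated by the two-way modalities: because converse modalities allow traces to run backwards, a trace need not stay confined to one block along a branch, so the naive one-way assignment of fixpoint types to companion variables breaks down. Controlling this — showing that the trace-atom annotations localize the progress information on each side well enough that a single, well-guarded fixpoint formula interpolates — is the heart of the argument, and the guardedness and well-formedness of the nested binders must be checked alongside it. Once the interpolation property is established, Beth definability follows by the standard argument: an implicit definition yields, via interpolation applied to two renamed copies of the defining formula, an explicit definition in the common vocabulary.
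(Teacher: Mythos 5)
Your outline does follow the paper's route: Maehara's method run over a cyclic annotated proof, fresh variables at discharged leaves bound by a fixpoint at the companion, the binder fixed by which side witnesses the cycle's success, the shared/private-modality case analysis at modal rules, and Beth via the standard renaming argument. But there is a genuine gap at the cut rule. You propose to work in $\JStwoCut$ and dismiss cuts with ``interpolation is precisely closed under cut''; for \emph{non-analytic} cuts this is false under the Maehara vocabulary condition. If the cut formula $\phi$ is placed in the left block, the combined interpolant only satisfies $\Voc(I_v \land I_w) \subseteq \Voc(\Gamma,\phi) \cap \Voc(\Delta)$, so any symbol of $\phi$ shared with $\Delta$ but absent from $\Gamma$ can leak into the interpolant; and an arbitrary cut formula mixing vocabulary from both sides (or fresh symbols) fits in neither block. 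This is exactly why the paper restricts to analytic cuts ($\phi \in \ClosN(\Gamma)$, which gives $\Voc(\phi) \subseteq \Voc(\Gamma)$) and why a substantial part of Section~\ref{sec.splitJS} exists: completeness of the \emph{split} system with analytic cuts is nontrivial, passing through the intermediate globally analytic system $\JStwoCutG$ (Lemmas~\ref{lem.JStwoSplitCompleteness} and~\ref{lem.analyticGlobLoc}), and requiring extra $\RuCut^l/\RuCut^r$ insertions precisely because the two-way modal rule pushes $\ldiap[\conv{a}]$-formulas whose closure membership belongs to the \emph{other} component across the split. Your remark that the bookkeeping rules ``merely transport the split'' misses that the modal rule is the one place where the split genuinely breaks and must be repaired.

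The second load-bearing step you flag as the crux but leave open --- reading off the binder at a companion --- has a concrete mechanism in the paper that your proposal lacks. In split proofs, fresh names introduced by $\RuMu$ and $\RuJump_e$ never cross the split, so the single name witnessing the success of a repeat lives entirely in one component; hence every repeat is left- or right-successful, and the companion receives $\nu x_u.I_v$ or $\mu x_u.I_v$ accordingly --- no per-cycle trace-type analysis on branches is needed, and your worry about traces wandering between blocks is dissolved at the level of names rather than traces. Moreover, well-formedness of the two derived proofs $\pi^l_u$ and $\pi^r_u$ needs more than guardedness: the companion construction inserts a fresh name $\dx$ into every control in the strongly connected subtree $\scs(u)$ and closes the new cycle with $\RuReset[]^{r}_{\dx}$, which is only legitimate when the proof is \emph{invariant}, i.e., the control at the companion is an initial segment of every control in $\scs(u)$ (Lemmas~\ref{lem.JSinvariant} and~\ref{lem.invariantPreservesSubword}). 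This invariance normal form, together with analytic-cut split completeness, is exactly the missing content between your (correct) skeleton and a proof.
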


We now sketch our proof for this result, 
which is proof-theoretic in nature.

\paragraph{Our proof systems}

The basis of our proof system is the annotated cyclic system of 
Jungteerapanitch~\cite{Jungteerapanich2010} and Stirling~\cite{Stirling2014} for $\muML$,
which itself builds on the non-wellfounded tableau system of Niwi\'{n}ski \& 
Walukiewicz~\cite{Niwinski1996}.
The extension with backward modalities causes two kinds of challenges.

The first complication is that, already in the setting without fixpoint 
operators, cut-free derivation systems for modal logics with backward
modalities have to go beyond simple sequent systems~\cite{ohni:gent57}.
One solution to this problem is to take resort to more structured sequents,
such as the nested sequents of Kashima~\cite{kash:cutf94}.
This road, however, also taken by Afshari et al. \cite{afsh:infi19}, does not combine well with cyclic proofs, since there is no bound on the
number of possible sequents.
Alternatively, one may simply allow the cut rule: applications of cut that 
are \emph{analytic} (in the wide sense that the cut formula must be taken 
from some bounded set of formulas) fit well in cyclic proofs.
Afshari et al.~\cite{afsh:proo23} took this approach, and so will we.

The second and main challenge is to formulate adequate success conditions on
infinite proof branches.
The problem is that the combinatorics of the formula traces are more complicated 
than in the one-way setting, since traces may move both up and down a proof 
tree. 
In order to deal with this issue, we follow Rooduijn \& Venema~\cite{rood:focu23}, 
who enrich the syntax of their proof calculus with so-called \emph{trace atoms}.
Roughly speaking, trace atoms hardwire the ideas underlying
Vardi's two-way automata explicitly into the syntax.
Rooduijn \& Venema restricted attention to the \emph{alternation-free fragment}
of $\muMLtwo$, in which entanglement of least- and greatest fixpoint operators
is basically avoided.
Our contribution here is to extend their approach to the technically
more challenging setting of the full two-way $\mu$-calculus.

\paragraph{Interpolation for the two-way $\mu$-calculus}
As mentioned, our approach is \emph{proof-theoretic}; more specifically, we 
employ a version of \emph{Maehara's method}~\cite{maeh:crai61}.
In this approach, in order to obtain an interpolant for an implication, one 
takes some finite proof for it and defines interpolants for each
node of the derivation tree, by means of a leaf-to-root induction.
It has been shown by Kowalski \& Ono~\cite{kowa:anal17} that the presence of 
an analytic cut rule does not preclude the application of Maehara's methods.

In recent years, the scope of the method has been extended to include cyclic
proofs.
Shamkanov~\cite{Shamkanov2014}, Afshari \& Leigh~\cite{Afshari2019} and Marti 
\& Venema~\cite{mart:focu21} used it to prove interpolation properties for,
respectively, G\"odel-L\"ob logic, $\muML$ and its alternation-free fragment.
Roughly, the challenge here is that in a cyclic proof some proof leaves are
not axiomatic and hence fail to have a trivial interpolant.
However, each such leaf is discharged at a companion node, closer to the root.
The idea is now to associate, as a kind of pre-interpolant, a fixpoint variable
with each discharged leaf, and to bind this variable at the companion with a
fixpoint operator.

% The idea underlying these interpolation proofs is to consider a derivation as
% a \emph{system of equations}: each node of the derivation induces an equation, 
% the shape of which is given by the proof rule that is applied at the node.
% The interpolant for the implication at the root of the derivation tree is then
% obtained by \emph{solving} these equations.
% ; in the case of a cyclic proof, these
% solutions can still be found by a leaf-to-root induction on the shape of the 
% proof tree, with the note that the cyclicity of the derivation makes that 
% generally, fixpoint operators are needed for these solutions.

The main technical novelty in our approach is to make this method work for our
proof system, which is not only cyclic, but also features a cut rule, and 
operates a nonstandard syntax including trace atoms.

\paragraph{Related work}
Other proof systems for $\muMLtwo$ and its alternation-free fragment were introduced in \cite{afsh:infi19}, \cite{afsh:proo23} and \cite{rood:focu23}.
There is also work by Benedikt and collaborators on \emph{guarded fixpoint 
logics}~\cite{bene:inte15,bene:defi19}, formalisms that \emph{extend} $\muMLtwo$ 
in expressive power.
Their approach is model-theoretic in nature.

% \paragraph{Overview}

\section{Preliminaries}\label{sec.prelim}

\subsection{Two-way modal $\mu$-calculus}

\paragraph{Syntax}
Let $\Prop$ be a set of propositions, $\Var$ a set of variables and $\Act$ a set
of actions. 
We assume an involution operation $\conv{\cdot}: \Act \to \Act$ such that for every $a \in \Act$ it holds that $a \neq \conv{a}$ and $a = \conv{\conv{a}}$.
The set $\muMLtwo$ of \emph{formulas} of the two-way modal $\mu$-calculus is generated 
by the grammar
\[
\phi \isdef
\bot \gdist \top \gdist
p \gdist \mybar{p} \gdist x
\gdist \phi\lor\phi \gdist \phi\land\phi \gdist
\ldiap\phi \gdist \lboxp\phi 
\gdist \mu x . \phi \gdist \nu x . \phi,
\]
where $p \in \Prop$, $x \in \Var$ and $a \in \Act$.

We define the \emph{vocabulary of $\phi$}, written $\Voc(\phi)$, to be the set 
of propositions and actions occurring in $\phi$ (with the proviso that we
include both $a$ and $\conv{a}$ in the vocabulary of any expression in which
$a$ or $\conv{a}$ occurs).
% We say that an occurrence of a variable $x$ in a formula $\phi$ is \emph{bound}
% if it is in the scope of a fixpoint operator and \emph{free} otherwise. 

We refer to formulas of the form $\mu x . \phi$ and $\nu x . \phi$ as $\mu$- and
\emph{$\nu$-formulas}, respectively; formulas of either kind are called 
\emph{fixpoint formulas}, and the symbols $\mu$ and $\nu$ themselves are 
\emph{fixpoint operators}. 
This terminology is in line with the intended semantics, where $\mu x. \phi$
and $\nu x. \phi$ describe, respectively, the least and greatest fixpoint of
$\phi$ with respect to $x$.
We use $\eta,\lambda \in \{ \mu, \nu \}$ to denote arbitrary fixpoint 
operators.

Formulas that do not contain fixpoint operators are called \emph{fixpoint-free formulas}.
We use standard terminology and notation for the binding of variables by the
fixpoint operators and for the substitution operation.
We define $\FV(\phi)$ to be the set of free variables occurring in a formula $\phi$.
We call a formula $\phi$ a \emph{sentence} if all variable occurrences in $\phi$
are bound. 
Unless otherwise noted we will assume that every formula is a sentence.
We make sure only to apply substitution in situations where no variable capture
will occur. 
An important use of the substitution operation concerns the \emph{unfolding}
$\chi[\xi/x]$ of a fixpoint formula $\xi = \eta x . \chi$.

Given two formulas $\phi,\psi \in \muMLtwo$ we write $\phi \tracestep \psi$ if 
either $\phi$ is boolean or modal and $\psi$ is a direct subformula of
$\phi$, or else $\phi$ is a fixpoint formula and $\psi$ is its unfolding.
The \emph{closure} $\Clos(\Phi) \subseteq \muMLtwo$ of $\Phi \subseteq \muMLtwo$
is the least superset of $\Phi$ that is closed under this relation.
It is well known that $\Clos(\Phi)$ is finite iff $\Phi$ is finite.
A \emph{trace} is a sequence $(\phi_{n})_{n<\kappa}$, with $\kappa \leq
\omega$, such that $\phi_{n} \tracestep \phi_{n+1}$, for all $n
+ 1 < \kappa$.
It is well known that any infinite trace $\tau = (\phi_{n})_{n<\omega}$ features a unique formula 
$\phi$ that occurs infinitely often on $\tau$ and is
a subformula of $\phi_{n}$ for cofinitely many $n$. This formula is always a 
fixpoint formula, and where it is of the form $\eta x.\psi$ we call $\tau$ an
\emph{$\eta$-trace}.

We define a \emph{dependence order} on the fixpoint formulas in $\Phi$,
written $\Fix(\Phi)$, by setting $\eta x. \phi 
\geq_{\Phi} \lambda y. \psi$ (where bigger in $\geq_{\Phi}$ means being of higher 
priority) if $\Clos(\eta x.\phi) = \Clos(\lambda y.\psi)$ and $\eta x.\phi$ is a subformula of $\lambda y.\psi$.
Let $\Nat^+ \isdef \Nat\setminus\{0\}$. One may define a \emph{priority function} $\Omega: \Fix(\Phi) \rightarrow
\Nat^+$, which respects this order (namely, $\Omega(\eta x. \phi)\geq\Omega(\lambda y.
\psi)$ if $\eta x. \phi \geq_{\Phi} \lambda y \psi$) and satisfies that
$\Omega(\eta x. \phi)$ is even iff $\eta = \mu$. 
We extend $\Omega$ to a function $\Omega: \Phi \to \Nat^+$ by setting $\Omega(\phi) = 1$ if $\phi$ is not a fixpoint formula.
%Let $\max_\Omega(\Phi) = \max\{\Omega(\nu x. \phi)\|\nu x. \phi \in \Fix(\Phi)\}$.

Given a formula $\phi \in \muMLtwo$ we inductively extend the map $p \mapsto 
\mybar{p}$ to a full-blown negation operation:
\[\begin{array}{lllclllclll}
\mybar{\bot} & \isdef & \top 
  && \mybar{\phi \land \psi} & \isdef & \mybar{\phi} \lor \mybar{\psi} 
  && \mybar{\mu x. \phi} & \isdef & \nu x. \mybar{\phi}
\\ \mybar{\top} & \isdef & \bot 
  && \mybar{\phi \lor \psi} & \isdef & \mybar{\phi} \land \mybar{\psi}
  && \mybar{\nu x. \phi} & \isdef & \mu x. \mybar{\phi}
\\ \mybar{x} & \isdef & x 
  && \mybar{\lboxp \phi} & \isdef & \ldiap \mybar{\phi}
\\ \mybar{\mybar{p}} & \isdef & p  
  && \mybar{\ldiap \phi} & \isdef & \lboxp \mybar{\phi}  
\end{array}\]

% Given a formula $\phi \in \muMLtwo$ we define its \emph{negation} $\mybar{\phi}$ using the 
% following induction:
% \[\begin{array}{lllclllclll}
% 	\mybar{\bot} & \isdef & \top && 
% 	\mybar{\top} & \isdef & \bot &&
% 	\mybar{x} & \isdef & x \\ \smallskip
% 	\mybar{p} & \isdef & \mybar{p} &&  
% 	\mybar{\mybar{p}} & \isdef & p &&
% 	\mybar{\phi \land \psi} & \isdef & \mybar{\phi} \lor \mybar{\psi} \\ \smallskip
% 	\mybar{\phi \lor \psi} & \isdef & \mybar{\phi} \land \mybar{\psi} &&
% 	\mybar{\lboxp \phi} & \isdef & \ldiap \mybar{\phi}  &&
% 	\mybar{\ldiap \phi} & \isdef & \lboxp \mybar{\phi}  \\ \smallskip
% 	\mybar{\mu x. \phi} & \isdef & \nu x. \mybar{\phi} &&
% 	\mybar{\nu x. \phi} & \isdef & \mu x. \mybar{\phi} 
% \end{array}\]
Note that $\mybar{\mybar{\phi}} = \phi$ for every formula $\phi$. For a set of formulas $\Phi$ we define $\mybar{\Phi}= \{\mybar{\phi} \| \phi \in \Phi\}$. We define $\ClosN(\Phi) \isdef \Clos(\Phi) \cup \Clos(\mybar{\Phi})$. 

\smallskip

For this work we fix a finite set of $\muMLtwo$ formulas $\Phi$, that is closed under $\tracestep$ and negation, in other words such that $\ClosN(\Phi) = \Phi$. For a proof of  a sequent $\Gamma$ the set $\Phi$ can be defined as $\Clos(\Gamma) \cup \Clos(\mybar{\Gamma})$. We also fix the priority function $\Omega: \Phi \to \Nat$ and let $n$ be the maximal even number in\footnote{For any function $f: A \to B$ we define $\ran(f) \isdef \{f(x) \| x \in A\}$.} $\ran(\Omega)$.

\smallskip

\paragraph{Semantics}

We assume some basic familiarity with \emph{parity games}. Definitions and notations that we use can be found in Appendix \ref{app.parityGames}.

A \emph{Kripke model} is a tuple $\bbS = (S,R,V)$, where $S$ is a non-empty set, $R= \{R_a \subseteq S\times S \| a \in \Act\}$ is a family of binary relation on $S$, such that $(s,s') \in R_a$ iff $(s',s) \in R_{\conv{a}}$, and $V$ is a function $\Prop \rightarrow \powset(S)$. A pair of a Kripke model $\bbS$ and a state $s \in \bbS$ is called a \emph{pointed model}.

The meaning of $\muMLtwo$-formulas on Kripke models can be given, as for the one-way $\mu$-calculus, by algebraic semantics or game semantics. We define the latter and refer for a definition of the former and a proof of its equivalence to \cite{Demri2016}.

Let $\xi \in \muMLtwo$ and let $\bbS = (S,R,V)$ be a Kripke model. The \emph{evaluation game} $\EG(\xi,\bbS)$ is the following infinite two-player 
game.
Its positions are pairs of the form $(\phi,s) \in \Clos(\xi)\times S$, and its
ownership function and admissible moves are given in Table~\ref{tb:EG}.
Infinite matches of the form
$(\phi_{n},s_{n})_{n<\omega}$ are won by $\eloi$ if the induced trace $(\phi_{n})_{n<\omega}$ is a $\nu$-trace, and won by $\abel$ if it is a $\mu$-trace.
It is well-known that this game can be presented as a parity game, and as such
it is positionally determined.
\begin{table}[htb]
\begin{center}
		\begin{tabular}{|ll|c|l|}
			\hline
			\multicolumn{2}{|l|}{Position} & Owner & Admissible moves\\
			\hline
			$(p,s)$        & with  $s \in V(p)$         
			& $\abel$ & $\nada$ 
			\\   $(p,s)$        & with  $s \notin V(p)$      
			& $\eloi$ & $\nada$ 
			\\   $(\mybar{p},s)$   & with $s \in V(p)$    
			& $\eloi$ & $\nada$ 
			\\   $(\mybar{p},s)$  & with $s \notin V(p)$ 
			& $\abel$ & $\nada$ 
			\\ \multicolumn{2}{|l|}{$(\phi \lor \psi,s)$}   & $\eloi$   
			& $\{ (\phi,s), (\psi,s) \}$ 
			\\ \multicolumn{2}{|l|}{$(\phi \land \psi,s)$} & $\abel$ 
			& $\{ (\phi,s), (\psi,s) \}$ 
			\\ \multicolumn{2}{|l|}{$(\ldiap \phi,s)$}        & $\eloi$ 
			& $\{ (\phi,t) \mid (s,t) \in R_a \}$ 
			\\ \multicolumn{2}{|l|}{$(\lboxp \phi,s) $}       & $\abel$ 
			& $\{ (\phi,t) \mid (s,t) \in R_a\}$ 
			\\ \multicolumn{2}{|l|}{$(\eta x . \phi,s)$}    & - 
			& $\{ (\phi[\eta x. \phi/x],s) \}$ 
			\\ \hline
		\end{tabular}
\end{center}
\caption{The evaluation game $\EG(\xi,\bbS)$}
\label{tb:EG}
\end{table}
%\vspace{-20pt}

\begin{definition}
Let $\bbS, s$ be a pointed model, let $f$ be a strategy for $\eloi$ in 
$\calE(\Land \Phi, \bbS)$ and let $\phi$ be a formula.
We write $\bbS, s \sat_f \phi$ if $f$ is winning for $\eloi$ at $(\phi,s)$.
	%We define $\bbS,s \sat \phi$ if $\bbS, s \sat_f \phi$ for some strategy $f$ for $\eloi$. 
	%We say that a formula $\phi$ is \emph{satisfiable}, if there exists $\bbS,s$ such that $\bbS,s \sat \phi$ and \emph{unsatisfiable} otherwise. 
\end{definition}

\begin{definition}
For any pair of formulas $\phi, \psi$ and $k = 1,\ldots,n$ we let $\phi \trace[k] \psi$ be the \emph{trace atom of priority $k$} and $\phi \ntrace[k] \psi$ be the \emph{negated trace atom of priority $k$}.
\end{definition}

\begin{definition}
	Given a strategy $f$ for $\eloi$ in $\calE(\Land \Phi, \bbS)$, we say that $\phi 
	\trace[k] \psi$ is \emph{satisfied} in $\bbS$ at $s$ with respect 
	to $f$, written $\bbS, s \sat_f \phi \trace[k] \psi$ if there is an $f$-guided match
	\[
	(\phi,s) = (\phi_0,s_0) \cdots (\phi_n,s_n) = (\psi,s), \qquad n > 0
	\] 
	such that $k = \max\{\Omega(\phi_i) \| i =0,\ldots,n-1\}$.
\end{definition}

A \emph{pure sequent} is a finite set of formulas, a \emph{trace sequent} is a
finite set of trace atoms, and a \emph{sequent} is a pure sequent together with
a trace sequent. 
We will use letters $A,B,\ldots$ as variables ranging over formulas and trace
atoms; $\Gamma, \Delta, \Sigma,\ldots$ for sequents and state explicitly if a 
sequent is a pure or a trace sequent. 
We define $\lboxp \Gamma = \{\lboxp \phi \| \phi \in \Gamma\}$.
Given a sequent $\Gamma$ we define $\Clos(\Gamma) \isdef \Clos(\{\phi \in \muMLtwo \| \phi \in \Gamma\})$ and analogously for $\ClosN$.

Note that our perspective on sequents and derivations is partly tableau-theoretic: We read sequents \emph{conjunctively} and aim to derive sequents that are \emph{unsatisfiable}. 
\begin{definition}
We say that $\bbS,s \sat_f \Gamma$ if $\bbS,s 
\sat_f A$ for all $A \in \Gamma$.
We define $\bbS,s \sat \Gamma$ if $\bbS, s \sat_f \Gamma$ for some strategy $f$ for $\eloi$ in $\calE(\Land \Phi, \bbS)$. 
A sequent $\Gamma$ is \emph{satisfiable} if $\bbS,s \sat \Gamma$ for some pointed model $\bbS,s$ and \emph{unsatisfiable} otherwise.
\end{definition}

We write  $\phi \modImpl \psi$ for the \emph{local consequence relation} meaning that $\bbS,s \sat \phi$ implies $\bbS,s \sat \psi$ for every pointed model $\bbS,s$.

\subsection{Stream automata with $\epsilon$-relations}

We define automata operating on streams (infinite words).
In addition to basic transitions we allow $\epsilon$-transitions: transitions without
an input letter. We call those automata \emph{$\epsilon$-automata}.

	Let $\Sigma$ be a finite set, called an \emph{alphabet}.
	An \emph{$\epsilon$-automaton} over $\Sigma$ is a quadruple 
	$\bbA = \langle A, \Delta, a_I,\mathrm{Acc}\rangle$, where $A$ is a 
	finite set; $\Delta = \Delta_b \cup \Delta_\epsilon$ is the transition function of $\bbA$ split up in a set of \emph{basic transitions} $\Delta_b: A \times \Sigma \rightarrow  \powset(A)$ and a set of \emph{$\epsilon$-transitions} $\Delta_\epsilon: A \rightarrow \powset(A)$; $a_I \in A$ its initial state; and $\mathrm{Acc} 
	\subseteq A^{\omega}$ its acceptance condition. 	
	An $\epsilon$-automaton is called \emph{deterministic} if $|\Delta_b(a,y)| = 1$ for all pairs 
	$(a,y) \in A \times \Sigma$ and $\Delta_\epsilon(a)$ is empty for all $a \in A$. 
	We assume that $\Delta_\epsilon(a_I) = \nada$.
	
	A \emph{run} of such an $\epsilon$-automaton $\bbA$ on a stream $w=z_0z_1z_2\cdots \in 
	\Sigma^{\omega}$ is a stream $(a_0,n_0)(a_1,n_1)(a_2,n_2)\cdots \in 
	(A\times \Nat)^{\omega}$ such that $(a_0,n_0) = (a_I,0)$, and for all 
	$i \in \omega$ either (i) $a_{i+1} \in \Delta_b(a_i,z_{n_i})$ and $n_{i+1} = n_i + 1$ or (ii) $a_{i+1}  \in \Delta_\epsilon(a_i)$ and $n_{i+1} = n_i$, and $\sup\{n_i \| i \in \omega\} = \omega$. The last condition guarantees that every run contains infinitely many basic transitions; in other words, we do not allow runs that from some point onwards only consist of $\epsilon$-transitions.
	
	The \emph{projection} of a run $(a_0,n_0)(a_1,n_1)(a_2,n_2)\cdots$ is the
	stream $a_0a_1a_2\cdots \in A^\omega$.
	A stream $w$ is \emph{accepted} by $\bbA$ if there is a run of $\bbA$ on $w$, 
	whose projection is in $\mathrm{Acc}$.
	We define $\calL(\bbA)$ to be the set of all accepting streams of $\bbA$.

	The acceptance condition can be given in different ways:
		A \emph{Büchi} condition is given as a subset $F \subseteq A$. 
		The corresponding acceptance condition is the set of runs, which contain 
		infinitely many states in $F$.
		A \emph{parity} condition is given as a map $\Omega: A \rightarrow \omega$. The corresponding acceptance condition is the set of runs $\alpha$ such that $\max\{\Omega(a) \| a \text{ occurs infinitely often in } \alpha \}$ is even.
		A \emph{Rabin} condition is given as a set $R = ((G_i,B_i))_{i \in I}$ of pairs of subsets of $A$.  The corresponding acceptance condition is the set of runs $\alpha$ for which there exists $i \in I$ such that $\alpha$ contains infinitely many states in $G_i$ and finitely many in $B_i$.
	$\epsilon$-Automata with these acceptance conditions are called \emph{Büchi}, \emph{parity} 
	and \emph{Rabin automata}, respectively.

\section{Trail-based proof system}

We introduce the non-wellfounded proof system \NWtwo, which generalizes tableau games for the one-way modal $\mu$-calculus 
introduced by Niwi\'{n}ski \& Walukiewicz \cite{Niwinski1996}. 
They construct a tree-shaped tableau $T$ for every unsatisfiable sequent
$\Gamma$, where every infinite path in $T$ carries a $\mu$-trail: an 
ancestry-path of formulas, where the most important formula occurring 
infinitely often is a $\mu$-formula. 
For the two-way case we will introduce a similar notion, but now trails may 
go up and down in $T$. As $T$ has the shape of a tree we can split up 
an infinite trail into segments that return to the same node and those only 
going up. We will call the former \emph{detour trails} and model them 
with trace atoms. Detour trails at a node $v$ talk about possible trails further up in the tree, hence we need a \RuCut rule in the proof system to model that behaviour. This approach stems from Vardi's two-way automata \cite{Vardi1998} and inspired Rooduijn \& Venema \cite{rood:focu23} to develop a proof system for the alternation-free fragment of the modal $\mu$-calculus.

\subsection{\NWtwo proofs}

Rules have the form $\begin{prooftree}
	\hypo{\Gamma_1}
	\hypo{\cdots}
	\hypo{\Gamma_n}
	\infer[left label= \Ru:]3{\Gamma}
\end{prooftree}$, where $\Gamma, \Gamma_1,\ldots,\Gamma_n$ are sequents. 
We call $\Gamma_1,\cdots,\Gamma_n$ the \emph{premisses} of \Ru and $\Gamma$ its \emph{conclusion}.

The rules of the \NWtwo proof system are given in Figure \ref{fig.NWtwo}.
Apart from the inclusion of trace atoms these rules coincide with the tableau
rules introduced in \cite{Niwinski1996} for the one-way $\mu$-calculus, except
that formulas are added in the premiss of \RuDia. 
Trace atoms and the extra rules \RuCut, \RuTCut and \RuTrans are added to deal
with converse modalities.

In the rules \RuAnd, \RuOr, \RuEta and \RuDia the single explicitly written
formula in the conclusion is called the \emph{principal formula}. 
The other rules do not have a principal formula.

\begin{definition}\label{def.ModalRule}
In order to define the modal rule \RuDia, let $\Psi = \ldiap \phi, \lboxp \Sigma,
\Gamma$ be the conclusion of the modal rule. 
We demand that $\Sigma$ is a pure sequent, and define $\ldiap[\conv{a}] \Gamma \isdef \{\ldiap[\conv{a}] \gamma \in \ClosN(\Psi)
\| \gamma \in \Gamma  \}$  and
	\begin{align*}
		\Gamma^{\ldiap \phi} 
		\isdef& ~\{\phi \ntrace[k]  \lboxp[\conv{a}] \chi \| \ldiap \phi \ntrace[k]  \chi \in \Gamma \text{ and } \lboxp[\conv{a}]\chi \in \ClosN(\Psi)\} \\
		\cup& ~\{\lboxp[\conv{a}] \chi \trace[k]  \phi \| \chi \trace[k]  \ldiap \phi \in \Gamma \text{ and } \lboxp[\conv{a}]\chi \in \ClosN(\Psi)\} \\
		\cup& ~\{\psi \ntrace[k]  \lboxp[\conv{a}] \chi \| \lboxp \psi \ntrace[k]  \chi \in \Gamma \text{ and } \lboxp[\conv{a}]\chi \in \ClosN(\Psi)\} \\
		\cup& ~\{\lboxp[\conv{a}] \chi \trace[k]  \psi \| \chi \trace[k]  \lboxp \psi \in \Gamma \text{ and } \lboxp[\conv{a}]\chi \in \ClosN(\Psi)\} 
	\end{align*}
	Note that $\ldiap[\conv{a}] \Gamma$ and $\Gamma^{\ldiap \phi}$ also depend on $\Psi$, yet for simpler notation we omit the extra subscript.
\end{definition}

\begin{figure*}[tbh]
	\begin{mdframed}[align=center]
		\begin{minipage}{\textwidth}
			\begin{minipage}{0.21\textwidth}
				\begin{prooftree}
					\hypo{\phantom{X}}
					\infer[left label =\AxLit:]1{\phi, \mybar{\phi}, \Gamma}
				\end{prooftree}
			\end{minipage}
			\begin{minipage}{0.18\textwidth}
				\begin{prooftree}
					\hypo{\phantom{X}}
					\infer[left label =\AxBot:]1{\bot, \Gamma}
				\end{prooftree}
			\end{minipage}
			\begin{minipage}{0.35\textwidth}
				\begin{prooftree}
					\hypo{\phantom{X}}
					\infer[left label =\AxTraceNeg:]1{\phi \trace[k] \psi, \phi \ntrace[k] \psi, \Gamma}
				\end{prooftree}
			\end{minipage}
			\begin{minipage}{0.18\textwidth}
				\begin{prooftree}
					\hypo{\phantom{X}}
					\infer[left label =\AxTraceLoop:]1{\phi \trace[2k] \phi, \Gamma}
				\end{prooftree}
			\end{minipage}
		\end{minipage}
		
		\ruleskip
		
		\begin{minipage}{\textwidth}
			\begin{minipage}{0.46\textwidth}
				\begin{prooftree}
					\hypo{\phi,\psi,\phi \land \psi \trace[1] \phi, \phi \land \psi \trace[1] \psi, \Gamma}
					\infer[left label =\RuAnd:]1{\phi \land \psi,\Gamma}
				\end{prooftree}
			\end{minipage}
			\begin{minipage}{0.50\textwidth}
				\begin{prooftree}
					\hypo{\phi, \phi \lor \psi\trace[1] \phi, \Gamma}
					\hypo{\psi,\phi \lor \psi \trace[1] \psi, \Gamma}
					\infer[left label =\RuOr:]2{\phi \lor \psi,\Gamma}
				\end{prooftree}
			\end{minipage}
		\end{minipage}
		
		\ruleskip
		
		\begin{minipage}{\textwidth}
			\begin{minipage}{0.43\textwidth}
				\begin{prooftree}
					\hypo{\phi[\eta x . \phi / x], \eta x. \phi \trace[\Omega(\eta x. \phi)] \phi[\eta x . \phi / x], \Gamma}
					\infer[left label =\RuEta:]1{\eta x . \phi, \Gamma}
				\end{prooftree}
			\end{minipage}
			\begin{minipage}{0.40\textwidth}
				\begin{prooftree}
					\hypo{\phi \trace[k] \psi, \psi \trace[l] \chi, \phi \trace[\max\{k,l\}] \chi, \Gamma}
					\infer[left label =\RuTrans:]1{\phi \trace[k] \psi, \psi \trace[l] \chi, \Gamma}
				\end{prooftree}
			\end{minipage}
			\begin{minipage}{0.1\textwidth}
				\begin{prooftree}
					\hypo{\Gamma}
					\infer[left label = \RuWeak:]1{A,\Gamma}
				\end{prooftree}
			\end{minipage}
		\end{minipage}
		
		\ruleskip
		
		\begin{minipage}{\textwidth}		
			\begin{minipage}{0.25\textwidth}
				\begin{prooftree}
					\hypo{\phi,\Sigma, \ldiap[\conv{a}]\Gamma, \Gamma^{\ldiap\phi}}
					\infer[left label =\RuDia:]1{\ldiap \phi, \lboxp \Sigma, \Gamma}
				\end{prooftree}
			\end{minipage}
			\begin{minipage}{0.34\textwidth}
				\begin{prooftree}
					\hypo{\phi, \Gamma}
					\hypo{\mybar{\phi}, \Gamma}
					\infer[left label =\RuCut:]2[~$\phi \in \ClosN(\Gamma)$]{\Gamma}
				\end{prooftree}
			\end{minipage}
			\begin{minipage}{0.37\textwidth}
				\begin{prooftree}
					\hypo{\phi \trace[k] \psi, \Gamma}
					\hypo{\phi \ntrace[k] \psi, \Gamma}
					\infer[left label =\RuTCut:]2[~$\phi \in \ClosN(\Gamma)$]{\Gamma}
				\end{prooftree}
			\end{minipage}
		\end{minipage}
	\end{mdframed}
	\caption{Proof rules of the tableau system \NWtwo}
	\label{fig.NWtwo}
\end{figure*}

\begin{definition}[Derivation]\label{def.NWtwoDerivation}
An \emph{\NWtwo derivation} $\pi = (T,P,\sfS,\sfR, \sff)$ is a proof tree 
defined from the rules in Figure \ref{fig.NWtwo} such that $(T,P)$ is a, 
possibly infinite, tree with nodes $T$ and parent relation $P$;
$\sfS$ is a function that maps each node $u \in T$ to a sequent 
$\sfS_u$; $\sfR$ is a function that maps each node $u \in T$ to the name of
a rule in Figure \ref{fig.NWtwo}; and $\sff: T \to \sfS_u \cup \{\nil\}$ is a
function that maps each node $u \in T$ to its \emph{principal formula} or to 
$\nil$ if the rule does not have a principal formula.		
Finally, we require the maps $\sfS, \sfR$ and $\sff$ to be in accordance with
the formulation of the rules.
\end{definition}

In our notation trees grow \emph{upwards}.
We call an \NWtwo derivation $\pi$ \emph{regular}, if it has finitely many distinct subderivations.

\begin{remark}
An occurrence of a rule is usually called \emph{analytic}, if all formulas $\phi$ in the premiss of the rule are subformulas of the conclusion $\Gamma$. In the context of fixpoint logics, this notion has to be extended, such that we demand that all formulas $\phi$ are in the \emph{closure} of $\Gamma$.
Because the rules \RuCut, \RuTCut and \RuDia are restricted, all rules in \NWtwo are analytic.
\end{remark}

Different than usual, we define the trail relation to consist of triples, where we include the weight of the trail. The weight keeps track of the priority of unfolded fixpoints along the trail. 

\begin{definition}[Trails]\label{def.NWtwoTrails}
	Let $\Gamma,\Gamma'$ be sequents and $\xi \in \Gamma \cup \{\nil\}$, such that $\Gamma$ describes the conclusion, and $\Gamma'$ a premiss of a rule $\Ru$ in Figure \ref{fig.NWtwo} and $\xi$ is either the principal formula of $\Ru$ or else $\Ru$ does not have a principal formula and $\xi = \nil$.
	
We define the \emph{upward trail relation} $\sfT_{\Gamma,\xi,\Gamma'} \subseteq
\Phi \times \Phi \times \Nat$ as follows by a case distinction on the principal 
formula $\xi$:
	\begin{itemize}
		\item  if $\xi = \ldiap \phi$, then $\sfT_{\Gamma,\xi,\Gamma'} \isdef \{(\ldiap \phi, \phi, 1)\} \cup \{(\lboxp \psi, \psi, 1) \| \lboxp \psi \in \Gamma\}$,
		\item  if $\xi = \eta x. \phi$, then $\sfT_{\Gamma,\xi,\Gamma'} \isdef \{(\eta x. \phi, \phi\subst{\eta x. \phi}{x}, \Omega(\eta x. \phi))\} \cup \{(\psi, \psi, 1) \| \psi \in \Gamma \cap \Gamma'\}$,
		\item if $\xi = \nil$, then $\sfT_{\Gamma,\xi,\Gamma'} \isdef \{(\psi, \psi, 1) \| \psi \in \Gamma \cap \Gamma'\}$,
		\item Similarly for the other cases.
	\end{itemize} 
In all of these cases we call $(\phi,\psi,k)$ an \emph{upward trail}.
	
Let $u,v$ be nodes in an \NWtwo derivation, such that $v$ is a child of $u$ or
$u = v$. 
We define the \emph{trail relation} $\sfT_{u,v} \subseteq \sfS_u \times \sfS_v 
\times \Nat$ as follows. If $u = v$ we define $\sfT_{u,u} \isdef \{(\phi, \psi, k) \| \phi \trace[k] \psi \in \sfS(u)\}$ and call $(\phi,\psi,k)$ a \emph{detour trail}. 
Otherwise $v$ is the child of $u$ and we define $\sfT_{u,v} \isdef \sfT_{\sfS(u),
\sff(u),\sfS(v)}$.
	
Let $\alpha$ be a branch in an \NWtwo proof $\pi$. 
A \emph{trail} on $\alpha$ is a sequence of upward trails with inserted detour trails. Due to the presence of cuts, trails do not necessarily start at the root, but could also start from a cut formula. Formally a \emph{trail} $\tau$ on $\alpha$ is a word in 
	\[ \sfT_{v_N,v_{N+1}}  (\sfT_{v_{N+1},v_{N+1}})^* \sfT_{v_{N+1},v_{N+2}} (\sfT_{v_{N+2},v_{N+2}})^*\cdots\] 
for some $N\geq 0$ and such that for any subword $(\phi, \psi,k)(\chi, \zeta,l)$ 
of $\tau$ it holds $\psi = \chi$. 
An infinite trail $\tau$ is called a \emph{$\mu$-trail}, if $\max\{k \| k\text{ appears infinitely often on } \tau\}$ is even and a \emph{$\nu$-trail} otherwise.
\end{definition}

\begin{definition}[Proof]\label{def.NWtwoProof}
	An \emph{\NWtwo proof} $\pi$ is an \NWtwo derivation such that on every infinite branch of $\pi$ there is a $\mu$-trail.
\end{definition}
We say that \NWtwo proves a sequent $\Gamma$, written $\NWtwo \proves \Gamma$ if there is an \NWtwo proof $\pi$, where the root is labelled by $\Gamma$.

\subsection{Proof search game}

A \emph{rule instance} is a triple $(\Delta,\Ru, \langle\Delta_1,...,\Delta_n\rangle)$ such that $\begin{prooftree}
	\hypo{\Delta_1}
	\hypo{\cdots}
	\hypo{\Delta_m}
	\infer[left label = \Ru]3[]{\Delta}
\end{prooftree}$
is a valid rule application in \NWtwo. We let $\conc$ be the function mapping rule instances to their conclusions.
A rule instance is \emph{cumulative} if all premisses are supersets of the conclusion and \emph{productive} if all the premisses are distinct from its conclusion.

We define the proof search game $\calG(\Phi)$, here we call the two players Prover and Builder. Its positions are given by $\Seq_{\Phi}\cup \Inst_{\Phi}$, where $\Seq_{\Phi}$ is the set of sequences and $\Inst_{\Phi}$ the set of rule instances containing only formulas in $\Phi$. The ownership function and admissible moves are given in Table \ref{tab.proofSearchGame}.
\begin{table}[htb]
	\begin{center}
	\begin{tabular}{|c|c|c|}
		\hline
		Position & Owner & Admissible moves \\
		\hline
		$\Gamma$ & Prover & $\{i \in \Inst_{\Phi} \| \conc(i) = \Gamma\}$ \\
		$(\Gamma,\sfR, \langle\Delta_1,\cdots,\Delta_m\rangle)$ & Builder & $\{\Delta_i \| i = 1,\cdots,m\}$ \\
		\hline
	\end{tabular}
\end{center}
\caption{The proof search game $\calG(\Phi)$}
\label{tab.proofSearchGame}
\end{table}
\vspace{-12pt}

%Finite matches are lost by the player who gets stuck. 
An infinite match is won by Prover iff on the resulting branch there is a $\mu$-trail. 

An \NWtwo proof of $\Gamma$ can be seen as a winning strategy of Prover in
$\calG(\Phi)@\Gamma$. 
Note that $\calG(\Phi)$ is an $\omega$-regular game and that $\omega$-regular games have finite-memory strategies. Therefore we may assume that winning strategies for Prover are regular and consequently that \NWtwo proofs are regular.

\subsection{Soundness and Completeness}\label{sec.sub.SoundnessNWtwo}
For proving soundness we need to show that, if \NWtwo proves $\Gamma$, then $\Gamma$ is unsatisfiable. 
By contradiction we assume that $\Gamma$ is satisfiable 
and show that \NWtwo does not prove $\Gamma$. 
To do so, we assume a pointed model $\bbS,s$ and a strategy $f$ for $\eloi$ in 
$\calE(\Land \Phi, \bbS)$ such that $\bbS,s \sat_f \Gamma$. 
Using $f$ we can construct a winning strategy $\overline{f}$ for Builder in
$\calG(\Phi)@\Gamma$. The proof is rather straight-forward and can be found in  Appendix \ref{app.soundness}.
\begin{theorem}[Soundness]\label{thm.SoundnessNWtwo}
	If $\NWtwo \proves \Gamma$, then $\Gamma$ is unsatisfiable.
\end{theorem}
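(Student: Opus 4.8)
The plan is to lean on the game-theoretic reading of proofs recorded above: an $\NWtwo$ proof of $\Gamma$ is exactly a winning strategy for Prover in the proof-search game $\calG(\Phi)@\Gamma$. Since $\calG(\Phi)$ is $\omega$-regular and therefore determined, it suffices to show that whenever $\Gamma$ is satisfiable, Builder has a winning strategy; by determinacy this excludes a Prover win, hence a proof. So suppose $\bbS, s \sat_f \Gamma$, and, using positional determinacy of the parity game $\calE(\Land\Phi, \bbS)$, I would fix $f$ to be a single positional strategy for $\eloi$ that is winning on her entire winning region (this still witnesses $\bbS, s \sat_f \Gamma$). I then build a Builder strategy $\overline{f}$ around the invariant that, while the match sits at a sequent $\Delta$, Builder remembers a state $t$ with $\bbS, t \sat_f \Delta$.

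Preservation of this invariant is checked rule by rule. For $\RuOr$, $\RuAnd$, $\RuEta$ and $\RuWeak$ it is routine: the move prescribed by $f$ at the relevant evaluation-game position selects a premiss that stays satisfiable at $t$, and each trace atom added by the rule (of weight $1$, or $\Omega(\eta x.\phi)$ for the fixpoint rule) is witnessed by the very one-step $f$-guided match that justifies the rule. For $\RuTrans$ one concatenates the two witnessing matches for $\phi \trace[k]\psi$ and $\psi\trace[l]\chi$ to witness $\phi\trace[\max\{k,l\}]\chi$. For $\RuCut$ the choice of a \emph{uniform} winning strategy pays off: by determinacy and the duality between $\phi$ and $\mybar{\phi}$, either $(\phi, t)$ or $(\mybar{\phi},t)$ lies in Eloise's winning region, so exactly one premiss is satisfiable and Builder takes it; $\RuTCut$ is handled by the tautological dichotomy between $\phi\trace[k]\psi$ and its negation. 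Finally, each axiom has a conclusion that is unsatisfiable with respect to \emph{any} winning $\eloi$-strategy: this is immediate for $\AxBot$, $\AxLit$ and $\AxTraceNeg$, while $\AxTraceLoop$ is sound precisely because a satisfied loop $\phi\trace[2k]\phi$ could be iterated into an infinite $f$-guided match of even maximal priority, i.e. a $\mu$-trace, contradicting that $f$ wins. Hence Prover can never legally play an axiom at a satisfiable sequent, so every $\overline{f}$-match is infinite.

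It then remains to see that Builder wins every such infinite match, i.e. that the branch carries no $\mu$-trail. For this I would establish a dictionary between trails on the branch and $f$-guided matches in $\calE(\Land\Phi,\bbS)$: an upward trail is a single match move, whereas a detour trail $\phi\trace[k]\psi$ corresponds, through the maintained satisfiability of its trace atom, to an $f$-guided match segment from $(\phi,t)$ to $(\psi,t)$ of maximal priority exactly $k$. Splicing these witnesses along the branch converts any infinite trail into an infinite $f$-guided match whose greatest infinitely-occurring priority equals the greatest weight occurring infinitely often on the trail. A $\mu$-trail would thus yield an $f$-guided $\mu$-trace, won by \Abelard, contradicting that $f$ is winning for \Eloise; so no $\mu$-trail exists and Builder wins.

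The step I expect to be the crux is the modal rule $\RuDia$, the one place where the witnessing state genuinely moves, from $t$ to a successor $t'$ with $(t,t')\in R_a$ chosen by $f$ at $(\ldiap\phi, t)$. Here one must verify not only that $\phi$ and $\Sigma$ survive at $t'$, but that the converse formulas $\ldiap[\conv{a}]\Gamma$ and, above all, the transformed trace atoms in $\Gamma^{\ldiap\phi}$ are correctly witnessed at $t'$, using the two-way condition $(t,t')\in R_a \iff (t',t)\in R_{\conv{a}}$. Making the detour-trail witnesses transform soundly across this modal step — so that the dictionary of the previous paragraph continues to hold through box and diamond transitions — is the delicate point, and is exactly what the trace-atom apparatus (modelled on Vardi's two-way automata) is engineered to secure.
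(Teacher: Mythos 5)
Your proposal follows the paper's own proof essentially step for step: contraposition via the proof-search game, a Builder strategy maintaining a state $t$ with $\bbS,t \sat_f \Delta$ for the current sequent $\Delta$ under a single fixed positional strategy $f$ for $\eloi$ (the paper's map $s_f$ together with its local-soundness lemma), and the conversion of a hypothetical $\mu$-trail on an infinite branch into an infinite $f$-guided evaluation match---upward trails as single game moves, detour trails expanded through the satisfied trace atoms, spliced using positionality of $f$---whose maximal infinitely occurring priority is even, contradicting that $f$ wins. The one step you flag but do not carry out (the \RuDia/trace-atom bookkeeping in $\Gamma^{\ldiap\phi}$) is exactly what the paper also relegates to its unproven ``straightforward'' Lemma~\ref{lem.soundLocal}, and your explicit choice of $f$ as uniformly winning on \Eloise's entire winning region---needed to pick the correct premiss at \RuCut and \RuTCut with the \emph{same} strategy---is, if anything, slightly more careful than the paper's presentation.
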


%\subsection{Completeness}\label{sec.sub.CompletenessNWtwo}
To prove completeness for pure sequents $\Gamma$, we follow the same proof strategy as in \cite{rood:focu23}. We only state the definitions and main ideas and refer to Appendix \ref{app.completeness} for full proofs.

We show that every unsatisfiable pure sequent is provable in \NWtwo. 
By contraposition, given a winning strategy $f$ for Builder in
$\calG(\Phi)@\Gamma$, we construct a model $\bbS^f$ and a positional strategy 
$\underline{f}$ for $\eloi$ in $\calE:=\calE(\Land \Phi, \bbS^f)$ such that 
$\bbS^f \sat_{\underline{f}} \Gamma$. 

Let $\calT$ be the subtree of the game-tree of $\calG(\Phi)@\Gamma$, where Builder plays the strategy $f$ and Prover picks rule instances according to the following priorities: 1) axioms \AxLit, \AxBot, \AxTraceNeg and \AxTraceLoop; 2) cumulative and productive instances of \RuOr, \RuAnd, \RuMu, \RuNu and 
\RuTrans; 3) cumulative and productive instances of \RuCut and \RuTCut; 4) modal rules \RuDia.

We say that a trace atom $\phi \trace[k] \psi$ is \emph{relevant} if 
(i) $\psi \in \Clos(\phi)$ and (ii) $\phi,\psi$ contain fixpoints. 
We restrict instances of \RuTCut and \RuDia to only introduce 
relevant trace atoms. 
For the rule \RuDia this amounts to changing the rule to a
variant, where only relevant trace atoms occur in the premiss. This rule can easily shown to be admissible using \RuWeak.
Because of these assumptions we may assume that all trace atoms in a constructed proof are relevant.

The model $\bbS^f= (S^f, \{R_a^f\}_{a \in \Act}, V^f)$ is defined as follows: The set $S^f$ of states consists of maximal upward paths $\rho$ in $\calT$ not containing a modal rule. We write $\sfS(\rho) = \bigcup\{\Delta \| \Delta \text{ occurs in } \rho\}$. We write $\rho_1 \to[a] \rho_2$ if $\rho_2$ is directly above $\rho_1$ only separated by an application of $\RuDia$. The relations $R_a^f$ are defined as follows:
\[\rho_1 R_a^f \rho_2  \quad :\Leftrightarrow \quad \rho_1 \to[a] \rho_2 \text{ or } \rho_2 \to[\conv{a}] \rho_1.\]
The valuation $V^f(p)$ is defined as $V^f(p) := \{\rho \| p \in \sfS(\rho)\}$.

The strategy $\underline{f}$ for $\eloi$ in  $\calE$ is defined as follows:
\begin{itemize}
\item At $(\phi_0 \lor \phi_1,\rho)$ pick a disjunct $\phi_i$ such that 
$\phi_i \in \sfS(\rho)$.
\item At $(\ldiap \phi, \rho)$ choose $(\phi,\tau)$ for some $\tau$ such that 
$\rho \overset{a}{\rightarrow} \tau$, where the principal formula in the rule instance \RuDia between $\rho$ and $\tau$ is 
$\ldiap \phi$.
\end{itemize}
%By saturation, whenever $\phi_0 \lor \phi_1 \in \Gamma(\rho)$ there is a disjunct $\phi_i \in \Gamma(\rho)$. Likewise, if $\ldiap \phi \in \Gamma(\rho)$, then there is a state $\tau$ directly above $\rho$, only separated by a \RuDia rule with principal formula $\ldiap \phi$. 
%In the following lemma we show that $\underline{f}$ is well-defined, at least for the matches we are interested in.

Let $\rho_0$ be a state of $\bbS^f$ containing the root $\Gamma$ of $\calT$ and let $\phi_0 \in \Gamma$. We can show that the strategy $\underline{f}$ is well-defined and winning for $\eloi$ in $\calE@(\phi_0,\rho_0)$. From this completeness follows.

\begin{theorem}[Completeness]\label{thm.CompleteenessNWtwo}
	If a pure sequent $\Gamma$ is unsatisfiable, then $\Gamma$ is provable in \NWtwo.
\end{theorem}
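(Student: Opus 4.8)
The plan is to prove the contrapositive: if $\Gamma$ is not provable in \NWtwo, then $\Gamma$ is satisfiable. Since $\calG(\Phi)$ is an $\omega$-regular game and hence determined, the failure of Prover to have a winning strategy from $\Gamma$ yields a finite-memory winning strategy $f$ for Builder in $\calG(\Phi)@\Gamma$. I would then feed $f$ into the construction already set up above, producing the model $\bbS^f$ and the positional strategy $\underline{f}$ for $\eloi$ in $\calE = \calE(\Land\Phi,\bbS^f)$, and show that $\bbS^f,\rho_0 \sat_{\underline{f}} \Gamma$ for a state $\rho_0$ containing the root. As satisfiability of $\Gamma$ follows once $\underline{f}$ is known to be winning for $\eloi$ at $(\phi_0,\rho_0)$ for every $\phi_0\in\Gamma$, everything reduces to showing that $\underline{f}$ is \emph{well-defined} and \emph{winning}.

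Well-definedness is the easier half and rests on the priority with which Prover is forced to play in $\calT$: since cumulative and productive boolean and fixpoint rules are applied before any modal rule, each state $\rho$ is saturated, so that $\sfS(\rho)$ is closed under the local steps it can host. Thus at a position $(\phi_0\lor\phi_1,\rho)$ the branch of \RuOr chosen by Builder places one disjunct into $\sfS(\rho)$, giving $\eloi$ a legal move, and at a position $(\ldiap\phi,\rho)$ the maximal non-modal path $\rho$ terminates in a modal rule that supplies a successor state $\tau$ with $\rho\to[a]\tau$. Matches ending in a literal position are won by $\eloi$: because Builder plays so as to avoid the axioms \AxLit and \AxBot, no state contains $\bot$ or both $p$ and $\overline{p}$, so whenever the match reaches $(p,\rho)$ with $p\in\sfS(\rho)$ we have $\rho\in V^f(p)$, and whenever it reaches $(\overline{p},\rho)$ we have $p\notin\sfS(\rho)$, i.e.\ $\rho\notin V^f(p)$.

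The heart of the proof, and the main obstacle, is that $\underline{f}$ is winning, where the difficulty is the two-way bookkeeping. Every $\underline{f}$-guided infinite match $(\phi_n,\rho_n)_{n<\omega}$ in $\calE$ must be shown to induce a $\nu$-trace. The idea is to read the match as a \emph{trail} in $\calT$: boolean and fixpoint moves travel up inside a single state along upward trails, while a modal move either travels up to a forward successor (following $\rho\to[a]\tau$) or --- and this is the novelty of the two-way setting --- travels back down to a previously visited node, following the clause $\rho_2\to[\conv{a}]\rho_1$ in the definition of $R^f_a$. A downward excursion returning to its starting node is precisely a \emph{detour trail}, and I would argue that along the constructed proof such a detour is witnessed by a trace atom $\phi\trace[k]\psi$ present in $\sfS(\rho)$, whose priority $k$ records the maximal $\Omega$-value unfolded during the excursion. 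The transformation $\Gamma\mapsto\Gamma^{\ldiap\phi}$ built into \RuDia, together with \RuTrans, \AxTraceLoop and the availability of \RuTCut, is exactly what guarantees that these trace atoms are generated with the correct priorities and propagated coherently across modal steps.

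Granting this correspondence, the weight of the resulting trail coincides with the maximal $\Omega$-value unfolded infinitely often along the trace. An infinite match either escapes upward through infinitely many modal rules, in which case it follows a genuine infinite upward trail along a branch of $\calT$, or it eventually oscillates within a bounded region by alternating forward and converse modal steps, in which case it decomposes into infinitely many detour trails pinned at some node. In the first case Builder's winning condition directly forbids a $\mu$-trail; in the second, a dominant even-priority detour loop would be refuted by \AxTraceLoop and is therefore absent on Builder's branches, so the recorded trace-atom priorities are odd. Either way the induced trace is a $\nu$-trace, which is the winning condition for $\eloi$. I expect the genuinely delicate step to be the verification that the trace-atom machinery faithfully summarises arbitrary down-and-up excursions --- that the priority $k$ of the witnessing atom equals the true maximal priority of the detour and that nested or repeated detours compose correctly under \RuTrans --- and that this priority accounting matches the parity condition of $\calE$; this is where the bulk of the technical work lies.
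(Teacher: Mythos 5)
Your proposal is correct and follows essentially the same route as the paper: contraposition via determinacy of the proof-search game, the model $\bbS^f$ and strategy $\underline{f}$ built from Builder's strategy, saturation for well-definedness, a truth lemma showing detours are witnessed by trace atoms of the right priority (the paper's Lemma~\ref{lem.truthLemmaTraces}), and the same two-case analysis of infinite matches --- bounded oscillation refuted via \AxTraceLoop (using eventual periodicity from positional determinacy), upward escape refuted by assembling detour and upward trails into a $\mu$-trail contradicting Builder's win. The step you flag as delicate is indeed where the paper invests its technical effort, via a double induction on the number of distinct states visited and the match length.
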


\subsection{Tracking automaton for \NWtwo}
The main difficulty of working with \NWtwo is the handling of infinite trails, as they behave non-deterministically: trails split and merge. We define a non-deterministic $\epsilon$-parity automaton $\bbA_{\Phi}$, that checks the success condition on infinite paths and will be used later to show completeness of the \JStwo proof system. 
First we have to bring \NWtwo proofs in a certain normal form.

We call an \NWtwo proof $\pi$ \emph{saturated}, if (i) the rule \RuTrans is always applied when applicable and (ii) all applications of \RuEta rules are cumulative. Note that every \NWtwo proof $\pi$ can easily be transformed into a saturated proof $\pi'$ of the same sequent.

We call an infinite trail $\tau$ \emph{slim}, if (i) there are no two consecutive detour trails on $\tau$ and (ii) there is no upward trail of the form $(\eta x. \phi, \phi\subst{\eta x. \phi}{x},k)$ on $\tau$.

\begin{lemma}\label{lem.slimTrails}
	Let $\pi$ be a saturated $\NWtwo$ proof of $\Gamma$. On every infinite branch of $\pi$ there is a slim $\mu$-trail.
\end{lemma}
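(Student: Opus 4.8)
The plan is to start from an arbitrary $\mu$-trail on the given infinite branch $\alpha$ --- one exists by the very definition of an \NWtwo proof (Definition~\ref{def.NWtwoProof}) --- and to rewrite it into a slim one by two local surgery operations, all the while controlling the quantity $W_\infty(\tau) \isdef \max\{k \mid k \text{ occurs infinitely often on } \tau\}$, whose parity decides whether $\tau$ is a $\mu$- or a $\nu$-trail. So fix a $\mu$-trail $\tau$ on $\alpha$ and put $m \isdef W_\infty(\tau)$; being even and positive, $m \geq 2$. I will first remove every upward unfolding trail (achieving slimness condition (ii)) and then every pair of consecutive detour trails (condition (i)), arranging that neither step undoes the other and that $W_\infty$ is preserved throughout.

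For condition (ii), consider an upward trail of the form $(\eta x.\phi, \phi\subst{\eta x.\phi}{x}, \Omega(\eta x.\phi))$; such trails arise only from an application of \RuEta, say at a node $u$ with premiss $v$. Since $\pi$ is saturated this application is cumulative, so $\eta x.\phi$ persists into $\sfS(v)$ and the weight-$1$ stay-trail $(\eta x.\phi, \eta x.\phi, 1)$ belongs to $\sfT_{u,v}$; moreover the premiss of \RuEta contains the trace atom $\eta x.\phi \trace[\Omega(\eta x.\phi)] \phi\subst{\eta x.\phi}{x}$, so $(\eta x.\phi, \phi\subst{\eta x.\phi}{x}, \Omega(\eta x.\phi))$ is available as a detour trail in $\sfT_{v,v}$. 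I replace the offending upward unfolding trail by this stay-trail followed by that detour trail; source and target are unchanged, so the result is again a trail on $\alpha$. Carrying this out at every unfolding trail yields a trail $\tau_1$ meeting condition (ii). The only effect on weights is that each $\Omega(\eta x.\phi)$ is moved from an upward to a detour position (same value, same multiplicity) and some weight-$1$ trails are inserted; as $m \geq 2$ this leaves $W_\infty(\tau_1) = m$.

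For condition (i), I collapse each maximal block of consecutive detour trails into a single detour. Here the saturation clause that \RuTrans is applied whenever applicable does the work: if the detours $(\phi_0, \phi_1, k_1)$ and $(\phi_1, \phi_2, k_2)$ occur consecutively, then the composite trace atom $\phi_0 \trace[\max\{k_1,k_2\}] \phi_2$ is present (if not already in the current sequent, then --- using that \RuTrans is cumulative and so supplies weight-$1$ stay-trails --- at the node where \RuTrans has added it), so the two detours may be replaced by the single detour $(\phi_0, \phi_2, \max\{k_1,k_2\})$. A block is finite (it is a Kleene-star segment) and gains at most one extra detour from the preceding, possibly rewritten, upward trail, so iterating collapses it to one detour carrying the block's maximal weight; this produces $\tau_2$, which satisfies (i). As this operation neither creates nor deletes upward trails it preserves condition (ii) and keeps $\tau_2$ infinite.

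The crux, and the point I expect to demand the most care, is that this collapsing preserves $W_\infty$. On the one hand every weight in $\tau_2$ either already occurred in $\tau_1$ (an untouched upward weight, or a block-maximum, which is attained inside the block) or equals $1$ (an inserted stay), so $W_\infty(\tau_2) \leq m$. On the other hand, weights exceeding $m$ occur only finitely often in $\tau_1$ and hence meet only finitely many blocks. Since $m$ occurs infinitely often in $\tau_1$, either infinitely many of those occurrences lie on upward trails, which survive untouched, or infinitely many lie inside blocks, all but finitely many of which then collapse to exactly $m$; in either case $m$ occurs infinitely often in $\tau_2$, giving $W_\infty(\tau_2) = m$. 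As $m$ is even, $\tau_2$ is a slim $\mu$-trail on $\alpha$, as desired.
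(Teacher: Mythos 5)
Your proof is correct and takes essentially the same route as the paper's: replace each unfolding upward trail $(\eta x.\phi, \phi\subst{\eta x.\phi}{x}, k)$ by the weight-$1$ stay-trail followed by the detour trail supplied by the trace atom in the premiss of the cumulative \RuEta application, then collapse consecutive detours into a single detour of the maximal weight using \RuTrans-saturation, with \RuTrans-labelled nodes handled via identity trails. Your explicit bookkeeping that the maximal infinitely occurring weight is preserved under both surgeries is a point the paper's proof leaves implicit, but it is not a different argument.
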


We will define a nondeterministic parity automaton $\bbA_{\Phi}$, called \emph{tracking automaton}, that checks if an infinite branch $\alpha$ of a saturated \NWtwo proof $\pi$ is successful. 
Conceptually the automaton $\bbA_{\Phi}$ non-deterministically follows the trail relation on $\alpha$. The states of the automaton will be formulas and trace atoms, where we add extra states for every fixpoint formula, in order to track the unfolding of fixpoints. Additionally we have an extra initial state, which is always reachable, as trails may start at any node.

Upward trails on $\alpha$ correspond to basic transitions in $\bbA_{\Phi}$ and detour trails are modelled by $\epsilon$-transitions going through a trace atom. 
In order to simplify the automaton we do not allow consecutive $\epsilon$-transitions and $\epsilon$-transitions starting from the auxiliary formula $\phi\subst{\eta x. \phi}{x}$ of the rule $\RuEta$. Hence $\bbA_{\Phi}$ will not follow all infinite trails, but only those of a simple form, in particular all slim trails. 
%Due to Lemma \ref{lem.slimTrails} an infinite branch $\alpha$ of a saturated \NWtwo proof is successful iff it carries a slim $\mu$-trail. Thus $\bbA_{\Phi}$ will accept $\alpha$ iff there is a slim $\mu$-trail on $\alpha$.

The alphabet $\Sigma$ consists of all triples $(\Gamma,\xi,\Gamma')$, where 
$\Gamma \subseteq \Phi$  describes the conclusion, and $\Gamma' \subseteq 
\Phi$ a premiss of a rule $\Ru$ in Figure \ref{fig.NWtwo} and $\xi$ is either the principal formula of $\Ru$ or else $\Ru$ does not have a principal formula and $\xi = \nil$.

We define the following nondeterministic $\epsilon$-parity automaton $\bbA_{\Phi} = 
(A,\Delta, a_I,\Omega_A)$:

\vspace{0.2cm}
{\centering
	$ \displaystyle
		\begin{aligned}
			A \isdef &\{a_I\} \cup \Phi \cup \{\eta x. \psi^* \| \eta x. \psi \in \Phi\} \\
	&\cup  \{\phi \trace[k] \psi \| \phi \trace[k] \psi \text{ a trace atom} \}. 
		\end{aligned}
	$
\par}
\vspace{0.2cm}

\noindent
For each $\gamma \in A $ and $(\Gamma,\xi,\Gamma') \in \Sigma$ we define $\Delta_b$ as follows.
	\begin{enumerate}
		\item if $\gamma = a_I$, then $\Delta_b(\gamma,(\Gamma,\xi,\Gamma')) \isdef \Gamma'\cup\{a_I\}$,
		\item if $\gamma = \xi = \eta x.\psi$ then $\Delta_b(\gamma,(\Gamma,\xi,\Gamma')) \isdef \{\eta x. \psi^*\}$,
		
\item else if  $\gamma = \phi \in \Phi$ then 
\vspace{-0.1cm}
\[\begin{array}{l}
\Delta_b(\phi,(\Gamma,\xi,\Gamma')) \isdef \{\phi' \| (\phi,\phi',1) \in \sfT_{\Gamma,\xi,\Gamma'}\} \cup 
\\ \{\phi'\trace[k] \psi' \| (\phi,\phi',1) \in \sfT_{\Gamma,\xi,\Gamma'}
\text{ \& }\psi',\phi'\trace[k] \psi' \in \Gamma'\}.
\end{array}\]

\item else $\Delta_b(\gamma,(\Gamma,\xi,\Gamma')) \isdef \nada$.
	\end{enumerate}
	For each $\gamma \in A$ we define $\Delta_{\epsilon}$ as follows. 
	\begin{enumerate}
		\item if $\gamma = \eta x. \psi^*$, then $\Delta_{\epsilon}(\eta x. \psi^*) \isdef= \{\eta x. \psi\}$,
		\item if $\gamma = \phi \trace[k] \psi$ then $\Delta_{\epsilon}(\gamma) \isdef \{\psi\}$ and
		\item else $\Delta_{\epsilon}(\gamma) \isdef \nada$.
	\end{enumerate}
	For states of the form $\eta x. \psi^*$ let $\Omega_A(\eta x. \psi^*) \isdef \Omega(\eta x. \psi)$. For states of the form $\phi \trace[k] \psi$ let $\Omega_A(\phi \trace[k] \psi) \isdef k$. For all other states $\gamma$ let $\Omega_A(\gamma) \isdef 1$.

Let $\alpha = (v_i)_{i\in\omega}$ be an infinite branch in an \NWtwo proof $\pi$. 
We define the stream $w(\alpha) \in \Sigma^{\omega}$ induced by $\alpha$ to be 
the infinite word $(\sfS(v_0),\sff(v_0),\sfS(v_1))(\sfS(v_1),\sff(v_1),\sfS(v_2))\cdots$. 
%Note that we repeated the sequent of the root, in order to also detect trails that start from the root. 

The following proposition states the adequacy of the tracking automaton.

\begin{proposition}\label{prop.trackAut}
	Let $\alpha$ be an infinite branch in a saturated \NWtwo proof. Then $\alpha$ is successful iff $w(\alpha) \in \calL(\bbA_{\Phi})$. 
\end{proposition}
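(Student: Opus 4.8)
The plan is to establish the two directions of the biconditional separately, in each case exploiting the tight correspondence between the trail relation on $\alpha$ and the transition structure of $\bbA_{\Phi}$. Recall that a branch $\alpha$ is \emph{successful} precisely when it carries a $\mu$-trail, and by Lemma~\ref{lem.slimTrails} (applied to the saturated proof) we may always take this trail to be \emph{slim}. The key design feature of $\bbA_{\Phi}$ is that its basic transitions mirror upward trails, its $\epsilon$-transitions through a trace-atom state $\phi \trace[k] \psi$ model detour trails, and the priority $\Omega_A$ is set up so that the maximal priority seen infinitely often along a run equals the maximal weight $k$ appearing infinitely often along the corresponding trail. Thus the whole argument is a careful bookkeeping of how runs of $\bbA_{\Phi}$ on $w(\alpha)$ correspond to (slim) infinite trails on $\alpha$, with the parity acceptance condition on the automaton side matching the $\mu$-trail condition on the proof side.

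For the direction from left to right, I would start from a successful branch $\alpha$, pick a slim $\mu$-trail $\tau$ on it via Lemma~\ref{lem.slimTrails}, and build an accepting run of $\bbA_{\Phi}$ on $w(\alpha)$. Since trails need not begin at the root, I would use the auxiliary initial state $a_I$, whose basic transition always re-offers $a_I$ together with all formulas of the next premiss, to ``wait'' until the node $v_N$ where $\tau$ starts and then inject the first formula of $\tau$. From that point I would translate each upward trail of the form $(\eta x.\phi, \phi\subst{\eta x.\phi}{x}, \Omega(\eta x.\phi))$ into the two-step passage through the starred state $\eta x.\phi^*$ (a basic transition followed by an $\epsilon$-transition), every other upward trail $(\phi,\phi',1)$ into a single basic transition, and every detour trail $(\phi,\psi,k)$ into the $\epsilon$-transition through the trace-atom state $\phi \trace[k] \psi$. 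Slimness is exactly what guarantees this translation produces a legal run: condition~(i) forbids two consecutive detour trails, matching the automaton's ban on consecutive $\epsilon$-transitions, and condition~(ii) removes the problematic direct unfolding steps so that no forbidden $\epsilon$-transition out of $\phi\subst{\eta x.\phi}{x}$ is needed. One must then check that the run contains infinitely many basic transitions (so it is a genuine run) and that, because $\Omega_A(\eta x.\phi^*) = \Omega(\eta x.\phi)$ and $\Omega_A(\phi\trace[k]\psi) = k$ while all other states carry priority $1$, the maximal priority occurring infinitely often is even, i.e.\ the run is accepting.

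For the converse, I would take an accepting run of $\bbA_{\Phi}$ on $w(\alpha)$ and read off from it an infinite trail on $\alpha$ by inverting the above dictionary: basic transitions recover upward trails (with the starred states reconstituting fixpoint unfoldings) and $\epsilon$-transitions through trace-atom states recover detour trails. Here I must verify that a detour trail $(\phi,\psi,k)$ so produced is genuinely licensed, i.e.\ that the trace atom $\phi \trace[k] \psi$ really occurs in the sequent $\sfS(u)$ at the relevant node; this is exactly where the definition $\sfT_{u,u} = \{(\phi,\psi,k) \mid \phi\trace[k]\psi \in \sfS(u)\}$ is used, and it is supported by the automaton's basic-transition clause, which only produces a trace-atom state $\phi'\trace[k]\psi'$ when that atom is actually present in $\Gamma'$. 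The parity condition then transfers back to give a $\mu$-trail, so $\alpha$ is successful.

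The main obstacle, I expect, is the matching of the two acceptance conditions through the auxiliary and starred states, together with the ``waiting'' behaviour of $a_I$. One has to argue that the bureaucratic detours introduced by the starred states and the initial-state loop do not disturb the $\limsup$ of priorities: the starred state is visited only transiently and carries the same priority as its fixpoint, and the $a_I$-loop is traversed only finitely often on any accepting run that eventually commits to a real trail, so neither affects which priorities occur \emph{infinitely} often. Making this rigorous — and in particular confirming that every accepting run must eventually leave $a_I$ and that slimness on the proof side lines up exactly with the automaton's structural restrictions on $\epsilon$-transitions — is the delicate part; the rest is a routine but lengthy unwinding of the definitions of $\sfT_{\Gamma,\xi,\Gamma'}$, $\Delta_b$ and $\Delta_\epsilon$.
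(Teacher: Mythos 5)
Your top-level strategy coincides with the paper's: the published proof is two lines, citing Lemma~\ref{lem.slimTrails} for the left-to-right direction and asserting the converse, and your text is an elaboration of the same dictionary between slim trails and runs of $\bbA_{\Phi}$. However, your description of that dictionary misstates the automaton at the one place where the starred states matter. You translate an upward trail $(\eta x.\phi,\ \phi\subst{\eta x.\phi}{x},\ \Omega(\eta x.\phi))$ as ``a basic transition followed by an $\epsilon$-transition'' through $\eta x.\phi^*$; but $\Delta_{\epsilon}(\eta x.\phi^*) = \{\eta x.\phi\}$, i.e.\ the $\epsilon$-move returns to the fixpoint formula itself, not to its unfolding, so the two-step passage you describe does not exist in $\bbA_{\Phi}$. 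In the actual design the unfolding is reached differently: saturation makes \RuEta cumulative, so the premiss contains both $\eta x.\phi$ and the trace atom $\eta x.\phi \trace[\Omega(\eta x.\phi)] \phi\subst{\eta x.\phi}{x}$, and a slim trail follows the kept copy (weight $1$) and then takes a detour modelled by entering the trace-atom state via a \emph{basic} transition (clause 3 of $\Delta_b$) and exiting by an $\epsilon$-move to the unfolding. You half-notice this when you say that slimness condition (ii) removes the direct unfolding steps --- but that observation makes your starred-state translation clause vacuous, and the two statements are in tension with each other.

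Relatedly, you overlook that clause 2 of $\Delta_b$ is \emph{forcing}: whenever the tracked state equals the principal formula $\eta x.\psi$, the \emph{only} basic transition is to $\eta x.\psi^*$, which carries priority $\Omega(\eta x.\psi)$. Hence the run you build must pass through a starred state even on the weight-$1$ kept-copy steps $(\eta x.\psi, \eta x.\psi, 1)$ that slim trails do use, recording a priority that can differ from (and exceed) the weight of the corresponding trail step. Your claim that the starred states ``carry the same priority as their fixpoint'' and therefore ``do not disturb the limsup'' is exactly the point requiring proof, and your justification does not address this priority/weight mismatch; the same issue affects your converse direction, where the trail read off a run replaces a starred passage by a weight-$1$ step, so the parity of the run's limsup does not automatically transfer to the trail. (Note also that for the statement as given the converse needs no dictionary at all: $\alpha$ is a branch of a \emph{proof}, and by Definition~\ref{def.NWtwoProof} every infinite branch of an \NWtwo proof carries a $\mu$-trail --- presumably why the paper dispatches that direction in one sentence.) In fairness, the paper's own proof is silent on this bookkeeping too, but a blind reconstruction must at least get the transition structure right, and yours inverts the role of the starred states.
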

\begin{proof}
	If $\alpha$ is successful, then $\alpha$ carries a slim $\mu$-trail due to Lemma \ref{lem.slimTrails} and therefore $w(\alpha) \in \calL(\bbA_{\Phi})$. Conversely,  $w(\alpha) \in \calL(\bbA_{\Phi})$ implies that there is a $\mu$-trail on $\alpha$.
\end{proof}

\section{Determinization of $\epsilon$-parity automata}\label{sec.automata}

In the last section the tracking automaton $\bbA_{\Phi}$ was defined, which 
checks if an infinite branch in an \NWtwo proof is successful. We want to find an equivalent deterministic automaton $\bbB$ and use it to obtain a simpler proof system. For this to work we have to define the determinization method in a simple way, in particular it has to rely on a ``powerset-construction'', meaning that states of $\bbB$ consist of subset of states in $\bbA_{\Phi}$ plus some extra information. Our method is a generalization of the Safra construction for Büchi automata \cite{Safra1988}.

Let $\Sigma$ be an alphabet and fix an $\epsilon$-parity automaton $\bbA = \langle A, \Delta, a_I, \Omega \rangle$. Let $n$ be the maximal even priority of $\bbA$ and $m = |A|$ be the size of $\bbA$.

\begin{definition}
Let $a \in A$. The \emph{$k$-parity $\epsilon$-Closure} of $a$, written $\eClos_k(a)$, consists of all states $b \in A$ for which there is a $\Delta_{\epsilon}$-path 
$a = a_0a_1\cdots a_n=b$ in $\bbA$ with $\max\{\Omega(a_i)\| i = 1,\ldots,n\} = k$.
\end{definition}

For each even number $k = 0,2,\ldots,n$ we fix a set of \emph{$k$-names}
$X_k$, such that $|X_k| = 4m$ and $X_k \cap X_l = \nada$ if $k \neq l$.
We define the set of \emph{names} $X = X_0 \uplus X_2 \uplus \cdots \uplus X_n$ and use the symbols $\dx,\dy,\dz,...$ for names. We call a non-repeating sequence of $k$-names $\tau_k$ a $k$-\emph{stack} and let $T_k$ be the set of all $k$-stacks. The empty sequence will be denoted by $\epsilon$. 
We define the set of all stacks $T$ to be $T_n \cdots T_2 \cdot T_0$, for clarity 
$T \isdef \{\tau_n\cdots\tau_2\cdot\tau_0 \mid \tau_n \in T_n,\ldots, \tau_0 \in T_0 \}$.
In case $\tau_{i} = \epsilon$ for all $i<k$ we may write
$\tau_n\cdots\tau_k$ rather than $\tau_n\cdots\tau_2\cdot\tau_0$.
For a stack $\tau$ we define $\tau \downto l$ to be the stack obtained from 
$\tau$ by removing all $k$-names, where $k < l$.

Each non-repeating sequence of names $\theta$ defines a \emph{linear order}
$<_{\theta}$ on names by setting $\dx <_\theta \dy$ if $\dx$ occurs before $\dy$ in $\theta$. 
This order extends to an order on stacks as follows: $\sigma <_\theta \tau$ if 
either
\begin{itemize}
\item $\sigma \downto k$ is a proper extension of $\tau \downto k$ for some 
$k \leq n$, or
\item $\sigma$ is lexicographically $<_\theta$-smaller than $\tau$, meaning that 
$\sigma$ and $\tau$ can be written as $\sigma = \rho \cdot \dx \cdot \sigma'$ and
$\tau = \rho \cdot \dy \cdot \tau'$ with $\dx <_\theta \dy$.
\end{itemize}
It can be proven that $<_\theta$ is a linear order on any set $T_0 \subseteq T$ such that $\theta$ contains all names in $T_0$.

Let $\theta, \tau$ be non-repeating sequences of names. We say that $\theta$ is an \emph{initial segment} of $\tau$ if $\tau = \theta \cdot \sigma$ for some sequence of names $\sigma$.
We say that $\theta$ is a \emph{subword} of $\tau$, notation: $\theta \sqsubseteq \tau$ if every name that occurs in $\theta$ also occurs in $\tau$ and if for all names $\dx, \dy$ in $\theta$ the relation $\dx <_{\theta} \dy$ implies $\dx <_\tau \dy$.

We now define the deterministic Rabin automaton $\bbA^S \isdef \langle A^S, \delta_A,
a'_I, R_A \rangle$.
Its carrier set $A^S$ consists of tuples $(A_0,f,\theta,c)$, where
\begin{itemize}
\item  $A_0$ is a subset of $A$,
\item $f: A_0 \rightarrow T$ maps each state $a$ of $A_0$ to a stack $\tau \in T$, such that $\tau = \tau\downto k$, where $k = \Omega(a)$,
\item $\theta$ is a non-repeating sequence of all names occurring in $\ran (f)$.
\item $c$ is a map $c: \ran(f) \to \{ \green, \white\}$.
\end{itemize}
A subset $A_0 \subseteq A$ will be called a \emph{macrostate} and we call $S_0
\in A^S$ a \emph{Safra-state}, in other words a Safra-state is a macrostate with additional
information. We will present a Safra-state $S_0 \in A^S$ as a set of elements 
$a^{\tau}$, where $a \in A$, $\tau \in T$ and $f(a) = \tau$, and deal with 
$\theta$ and $c$ implicitly. 
The sequence $\theta$ will be called the \emph{control}. We say a name is \emph{active} if it appears in $\theta$. An active $k$-name is \emph{visible} if it is the last $k$-name in some stack and \emph{invisible} otherwise. 
The function $c$ is called the \emph{colouring map} and we say that a name $\dx$ is coloured $\green$/$\white$, if $c(\dx) = \green$/$c(\dx) = \white$.

The initial Safra-state is $a'_I \isdef  \{a_I^{\epsilon}\}$. 
To define the transition function $\delta_A$ let $S$ be in $A^S$ and $z \in 
\Sigma$.  
We define $\delta_A(S,z) \isdef S'$, where $S'$ is constructed in the following steps.
Note that intermediate positions in this construction are not necessarily 
Safra-states; in particular there may be multiple stacks associated with some 
states.
\begin{enumerate}%[label=(\alph*)] 
\item \underline{Basic move:} For every $a^{\tau} \in S$ and $(a,z,b) \in \Delta_b$, add $b^{\tau\downto k}$ to $S'$, where $\Omega(b) = k$.
\item \underline{Cover:} For every $a^{\tau} \in S'$, where $\Omega(a) = k$ is even, change $a^{\tau}$ to $a^{\tau \cdot \dx}$, where $\dx$ is a fresh $k$-name that is not active in $S\cup S'$. If two different states are labelled by the same stack, we add the same name $\dx$. Add $\dx$ as the last element in $\theta$. 
\item \underline{$\epsilon$-Move:} For every $a^{\tau} \in S'$, odd $k$ and $b \in \eClos_k(a)$ add $b^{\tau \downto k}$ to $S'$. For every $a^{\tau} \in S'$, even $k$ and $b \in \eClos_k(a)$ add $b^{\tau \downto k \cdot \dx}$ to $S'$, where $\dx$ is a fresh $k$-name that is not active in $S \cup S'$. Add $\dx$ as the last element in $\theta$.
\item \underline{Thin:} 
If $a^{\sigma}$ and $a^{\tau}$ are in $S'$ and $\sigma <_\theta \tau$, remove $a^\tau$.
\item \underline{Reset:} 
Colour any invisible name $\dx$ green and change $a^{\sigma \cdot \dx \cdot \tau}$ to
$a^{\sigma \cdot \dx}$ for every  $a^{\sigma \cdot \dx \cdot \tau} \in S'$.
\end{enumerate}
Any name removed in this process is also removed from $\theta$.

The automaton $\bbA^S$ accepts a stream if in its run some name $\dx$ is active 
cofinitely often and coloured green infinitely often. 

\begin{remark}
In any Safra-state occurring in a run of $\bbA^S$ there are at most $m$ active $k$-names for every $k=0,2,...,n$, otherwise there is an invisible $k$-name and the Safra-state changes in step 5 of $\delta_A$. In step 2 of $\delta_A$ at most $m$ fresh $k$-names are introduced, resulting in at most $2m$ distinct elements $a^\tau$ in $S'$ after step 2. In step 3 up to $2m$ names are added. Thus in total at most $4m$ $k$-names are needed for each $k = 0,2,...n$.
\end{remark}

\begin{remark}
It may seem that the transition map $\delta_{A}$ is formulated in a 
non-deterministic way, but this is only superficially so: all choices can be 
made \emph{canonical}, based on an arbitrary but fixed order on names and 
states.
\end{remark}

\begin{theorem}\label{thm.determinisationCorrectness}
	The automaton $\bbA^S$ is equivalent to $\bbA$.
\end{theorem}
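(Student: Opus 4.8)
The plan is to establish the two inclusions $\calL(\bbA^S) \subseteq \calL(\bbA)$ and $\calL(\bbA) \subseteq \calL(\bbA^S)$ separately, following the structure of the standard correctness proof for Safra's construction but adapting the bookkeeping to the parity acceptance condition and the $\epsilon$-transitions. The central device in both directions is the notion of a \emph{thread}: fixing a stream $w = z_0 z_1 \cdots$ and the unique run $S_0 S_1 S_2 \cdots$ of the deterministic automaton $\bbA^S$ on $w$, a thread is a sequence of elements $a_i^{\tau_i} \in S_i$ in which $a_{i+1}^{\tau_{i+1}}$ is created from $a_i^{\tau_i}$ during the computation of $\delta_A(S_i, z_i)$ (through the Basic, Cover, $\epsilon$- and Reset steps), where I treat the Thin step as redirecting a thread to the surviving copy carrying the $<_\theta$-smaller stack. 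First I would record the elementary bookkeeping facts that make threads well behaved: projecting a thread to its $\bbA$-components yields a run of $\bbA$ on $w$; conversely every run of $\bbA$ is the projection of at least one thread; and the acceptance condition of $\bbA^S$ is genuinely a Rabin condition, with one pair $(G_{\dx}, B_{\dx})$ per name $\dx$, where $G_{\dx}$ collects the Safra-states in which $\dx$ is coloured $\green$ and $B_{\dx}$ those in which $\dx$ is inactive.

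For the inclusion $\calL(\bbA) \subseteq \calL(\bbA^S)$, I would take an accepting run $\rho$ of $\bbA$ on $w$ and let $k$ be the (even) maximum priority occurring infinitely often along $\rho$. Fix a time $T$ after which $\rho$ never again visits a state of priority exceeding $k$. Following a thread $t$ whose projection is $\rho$, the portion of its stack consisting of names of priority $>k$ no longer changes after $T$, since only a visit to a priority $>k$ can truncate or extend it; write $\sigma$ for this stable prefix. After $T$ the thread $t$ therefore lives entirely in the part of the stacks extending $\sigma$, and each of the infinitely many visits of $\rho$ to priority $k$ triggers, in the Cover or $\epsilon$-move step, the creation of a $k$-name appended to $t$'s stack. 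Because there are at most $4m$ many $k$-names available, a pigeonhole argument over this bounded reservoir, combined with the Safra-tree discipline enforced by Thin and Reset, yields a single $k$-name $\dx$ extending $\sigma$ that remains active for all times past some $T' \geq T$ and is reset --- hence coloured $\green$ --- infinitely often: each time all the threads sitting below $\dx$ have revisited priority $k$, the name $\dx$ becomes invisible and is coloured green in the Reset step. This exhibits a Rabin pair witnessing $w \in \calL(\bbA^S)$.

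For the converse inclusion $\calL(\bbA^S) \subseteq \calL(\bbA)$, I would assume that some $k$-name $\dx$ (necessarily $k$ even) is active at all times past some $T$ and coloured $\green$ at an infinite set of times $t_1 < t_2 < \cdots$. For $i \geq T$ let $D_i$ be the set of $\bbA$-states $a$ such that some $a^{\tau} \in S_i$ has $\dx$ occurring in $\tau$; each $D_i$ is non-empty (as $\dx$ is active) and of size at most $m$. I would build a finitely branching tree whose nodes on level $j$ are the states in $D_{t_j}$ and whose edges connect $a \in D_{t_j}$ to $b \in D_{t_{j+1}}$ whenever the thread of $b$ descends from that of $a$ and keeps $\dx$ in its stack throughout the interval $(t_j, t_{j+1}]$. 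K\"onig's Lemma then produces an infinite branch, and concatenating the underlying $\bbA$-transitions yields a run $\rho$ of $\bbA$ on $w$. Two observations finish the argument: since $\dx$ is never removed from the stacks along the branch, no transition after $T$ visits a priority exceeding $k$ (such a visit would truncate the stack at a level above $k$ and delete the $k$-name $\dx$), so every priority seen infinitely often is $\leq k$; and since $\dx$ is coloured green infinitely often --- which by the Reset step means it becomes invisible, i.e.\ every thread below it has freshly revisited priority $k$ --- the branch can be chosen to pass through a priority-$k$ state between any two consecutive greens. Hence the maximum priority occurring infinitely often on $\rho$ is exactly $k$, so $\rho$ is accepting and $w \in \calL(\bbA)$.

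The main obstacle I anticipate is the faithful bookkeeping of threads across the Thin and Reset steps, which is where the parity stratification interacts most delicately with the tree discipline. Concretely, the hard points are: (i) verifying that merging in Thin never destroys the $\bbA$-run carried by a thread, only relabels its stack, so that runs and threads really do correspond; (ii) proving the pigeonhole and stabilisation claim in the first direction, namely that the bounded supply of $k$-names forces a single name to survive cofinitely while being reset infinitely often, rather than an endless succession of short-lived names; and (iii) establishing, in the second direction, that invisibility of $\dx$ at a green time genuinely certifies a fresh priority-$k$ visit on every surviving sub-thread, so that the K\"onig branch can be steered through priority $k$ infinitely often. The $\epsilon$-transitions add a layer of care --- one must check that the $k$-parity $\epsilon$-closures $\eClos_k$ correctly propagate priorities into the stacks and that the prohibition on consecutive $\epsilon$-steps keeps the relevant $\epsilon$-closures finite --- but these are local checks that do not affect the overall shape of the argument.
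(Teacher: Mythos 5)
Your overall architecture coincides with the paper's, and one half of your proposal is essentially their proof: for $\calL(\bbA^S) \subseteq \calL(\bbA)$, your sets $D_i$ are the paper's $B_j$, your tree-plus-K\"onig construction is their Claims 3--4, and your observation (iii) --- that invisibility of $\dx$ at a green time forces a fresh $k$-name after $\dx$ on \emph{every} stack containing it, hence a priority-$k$ visit on the chosen segment, while survival of $\dx$ bounds all priorities by $k$ --- is exactly how they show each segment has maximal priority $k$. That direction goes through as you describe.

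The genuine gap is in the direction $\calL(\bbA) \subseteq \calL(\bbA^S)$, at the stabilisation step. Your claim that after time $T$ the priority-$>k$ portion of the thread's stack is frozen ``since only a visit to a priority $>k$ can truncate or extend it'' is false: the Thin step can redirect your thread to the $<_\theta$-smaller of two stacks attached to the same state, and the Reset step deletes everything after an invisible name; both can rewrite names of priority $>k$ in the thread's stack even though the projected run never visits such a priority after $T$. Moreover, the pigeonhole over the reservoir of $4m$ $k$-names does not by itself produce a persistent name: names are deleted and recycled, and a bounded supply is perfectly compatible with an endless succession of short-lived $k$-names, none of which is green infinitely often --- which you flag as difficulty (ii), but the mechanism you propose to discharge it would fail. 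The paper closes exactly this hole with its Claim 2: writing $\tau_j$ for the stack of the tracked thread at stage $j$ and $h = \liminf_j |\tau_j|$, the prefix consisting of the first $h$ names is eventually constant, by induction on the position $l \leq h$: a fixed position can only ever change to a $<_\theta$-smaller name (Thin keeps the smaller stack), while freshly created names are appended at the end of the control and are therefore $<_\theta$-larger, so each position changes only finitely often. Taking $\dx$ to be the $h$-th name of this stable prefix, $\dx$ is active cofinitely; and since priority-$k$ visits append $k$-names after $\dx$ infinitely often while $|\tau_j|$ returns to $h$ infinitely often, the shrinking back can only happen through Reset firing at $\dx$, colouring it $\green$ --- which also \emph{proves}, rather than assumes, that $\dx$ is a $k$-name. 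Replacing your frozen-prefix-plus-pigeonhole step by this $\liminf$ argument repairs the direction; the remainder of your outline then matches the paper.
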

\begin{proof}
	The proof follows the same lines as the determinization for Büchi automata \cite{Safra1988} and can be found in Appendix \ref{app.determinisation}.
\end{proof}

\section{Annotated proof system}\label{sec.JS}
%We define the cyclic proof system \JStwo.
Following Jungteerapanich~\cite{Jungteerapanich2010} and 
Stirling~\cite{Stirling2014}, our approach is to use the determinization method 
of Section \ref{sec.automata} and build the automaton $\bbA_{\Phi}^S$ \emph{into} the proof 
system.
Hence, sequents of \JStwo correspond to Safra-states in $\bbA_{\Phi}^S$ and the rules 
of \JStwo correspond to the transition function in $\bbA_{\Phi}^S$.
This substantially simplifies the success condition on infinite paths and
allows us to formulate the cyclic proof system \JStwo.

\subsection{Cyclic \JStwo proofs}

Recall that we fixed a finite set of formulas $\Phi$ with associated priority function $\Omega: \Phi \to \Nat^+$ and let $n$ be the maximal even number in $\ran(\Omega)$.
As in Section \ref{sec.automata} let $\sfN_{k}$ be the set of \emph{$k$-names} for even $k = 0,2,\ldots,n$ and let $\sfN \isdef \bigcup\sfN_{k} $ be the \emph{set of names}. \emph{Stacks} are defined as in Section \ref{sec.automata}.
%The natural order on $I_{\Phi}$ extends to a order on names as follows: $\dx \leq \dy$ if $\dx \in \sfN_{k}, \dy \in \sfN_{l}$ and $k \leq l$. We will  also write $\dx \leq \psi$ for a name $\dx$ and a fixpoint formula $\psi$, if $\dx \in N_{k}$ and $k \leq \Omega(\psi)$.

%An \emph{annotation} $\sigma \in \sfN^*$ is a finite sequence of non-repeating, decreasing names. We denote the empty annotation by $\epsilon$. For an annotation $\sigma$ and a number $k$ we write $k \leq \sigma$ if $k \leq \dx$ for all $\dx \in \sigma$. For an annotation $\sigma$ and a number $k$ we write $\sigma\downto k$ for the maximal subword $\sigma'$ of $\sigma$, such that $k \leq \sigma'$. 
%  Let $\sqsubseteq$ denote the reflexive sub-word relation on $\sfN^*$.

An \emph{annotated formula} is a pair $(\phi,\sigma)$, written as $\phi^\sigma$, where $\phi$ is a formula and $\sigma$ is a stack such that $\sigma = \sigma\downto \Omega(\phi)$. We will call $\sigma$ the \emph{annotation} of $\phi$. 
An \emph{annotated sequent} consists of a finite set of annotated formulas 
$\{\phi_1^{\sigma_1},\ldots,\phi_n^{\sigma_n}\}$, a set of trace atoms $T$ and a 
finite, non-repeating sequence of names $\theta$, called the \emph{control},
such that $\theta$ contains all names that occur in $\sigma_1,\ldots,\sigma_n$. 
The control can be seen as a linear order on the names occurring in a sequent; it
keeps track of when a name is added to a sequent.
If it is clear from the context we call annotated sequents just sequents and 
denote them as $\theta \proves \phi_1^{\sigma_1},\ldots,\phi_n^{\sigma_n}, T$. 
We use $A,B,\ldots$ as variables ranging over annotated formulas and trace atoms 
and use $\Gamma,\Delta, \Sigma,\ldots$ for sets of annotated formulas and trace 
atoms. 
For a set of formulas $\Gamma$ we define $\Gamma^{\epsilon} = \{\phi^\epsilon
\| \phi \in \Gamma\}$ and for an annotated sequent $\Gamma$ we define 
$\Gamma^{\epsilon} = \{\phi^{\epsilon}\| \phi^\sigma \in \Gamma \text{ for 
some } \sigma\}$.
Given an annotated sequent $\Gamma$ we define $\Clos(\Gamma) \isdef \Clos(\{\phi \in \muMLtwo \| \phi^\sigma \in \Gamma \text{ for some } \sigma\})$ and analogously for $\ClosN$.

In Figure \ref{fig.JStwo} the rules of the \JStwo proof system are given. If we ignore the control and the annotations, the axioms and the rules \RuAnd, \RuOr, \RuEta, \RuDia, \RuTrans, \RuWeak, \RuCut and \RuTCut coincide with the rules of \NWtwo.
Annotated sequents correspond to Safra-states of $\bbA_{\Phi}^S$, where $\bbA_{\Phi}$ is the tracking automaton checking the success condition on infinite \NWtwo paths. The transition function $\delta_A$ is split up between multiple rules: Step 1 is carried out in every rule; 
Step 2 adds a fresh name in \RuMu; 
Step 3 corresponds to the \RuJump rules; Step 4 is a special instance of \RuWeak and Step 5 corresponds to \RuReset.
We also add a weakening rule for names, called \RuExp. In order to obtain a cyclic system we add the discharge rule \RuDischarge{} that tracks repeats. Every \RuDischarge{} rule is labelled by a unique \emph{discharge token} taken from a fixed infinite set $\Tokens = \{\dd,\de,\ldots\}$.

\begin{figure*}[tbh]
	\begin{mdframed}[align=center]
		\begin{minipage}{\textwidth}
			\begin{minipage}{0.23\textwidth}
				\begin{prooftree}
					\hypo{\phantom{X}}
					\infer[left label =\AxLit:]1{\theta \proves \phi^\sigma, \mybar{\phi}^\tau, \Gamma}
				\end{prooftree}
			\end{minipage}
			\begin{minipage}{0.19\textwidth}
				\begin{prooftree}
					\hypo{\phantom{X}}
					\infer[left label =\AxBot:]1{\theta \proves \bot^{\sigma}, \Gamma}
					\end{prooftree}
			\end{minipage}
			\begin{minipage}{0.33\textwidth}
				\begin{prooftree}
					\hypo{\phantom{X}}
					\infer[left label =\AxTraceNeg:]1{\theta \proves \phi \trace[k] \psi, \phi \ntrace[k] \psi, \Gamma}
				\end{prooftree}
			\end{minipage}
			\begin{minipage}{0.18\textwidth}
				\begin{prooftree}
					\hypo{\phantom{X}}
					\infer[left label =\AxTraceLoop:]1{\theta \proves \phi \trace[2k] \phi, \Gamma}
				\end{prooftree}
			\end{minipage}
		\end{minipage}
		
		\ruleskip
		
		\begin{minipage}{\textwidth}
			\begin{minipage}{0.47\textwidth}
				\begin{prooftree}
					\hypo{\theta \proves\phi^\sigma,\psi^\sigma,\phi \land \psi \trace[1] \phi, \phi \land \psi \trace[1] \psi, \Gamma}
					\infer[left label =\RuAnd:]1{\theta \proves (\phi \land \psi)^\sigma,\Gamma}
				\end{prooftree}
			\end{minipage}
			\begin{minipage}{0.50\textwidth}
				\begin{prooftree}
					\hypo{\theta \proves \phi^\sigma, \phi \lor \psi\trace[1] \phi, \Gamma}
					\hypo{\theta \proves\psi^\sigma,\phi \lor \psi \trace[1] \psi, \Gamma}
					\infer[left label =\RuOr:]2{\theta \proves(\phi \lor \psi)^\sigma,\Gamma}
				\end{prooftree}
			\end{minipage}
		\end{minipage}
		
		\ruleskip
		
		\begin{minipage}{\textwidth}
			\begin{minipage}{0.5\textwidth}
				\begin{prooftree}
					\hypo{\theta\cdot \dx \proves\phi[\mu x . \phi / x]^{\sigma\downto k \cdot \dx}, \mu x. \phi \trace[k] \phi[\mu x.\phi / x], \Gamma}
					\infer[left label=\RuMu:]1[~$k = \Omega(\mu x. \phi)$ and $\dx$ is a fresh $k$-name]{\theta \proves\mu x . \phi^\sigma, \Gamma}
				\end{prooftree}
			\end{minipage}
		\end{minipage}
	
		\ruleskip
		
		\begin{minipage}{\textwidth}
			\begin{minipage}{0.60\textwidth}
				\begin{prooftree}
					\hypo{\theta \proves\phi[\nu x . \phi / x]^{\sigma\downto k}, \nu x. \phi \trace[k] \phi[\nu x . \phi / x], \Gamma}
					\infer[left label=\RuNu:]1[~$k = \Omega(\nu x. \phi)$]{\theta \proves\nu x . \phi^\sigma, \Gamma}
				\end{prooftree}
			\end{minipage}
			\begin{minipage}{0.35\textwidth}
				\begin{prooftree}
					\hypo{\theta \proves\phi^\sigma,\Sigma, \ldiap[\conv{a}]\Gamma^{\epsilon}, \Gamma^{\ldiap\phi}}
					\infer[left label =\RuDia:]1{\theta \proves\ldiap \phi^\sigma, \lboxp \Sigma, \Gamma}
				\end{prooftree}
			\end{minipage}
		\end{minipage}

		\ruleskip
		
		\begin{minipage}{\textwidth}
			\begin{minipage}{0.45\textwidth}
				\begin{prooftree}
					\hypo{\theta \proves\phi \trace[k] \psi, \psi \trace[l] \chi, \phi \trace[\max\{k,l\}] \chi, \Gamma}
					\infer[left label =\RuTrans:]1{\theta \proves\phi \trace[k] \psi, \psi \trace[l] \chi, \Gamma}
				\end{prooftree}
			\end{minipage}
			\begin{minipage}{0.2\textwidth}
				\begin{prooftree}
					\hypo{\theta \proves\Gamma}
					\infer[left label =\RuWeak:]1{\theta \proves A, \Gamma}
				\end{prooftree}
			\end{minipage}
			\begin{minipage}{0.20\textwidth}
			\begin{prooftree}
				\hypo{\theta' \proves \phi^{\tau},\Gamma}
				\infer[left label=\RuExp:]1[~$\theta' \sqsubseteq \theta$ and $\tau \sqsubseteq \sigma$]{\theta \proves \phi^\sigma, \Gamma}
			\end{prooftree}
			\end{minipage}
		\end{minipage}
			
		\ruleskip
				
		\begin{minipage}{\textwidth}		
			\begin{minipage}{0.42\textwidth}
				\begin{prooftree}
					\hypo{\theta\proves\phi^\sigma, \psi^{\sigma\downto2k+1}, \psi^\tau, \phi \trace[2k+1] \psi, \Gamma}
					\infer[left label=$\RuJump_o$:]1[]{\theta\proves\phi^\sigma, \psi^\tau, \phi \trace[2k+1] \psi, \Gamma}
				\end{prooftree}
			\end{minipage}
			\begin{minipage}{0.47\textwidth}
				\begin{prooftree}
					\hypo{\theta\cdot \dx\proves\phi^\sigma, \psi^{\sigma\downto2k\cdot \dx}, \psi^\tau, \phi \trace[2k] \psi, \Gamma}
					\infer[left label=$\RuJump_e$:]1[~$\dx$ is a fresh $2k$-name]{\theta\proves\phi^\sigma, \psi^\tau, \phi \trace[2k] \psi, \Gamma}
				\end{prooftree}
			\end{minipage}
		\end{minipage}
		\ruleskip

		\begin{minipage}{\textwidth}		
			
			\begin{minipage}{0.41\textwidth}
				\begin{prooftree}
					\hypo{\theta \proves\phi^\epsilon, \Gamma}
					\hypo{\theta \proves\mybar{\phi}^\epsilon, \Gamma}
					\infer[left label =\RuCut:]2[~$\phi \in \ClosN(\Gamma)$]{\theta \proves\Gamma}
				\end{prooftree}
			\end{minipage}
			\begin{minipage}{0.44\textwidth}
				\begin{prooftree}
					\hypo{\theta \proves\phi \trace[k] \psi, \Gamma}
					\hypo{\theta \proves\phi \ntrace[k] \psi, \Gamma}
					\infer[left label =\RuTCut:]2[~$\phi \in \ClosN(\Gamma)$]{\theta \proves\Gamma}
				\end{prooftree}
			\end{minipage}
			
		\end{minipage}
	
		\ruleskip
		
		\begin{minipage}{\textwidth}
			\begin{minipage}{0.68\textwidth}
				\begin{prooftree}
					\hypo{\theta \proves \phi_1^{\sigma\dx},\ldots, \phi_n^{\sigma\dx}, \Gamma}
					\infer[left label=\RuReset[\dx]:]1[~$\dx,\dx_1,\ldots,\dx_n$ are $k$-names, $\dx$ not in $\Gamma$]{\theta \proves \phi_1^{\sigma\dx\dx_1\tau_1},\ldots, \phi_n^{\sigma\dx\dx_n\tau_n}, \Gamma}
				\end{prooftree}
			\end{minipage}
			\begin{minipage}{0.17\textwidth}
				\begin{prooftree}
					\hypo{\discharge{\theta\proves\Gamma}{\dd}}
					\infer[no rule]1{\vdots}
					\infer[no rule]1{\theta\proves\Gamma}
					\infer[left label =\RuDischarge{\dd}:]1{\theta\proves\Gamma}
				\end{prooftree}
			\end{minipage}
		\end{minipage}
	\end{mdframed}
	\caption{Proof rules of the proof system \JStwo}
	\label{fig.JStwo}
\end{figure*}

\begin{definition}[Derivation]\label{def.JStwoDerivation}
	A \emph{\JStwo derivation} $\pi = (T,P,\sfS,\sfR, \sff)$ is a proof tree defined from the rules in Figure \ref{fig.JStwo} such that $T,P$ and $\sff$ are defined as for \NWtwo derivations; $\sfS$ maps each node $u \in T$ to an annotated sequent $\sfS_u$; and
	%$(T,P)$ is a, possibly infinite, tree with nodes $T$ and parent relation $P$;
	%$\sfS$ is a function that maps every node $u \in T$ to a non-empty sequent $\sfS_u$;
	$\sfR$ is a function that maps every node $u \in T$ to either (i) the name of a rule in Figure \ref{fig.JStwo}, (ii)  a discharge token or (iii) an extra value $o$, such that every node labelled with a discharge token or $o$ is a leaf.
	%nd $\sff$ is a function that maps nodes $u \in T$ to its \emph{principal formula} $\sff(u) \in \sfS(u)$ or to $\nil$ if the rule does not have a principal formula.		
	
	For every leaf $l$ that is labelled with a discharge token $\dd \in \Tokens$ there is a proper ancestor $c(l)$ of $l$ that is labeled with \RuDischarge{\dd} and such that $l$ and $c(l)$ are labelled by the same sequent. In this situation we call $l$ a \emph{repeat leaf} and $c(l)$ its \emph{companion}.
	Leaves labelled by $o$ are called \emph{open assumptions}.
	
	Given $\pi$ we define the  usual \emph{proof tree} $\calT_\pi = (T,P)$ and the \emph{proof tree with back edges} $\cyclicPT = (T,P^C)$ where $P^C = P \cup \{(l,c(l))\mid l \text{ is a repeat leaf}\}$.
\end{definition}

\begin{definition}[Successful path]
	A finite path $\alpha$ in a \JStwo derivation is called \emph{successful} if there is a name $\dx$ such that
	\begin{enumerate}
		\item $\dx$ occurs in the control of every sequent on $\alpha$ and
		\item there is an application of $\RuReset[\dx]$ on $\alpha$.
	\end{enumerate}
	
Let $v$ be a  repeat leaf in a \JStwo derivation  $\pi = (T,P,\sfS,\sfR, \sff)$
with companion  $c(v)$, and let $\alpha_v$ denote the \emph{repeat path} of $v$ 
in $\calT_{\pi}$ from $c(v)$ to $v$. 
We say that $v$ is \emph{discharged} if the path $\alpha_v$ is successful.		
A leaf is called \emph{closed} if it is either discharged or labelled by an 
axiom, and is called \emph{open} otherwise.
\end{definition}

\begin{definition}[Cyclic proof]
	A \JStwo proof is a finite \JStwo derivation, where every leaf is closed.
\end{definition}
We say that \JStwo proves a set of formulas $\Gamma$, written $\JStwo \proves \Gamma$, if there is a \JStwo proof $\pi$, where the root is labelled by $\epsilon \proves \Gamma^{\epsilon}$.
\subsection{Infinite \JStwo proofs}

\begin{definition}
	Let $\pi$ be a \JStwo derivation and $\alpha$ an infinite path in $\pi$. We call $\alpha$ \emph{successful}, if there is a name $\dx$ such that 
	\begin{enumerate}
		\item $\dx$ occurs in the control of cofinitely many sequents on $\alpha$ and
		\item there are infinitely many applications of $\RuReset[\dx]$ on $\alpha$.
	\end{enumerate}
\end{definition}

\begin{definition}[Infinitary proofs]
	A \JStwoInfty proof is a \JStwo derivation, where every leaf is labelled by an axiom and every infinite path is successful.
\end{definition}

\begin{lemma}\label{lem.InftyJSiffJS}
	$\JStwo \proves \Gamma$ iff there is a regular $\JStwoInfty$ proof of $\Gamma$.
\end{lemma}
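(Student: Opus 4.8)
The plan is to establish the equivalence $\JStwo \proves \Gamma$ iff there is a regular $\JStwoInfty$ proof of $\Gamma$ by translating back and forth between the cyclic derivations (finite trees with discharge tokens and back edges) and the infinitary derivations (possibly infinite trees whose every infinite path is successful), while preserving regularity. The two directions are essentially an unfolding construction and a folding construction, and the technical heart of each is to verify that the local success condition on discharged repeat paths matches the global success condition on infinite paths.

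First I would prove the direction from left to right. Given a \JStwo proof $\pi$, which by definition is a finite derivation in which every leaf is closed, I would form its proof tree with back edges $\cyclicPT = (T,P^C)$ and take the infinite unfolding: whenever a repeat leaf $v$ with companion $c(v)$ is reached, I splice in a fresh copy of the subderivation rooted at $c(v)$. Since $\pi$ is finite, this unfolding is a regular tree (it has only finitely many distinct subderivations, one for each node of $T$), and every leaf of the unfolding is an axiom, because the only non-axiomatic leaves of $\pi$ were the repeat leaves, which are now interior nodes. The key step is the success condition: every infinite path $\alpha$ in the unfolding must pass through back edges infinitely often, and hence it traverses the repeat path $\alpha_v$ of some discharged repeat leaf $v$ infinitely often. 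Because $v$ is discharged, $\alpha_v$ is successful, so there is a name $\dx$ occurring in the control throughout $\alpha_v$ with an application of $\RuReset[\dx]$ on it. Here the main subtlety arises: a priori different repeats might witness success with different names, and names can be discarded and reintroduced. I would argue, using that only finitely many distinct repeat paths are traversed infinitely often and a pigeonhole/dominant-name argument (choosing the repeat path whose loop is traversed cofinitely, or appealing to the fact that the witnessing name of the outermost infinitely-often-traversed companion persists), that a single name $\dx$ satisfies both clauses of the infinite success condition along $\alpha$. This yields a regular $\JStwoInfty$ proof.

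For the converse direction I would start from a regular $\JStwoInfty$ proof; regularity means it has finitely many distinct subderivations, so it can be presented as a finite graph. I would fold this graph into a finite \JStwo derivation by cutting each infinite branch at the first node whose subderivation coincides with that of a strictly lower ancestor on the same branch, declaring the cut node a repeat leaf, its ancestor the companion $c(l)$, and labelling them with a fresh discharge token from $\Tokens$. Every leaf of the resulting finite tree is then either an axiom or a repeat leaf, so it remains only to verify that each repeat leaf is discharged, i.e.\ that its repeat path $\alpha_v$ is successful in the finite sense of Definition on successful paths. The hard part will be exactly this verification, and it is the main obstacle of the whole lemma: the infinitary success condition is a property of genuinely infinite paths, whereas dischargedness is a property of a single finite loop. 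I would bridge the gap by considering the infinite path obtained by repeating the loop $\alpha_v$ forever (which is a legitimate infinite path in the $\JStwoInfty$ proof by regularity, since the subderivations at companion and repeat leaf agree), applying the infinitary success condition to it to extract a name $\dx$ that is in the control cofinitely often and reset infinitely often, and then showing that such a $\dx$ must already be present in the control throughout one full pass of $\alpha_v$ and be reset on it. This last implication relies on the structural fact that names once dropped from the control cannot reappear within a single loop (the control only grows by fresh names via \RuMu and \RuJump and shrinks by \RuExp/weakening), so "cofinitely often in the control of the infinite repetition" forces "everywhere on a single copy of $\alpha_v$", and "infinitely many resets" forces "at least one reset per copy".

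Finally I would note that both constructions manifestly preserve regularity and map roots labelled $\epsilon \proves \Gamma^\epsilon$ to roots of the same form, which closes the equivalence. I expect the folding direction, and specifically the argument that the periodic infinite path inherits a single witnessing name that survives one loop, to require the most care; it is where the precise bookkeeping of names in the control — how \RuMu, the two \RuJump rules, \RuExp and \RuReset[\dx] add, propagate and erase names — does the real work.
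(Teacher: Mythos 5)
Your proposal is correct and shares the paper's overall unfold/fold architecture, but your folding direction uses a genuinely different mechanism. The paper does \emph{not} cut at the first repetition of subderivations: on each infinite branch $\alpha$ it picks minimal indices $j<k$ such that $\rho_{\alpha(j)} = \rho_{\alpha(k)}$ \emph{and} the finite segment $\alpha(j)\cdots\alpha(k)$ is already successful --- existence follows from regularity together with the success of $\alpha$ --- and then invokes K\"onig's Lemma to obtain a finite tree. You instead cut at the first subderivation repetition and verify dischargedness afterwards, via the branch that repeats the loop forever. This works, and is arguably cleaner: since the subderivations at $c(v)$ and $v$ coincide, the sequents and rule labels (including the name subscript of $\RuReset[\dx]$) along that branch are literally periodic with period the loop length, so ``$\dx$ in the control cofinitely often'' forces $\dx$ into the control at \emph{every} position of one copy of the loop, and ``infinitely many applications of $\RuReset[\dx]$'' forces at least one per period. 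Note that this periodicity argument makes your auxiliary structural claim (``names once dropped cannot reappear within a loop'') unnecessary --- which is just as well, since as stated it is shaky: freshness only requires absence from the current control, so a dropped name can in principle be re-chosen. As a bonus, pigeonhole bounds the depth of your folded tree by the number of distinct subderivations plus one, so unlike the paper you do not even need K\"onig's Lemma.

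On the unfolding direction you are in fact more careful than the paper, which dismisses the success verification as ``easy to see''. Your worry is legitimate, but of your two fallback options the first (``the repeat path whose loop is traversed cofinitely'') does not work: an infinite path may alternate between several back edges forever, so no single loop need be traversed cofinitely. The second is the right germ: take the root-most companion $u$ of the strongly connected subgraph the path eventually inhabits; the control $\theta_u$ is restored at each return to $u$, and since names are only ever appended at the end of the control, the names persisting through a loop at $u$ form a $<_{\theta_u}$-downward-closed set, whence the $<_{\theta_u}$-least witness among the infinitely-often-traversed repeats survives cofinitely and is reset infinitely often --- essentially the order-on-names argument of Claim 2 in the paper's proof of Theorem~\ref{thm.determinisationCorrectness}. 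Since the paper offers no detail here, your sketch is no less complete than the paper's own. One last nit: splice in the subtree rooted at the \emph{child} of $c(v)$, as the paper does, rather than at $c(v)$ itself, so that no \RuDischarge{\dd} nodes (with their now-dangling tokens) survive in the unfolding.
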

\begin{proof}
	First let $\pi$ be a $\JStwo$ proof of $\Gamma$. The infinite unfolding $\pi^*$ of $\pi$ is the $\JStwo$ derivation obtained from $\pi$ by recursively replacing every discharged leaf $l$ with the subtree of $\pi$ rooted at the child node of $c(l)$. It easy to see that $\pi^*$ is a regular $\JStwoInfty$ proof of $\Gamma$.
	
Conversely, let $\rho$ be a regular $\JStwoInfty$ proof. 
For a node $v \in \rho$ let $\rho_v$ be the subtree of $\rho$ rooted at $v$. 
For every infinite path $\alpha = (\alpha(i))_{i \in \omega}$ define minimal 
indices $j < k$ such that
	\begin{enumerate}
		\item $\rho_{\alpha(j)} = \rho_{\alpha(k)}$ and
		\item the path $\alpha(j)\cdots \alpha(k)$ is successful.
	\end{enumerate}
	Because $\rho$ is regular and every infinite path is successful, such indices always exist. For each such infinite path we introduce a \RuDischarge{\dd} node at $\alpha(j)$ and let $\alpha(k)$ be a leaf discharged by $\dd$. Using König's Lemma we can show that this procedure results in a finite \JStwo proof $\pi$ of $\Gamma$.
\end{proof}

\subsection{Soundness and Completeness}

The proof system \JStwoInfty was constructed as follows: Take an \NWtwo proof and define the tracking automaton $\bbA_{\Phi}$ that checks whether an infinite branch carries a $\mu$-trail. Using the determinization method from
Section \ref{sec.automata} we simulate Safra-states of $\bbA_{\Phi}^S$ by annotated sequents in the \JStwoInfty system. Thus Safra-states in $\bbA_{\Phi}^S$ correspond to annotated sequents and the transition function of $\bbA_{\Phi}^S$ corresponds to various rules of \JStwo. In particular, step 4 corresponds to a specific shape of \RuWeak, which we call \RuThin. We also need a particular instance of \RuExp, that only removes names from $\theta$ which do not occur in $\Gamma$:
\[\begin{minipage}{\textwidth} %\begin{align*}
	\begin{minipage}{0.01\textwidth}
		\phantom{x}
	\end{minipage}
	\begin{minipage}{0.27\textwidth}
		\begin{prooftree}
			\hypo{\theta \proves \phi^\sigma, \Gamma}
			\infer[left label= \RuThin:]1[~$\sigma <_{\theta} \tau$]{\theta \proves \phi^\sigma, \phi^\tau, \Gamma}
		\end{prooftree}
	\end{minipage}
	\begin{minipage}{0.14\textwidth}
		\begin{prooftree}
			\hypo{\theta' \proves \Gamma}
			\infer[left label= $\RuExp'$:]1[~$\theta' \sqsubseteq \theta$]{\theta \proves \Gamma}
		\end{prooftree}
	\end{minipage}
\end{minipage} %\end{align*}
\]
Infinite runs of $\bbA_{\Phi}^S$ correspond to infinite branches in \JStwoInfty. This will be formalized in the proof of Lemma \ref{lem.JSinftyIffNW}.

\begin{lemma}\label{lem.JSinftyIffNW}
	There is a $\JStwoInfty$ proof $\rho$ of $\Gamma$ iff there is an $\NWtwo$ proof $\pi$ of $\Gamma$. The proof $\rho$ is regular iff $\pi$ is so.
\end{lemma}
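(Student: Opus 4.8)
The plan is to read a $\JStwoInfty$ proof as an $\NWtwo$ derivation that carries, along each of its branches, the unique run of the determinised tracking automaton $\bbA_{\Phi}^S$, so that the $\JStwoInfty$ success condition is literally the Rabin acceptance condition of $\bbA_{\Phi}^S$. The whole argument then rests on a chain of equivalences: for an infinite branch $\alpha$ of a \emph{saturated} $\NWtwo$ derivation, $\alpha$ is $\NWtwo$-successful iff $w(\alpha)\in\calL(\bbA_{\Phi})$ (Proposition~\ref{prop.trackAut}), iff $w(\alpha)\in\calL(\bbA_{\Phi}^S)$ (Theorem~\ref{thm.determinisationCorrectness}), iff the run of $\bbA_{\Phi}^S$ reading $w(\alpha)$ keeps some name active cofinitely often and colours it $\green$ infinitely often. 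By the design of the rules this last clause is exactly the statement that the decorating $\JStwoInfty$ branch is successful, since green-colouring of an invisible name happens precisely at an application of $\RuReset$.

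For the direction from $\NWtwo$ to $\JStwoInfty$ I would first replace $\pi$ by a saturated proof of the same sequent (harmless, and needed to invoke Proposition~\ref{prop.trackAut}). I then decorate $\pi$ branch by branch: since $\bbA_{\Phi}^S$ is deterministic, each edge $(\sfS(u),\sff(u),\sfS(v))$ of $\pi$ determines a unique transition $\delta_A$, whose five phases I realise by the corresponding $\JStwo$ rules — the basic move inside the main rule, the cover phase by the fresh name of $\RuMu$, the $\epsilon$-move by the $\RuJump$ rules, the thinning by $\RuThin$ together with $\RuExp'$, and the reset by $\RuReset$. This produces a $\JStwoInfty$ derivation $\rho$ whose nodes carry exactly the Safra-states of the run and whose skeleton of principal rules is that of $\pi$. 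Each infinite branch of $\pi$ is successful, so by the equivalence above its decorating run is Rabin-accepting, whence $\rho$ satisfies the $\JStwoInfty$ success condition on that branch; thus $\rho$ is a $\JStwoInfty$ proof of $\Gamma$.

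For the converse I would forget the annotations and the control in $\rho$ and collapse the purely structural rules $\RuJump$, $\RuThin$, $\RuExp$ and $\RuReset$, each of which acts trivially on the underlying pure sequents, obtaining an $\NWtwo$ derivation $\pi$ whose principal steps are the erasures of $\RuMu$, $\RuNu$, $\RuAnd$, $\RuOr$, $\RuDia$ and $\RuTrans$ (read as $\RuEta$, etc.). By construction the annotations on any infinite branch $\alpha$ of $\rho$ encode the run of $\bbA_{\Phi}^S$ on $w(\alpha)$; since $\alpha$ is $\JStwoInfty$-successful this run is Rabin-accepting, so $w(\alpha)\in\calL(\bbA_{\Phi}^S)=\calL(\bbA_{\Phi})$ and, by Proposition~\ref{prop.trackAut}, the branch carries a $\mu$-trail. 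Hence every infinite branch of $\pi$ is successful and $\pi$ is an $\NWtwo$ proof.

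Regularity is preserved in both directions because the decoration is completely determined by the branch (determinism of $\bbA_{\Phi}^S$) and $\bbA_{\Phi}^S$ has only finitely many Safra-states: a derivation with finitely many distinct subderivations is mapped to one with finitely many distinct subderivations, and erasure can only identify subderivations. The main obstacle I anticipate is not the transfer of acceptance conditions — which is packaged cleanly by Proposition~\ref{prop.trackAut} and Theorem~\ref{thm.determinisationCorrectness} — but the bookkeeping needed to show that the five phases of $\delta_A$ are faithfully simulated by the $\JStwo$ rules and that the decorated sequents are genuine Safra-states: one must verify that each stack is correctly truncated to $\sigma\downto\Omega(\phi)$, that the control $\theta$ tracks exactly the active names, that the trace atoms introduced by $\RuDia$, $\RuTrans$ and the fixpoint rules line up with the $\epsilon$-transitions through trace-atom states of $\bbA_{\Phi}$, and above all that ``name active cofinitely often and $\green$ infinitely often'' matches ``name in the control cofinitely often and $\RuReset$ applied infinitely often'' on the nose.
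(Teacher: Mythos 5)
Your forward direction ($\NWtwo$ to $\JStwoInfty$) is essentially the paper's argument: saturate $\pi$, decorate it bottom-up so that sequents mirror the Safra-states of the deterministic run of $\bbA_{\Phi}^S$, realising the five phases of $\delta_A$ by the corresponding rules, and transfer success via Proposition~\ref{prop.trackAut} and Theorem~\ref{thm.determinisationCorrectness}. That half, and your regularity remarks, are fine.

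The converse direction, however, has a genuine gap. Your key step reads: ``By construction the annotations on any infinite branch $\alpha$ of $\rho$ encode the run of $\bbA_{\Phi}^S$ on $w(\alpha)$.'' This is only true of the particular $\rho$ you built in the forward direction; the lemma quantifies over \emph{arbitrary} $\JStwoInfty$ proofs $\rho$, and for such a $\rho$ the claim fails. In a general $\JStwoInfty$ derivation the bookkeeping rules \RuJump, \RuThin, \RuExp and \RuReset[] may be applied in any order, repeatedly, or not at all (e.g.\ a proof may simply never reset a name even when step~5 of $\delta_A$ would demand it, or may weaken annotations with \RuExp in ways no transition of $\delta_A$ licenses), so the annotated sequents along a branch need not be the Safra-states of the deterministic run on $w(\alpha)$, and the success of $\alpha$ does not directly yield $w(\alpha) \in \calL(\bbA_{\Phi}^S)$. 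The paper flags exactly this obstacle and circumvents it by \emph{not} going through the automaton: it redoes the combinatorial core of the $\calL(\bbA_{\Phi}^S) \subseteq \calL(\bbA_{\Phi})$ direction of Theorem~\ref{thm.determinisationCorrectness} directly on the proof tree. Concretely, from the success witness one extracts a $k$-name $\dx$ in the control of cofinitely many sequents with infinitely many applications of $\RuReset[\dx]$, marks the corresponding nodes $\alpha_{s(i)}$ on the erased branch, shows (as in Claim~4 of that theorem's proof) that between consecutive reset points there is a trail of maximal weight $k$, and glues these trails with K\"onig's Lemma into a $\mu$-trail on $\alpha$. To repair your proof you would either have to carry out this adaptation, or first normalise an arbitrary $\rho$ into one whose branches do follow $\delta_A$ phase by phase --- a nontrivial transformation you have not supplied. (A smaller related slip: the erasure $\pi$ of an arbitrary $\rho$ need not be saturated, so invoking Proposition~\ref{prop.trackAut} for it needs the observation that only the direction ``$w(\alpha) \in \calL(\bbA_{\Phi})$ implies a $\mu$-trail'' is used, which holds without saturation.)
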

\begin{proof}
	First let $\pi$ be an \NWtwo proof of a sequent $\Gamma$. We may assume that $\pi$ is saturated, otherwise add \RuTrans rules whenever applicable and make \RuEta rules cumulative. 
	Inductively we translate every node $v$ in $\pi$ to a node $v'$ (potentially with additional nodes), such that $v'$ is labelled by the same sequent as $v$ plus 
	annotations, where $v'$ corresponds to a macrostate in $\bbA_{\Phi}^S$.
	This can be achieved by replacing every rule in \NWtwo by its corresponding rule in 
	\JStwo and adding productive instances of the rules \RuJump, \RuThin, \RuReset[] and $\RuExp'$ whenever applicable (in that order bottom-up). 
	This yields a \JStwoInfty derivation $\rho$ that is regular if $\pi$ is regular. 
	It remains to show that every infinite branch $\alpha= (v_i)_{i\in\omega}$ in 
	$\rho$ is successful. 
	Let $\hat{\alpha}$ be the corresponding infinite branch in $\pi$. 
	Due to Proposition \ref{prop.trackAut} it holds that $w(\hat{\alpha}) \in 
	\calL(\bbA_{\Phi})$ and Theorem \ref{thm.determinisationCorrectness} yields $w(\hat{\alpha}) \in \calL(\bbA_{\Phi}^S)$.
	%Thus there is a name $\dx$ which is active cofinitely often and coloured green infinitely often. 
	As the branch $\alpha$ closely resembles the run of $\bbA_{\Phi}^S$ 
	on $w(\hat{\alpha})$ it follows that $\alpha$ is successful.
	
Conversely let $\rho$ be a \JStwoInfty proof of $\Gamma$. 
We let $\pi$ be the \NWtwo derivation defined from $\rho$ by omitting the rules 
\RuExp, \RuJump and \RuReset[] and reducing all other rules to their corresponding 
\NWtwo rules by removing annotations. 
If $\rho$ is regular, so is $\pi$.
To show that $\pi$ is actually a proof, take an arbitrary branch $\alpha = (\alpha_i)_{i \in \omega}$; we 
have to prove that $\alpha$ is successful.

Let $\beta= (\beta_j)_{j \in \omega}$ be the corresponding infinite branch in 
$\rho$. 
In this direction we can not apply the determinization of the tracking automaton
directly, as in $\rho$ the rules do not have to be applied in a specific order,
meaning that branches in $\rho$ do not necessarily correspond to runs in 
$\bbA_{\Phi}^S$. 
Yet we show how one can reuse the proof of $\calL(\bbA_{\Phi}^S) \subseteq \calL(\bbA_{\Phi})$ 
(Converse direction of Theorem \ref{thm.determinisationCorrectness}) with only 
minor adaptions, here is the resulting proof sketch. 
As $\beta$ is successful, there is a $k$-name $\dx$ that occurs in the control
of cofinitely many sequents on $\beta$ and such that there are infinitely many 
applications of $\RuReset[\dx]$ on $\beta$. 
We can define minimal indices $t(0) < t(1) < \cdots$ such that $\dx$ occurs in 
the control of $\beta_{j}$ for $j\geq t(0)$ and such that in $\beta_{t(i)}$ the
rule $\RuReset[\dx]$ is applied for $i \in \omega$. 
The nodes $\beta_{t(i)}$ correspond to nodes $\alpha_{s(i)}$ on $\alpha$ for 
$i \in \omega$. 
As in the proof of Theorem \ref{thm.determinisationCorrectness} we can find 
trails $\tau_i$ from $\alpha_{s(i)}$ to $\alpha_{s(i+1)}$ with maximal weight
$k$. 
Using König's Lemma we can again glue together such trails and obtain an 
infinite $\mu$-trail on $\alpha$, which means that $\alpha$ is successful 
indeed.	
\end{proof}

\begin{theorem}[Soundness and Completeness]\label{thm.JStwoSoundnessCompleteness}
	A pure sequent $\Gamma$ is unsatisfiable iff $\JStwo \proves \Gamma$.
\end{theorem}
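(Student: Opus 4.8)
The plan is to prove the theorem by chaining together the soundness and completeness results for \NWtwo with the two translation lemmas relating \NWtwo, \JStwoInfty and \JStwo. No new combinatorial work should be required: the statement is essentially a corollary of the equivalences already established, and the proof amounts to threading the right existence-and-regularity claims through the correct lemmas.

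For the right-to-left direction (soundness of \JStwo), I would start from a \JStwo proof of $\Gamma$, that is, a finite cyclic proof whose root is labelled $\epsilon \proves \Gamma^\epsilon$. By Lemma \ref{lem.InftyJSiffJS} this yields a regular $\JStwoInfty$ proof of $\Gamma$, and Lemma \ref{lem.JSinftyIffNW} then converts this into an $\NWtwo$ proof $\pi$ of $\Gamma$. Soundness of \NWtwo (Theorem \ref{thm.SoundnessNWtwo}) immediately gives that $\Gamma$ is unsatisfiable. In this direction regularity plays no role; only the bare existence of an \NWtwo proof is used.

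For the left-to-right direction (completeness of \JStwo), I would begin with an unsatisfiable pure sequent $\Gamma$. Completeness of \NWtwo (Theorem \ref{thm.CompleteenessNWtwo}) provides an \NWtwo proof of $\Gamma$, and by the observation following the proof search game $\calG(\Phi)$ --- that $\omega$-regular games admit finite-memory strategies --- this proof may be taken to be regular. Feeding the regular \NWtwo proof into Lemma \ref{lem.JSinftyIffNW} produces a $\JStwoInfty$ proof $\rho$ of $\Gamma$ that is again regular, and Lemma \ref{lem.InftyJSiffJS} folds this regular $\JStwoInfty$ proof into the desired finite \JStwo proof, whence $\JStwo \proves \Gamma$.

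The points requiring care are bookkeeping rather than substance, and I expect the regularity tracking to be the only subtle step. It is essential in the completeness direction: a non-regular \NWtwo proof would yield only a non-regular $\JStwoInfty$ proof, which Lemma \ref{lem.InftyJSiffJS} cannot collapse into a finite cyclic proof, so invoking the finite-memory-strategy remark is indispensable. One should also verify that the root labels match under the translations --- the pure sequent $\Gamma$ at the \NWtwo root corresponds to $\epsilon \proves \Gamma^\epsilon$ at the \JStwo root, exactly the correspondence built into Lemma \ref{lem.JSinftyIffNW}. Finally, since completeness of \NWtwo is stated only for pure sequents, the restriction of the present theorem to pure sequents is precisely what makes the chain go through, and I would flag this as the only genuine limitation.
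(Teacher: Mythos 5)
Your proposal is correct and follows essentially the same route as the paper's own proof: both directions are obtained by chaining Theorems \ref{thm.SoundnessNWtwo} and \ref{thm.CompleteenessNWtwo} with Lemmas \ref{lem.JSinftyIffNW} and \ref{lem.InftyJSiffJS}, with regularity threaded through exactly as you describe. Your remark that regularity of the \NWtwo proof comes from the finite-memory-strategy observation about the proof search game is precisely the justification the paper relies on implicitly when it asserts the equivalence with the existence of a \emph{regular} \NWtwo proof.
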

\begin{proof}
	From Theorem \ref{thm.SoundnessNWtwo} and Theorem \ref{thm.CompleteenessNWtwo} it follows that $\Gamma$ is unsatisfiable iff there is a regular \NWtwo proof of $\Gamma$. This is equivalent to the existence of a regular \JStwoInfty proof due to Lemma \ref{lem.JSinftyIffNW}. Hence Lemma \ref{lem.InftyJSiffJS} concludes the proof.
\end{proof}

\section{Split \JStwo system}\label{sec.splitJS}
Our overall strategy to prove interpolation is as follows: Given a \JStwo proof $\pi$ of $\phi, \psi$ we define a formula $I$ in the common vocabulary of $\phi$ and $\psi$ and construct proofs $\pi^l$ of $\phi, I$ and $\pi^r$ of $\mybar{I},\psi$. This is done by structural induction on $\pi$, where roughly $\pi^l$ contains those rules of $\pi$ concerning descendants of $\phi$ and $\pi^r$ contains those rules of $\pi$ concerning descendants $\psi$. In order to make that formal, we have to separate, in every sequent, those parts originating from $\phi$ and those originating from $\psi$. Sequents of this kind will be called split sequents.

\subsection{Split \JStwo proofs}
A \emph{split sequent} is a quadruple $(\theta, \Gamma, \kappa, \Delta)$, usually written 
as $\theta \proves \Gamma \| \kappa \proves \Delta$, where $\theta \proves \Gamma$ and $\kappa \proves \Delta$ are annotated sequents and $\theta$ and $\kappa$ are disjoint.
Note that we do not require that $\Gamma$ and $\Delta$ are disjoint. 
Given a split sequent $\theta \proves \Gamma \| \kappa \proves \Delta$ we call $\theta \proves \Gamma$ the left and $\kappa \proves \Delta$ the right component of the split sequent. 
We will write $\Psi^l$ and $\Psi^r$ for the left and right component of the split sequent $\Psi$, respectively.
%For a node $u$ in an split \JStwo proof we will write $\sfS_u^l$ and $\sfS_l^r$ for the left and right part of the sequent $\sfS_u$.
%We will use $d$ as a variable ranging over $l$ and $r$ and let $\mybar{l} = r$ and $\mybar{r} = l$.

We will define \emph{split \JStwo proofs} consisting of split sequents, where \JStwo rules are applied to either the left or the right component of a split sequent. Importantly, if $\Psi^l$ is the left component of the conclusion, all formulas in the left component of a premiss will be in $\ClosN(\Psi^l)$.

For any \JStwo rule \Ru we define a \emph{left \JStwo rule} $\Ru^l$ as follows. 
If $\Ru \neq \RuDia$ is of the form\footnote{For simplicity  we only depict a unary rule, the case of a binary rule is analogous.}
$$\begin{prooftree}
	\hypo{\theta' \proves \Gamma'}
	\infer[left label=\Ru:]1[]{\theta \proves \Gamma}
\end{prooftree}$$
then $\Ru^l$ is of the form 
$$\begin{prooftree}
	\hypo{\theta' \proves \Gamma' \| \kappa \proves \Delta}
	\infer[left label=$\Ru^l$:]1[]{\theta \proves \Gamma \| \kappa \proves \Delta}
\end{prooftree}$$
The rule $\RuDiaL$ is of the form 
	\[\begin{prooftree}
		\hypo{
		\theta \proves\phi^\sigma,\Sigma, \ldiap[\conv{a}]^l\Gamma^{\epsilon}, \Gamma^{\ldiap^l\phi}\| 
		\kappa \proves \Pi,\ldiap[\conv{a}]^r\Delta^{\epsilon},\Delta^{\ldiap^r\phi}}
		\infer[left label= \RuDiaL:]1[]{
		\theta \proves\ldiap \phi^\sigma, \lboxp \Sigma, \Gamma\| 
		\kappa \proves \lboxp \Pi,\Delta}
	\end{prooftree}\]

 Let $\Psi^l$ and $\Psi^r$ be the respective left and right component of the split sequent of the conclusion. Then we define $$\ldiap[\conv{a}]^l\Gamma \isdef \{\ldiap[\conv{a}]\gamma^\sigma \| \ldiap[\conv{a}]\gamma \in \ClosN(\Psi^l), \gamma^\sigma \in \Gamma\}.$$ The conditions in $\Gamma^{\ldiap\phi}$ are adapted, such that $\Gamma^{\ldiap^l \phi}$ is defined as 
\begin{align*}
	& ~\{\phi \ntrace[k]  \lboxp[\conv{a}] \chi \| \ldiap \phi \ntrace[k]  \chi \in \Gamma \text{ and } \lboxp[\conv{a}]\chi \in \ClosN(\Psi^l)\} \\
	\cup& ~\{\lboxp[\conv{a}] \chi \trace[k]  \phi \| \chi \trace[k]  \ldiap \phi \in \Gamma \text{ and } \lboxp[\conv{a}]\chi \in \ClosN(\Psi^l)\} \\
	\cup& ~\{\psi \ntrace[k]  \lboxp[\conv{a}] \chi \| \lboxp \psi \ntrace[k]  \chi \in \Gamma \text{ and } \lboxp[\conv{a}]\chi \in \ClosN(\Psi^l)\} \\
	\cup& ~\{\lboxp[\conv{a}] \chi \trace[k]  \psi \| \chi \trace[k]  \lboxp \psi \in \Gamma \text{ and } \lboxp[\conv{a}]\chi \in \ClosN(\Psi^l)\} 
\end{align*}
Analogously for $\ldiap[\conv{a}]^r\Delta$ and $\Delta^{\ldiap^r\phi}$.

\emph{Right \JStwo rules} are defined analogously.
Additionally we also allow so-called \emph{split axioms} of the form
\[
\begin{prooftree}
	\hypo{}
	\infer[left label=$\AxLit'$:]1{\theta \proves \phi^\sigma \| \kappa \proves \mybar{\phi}^\tau, \Delta}
\end{prooftree}\]

For most rules the left and the right component of the split do not interact.  The only exceptions are the modal rule \RuDia and the axiom $\AxLit'$. Note that for trace atoms there is no interaction between the left and the right component at all, and even the axiom \AxTraceNeg may only be applied if both a trace atom and its negated trace atom occur in the same component.

\begin{definition}
	A \emph{split \JStwo derivation} is a proof tree defined from all left and right \JStwo rules and split axioms.
\end{definition}

A finite path $\alpha$ in a \JStwo derivation is called \emph{left-successful} if there is a name $\dx$ such that
	\begin{enumerate}
		\item $\dx$ occurs in the left component of every split sequent on $\alpha$ and
		\item there is an application of $\RuReset[]^l_{\dx}$ on $\alpha$.
	\end{enumerate}
	
Let $v$ be a repeat leaf in a split \JStwo derivation $\pi$ with companion $c(v)$
labelled by $\RuDischarge{}^l$, meaning that $v$ is a descendant of $c(v)$ and 
$v$ and $c(v)$ are labelled by the same split sequent. 
Let $\tau_v$ denote the {repeat path} of $v$ from $c(v)$ to $v$. 
We say that the leaf $v$ is \emph{discharged} by $\RuDischarge{}^l$ if the path
$\tau_v$ is left-successful.	
Right-successful paths and leaves discharged by $\RuDischarge{}^r$ are defined 
analogously.
A leaf is called \emph{closed} if it is either discharged or labelled by an 
axiom, and is called \emph{open} otherwise.

\begin{definition}
	A \emph{split \JStwo proof} is a finite split \JStwo derivation, where every leaf is closed.
\end{definition}

Given sets of formulas $\Gamma$ and $\Delta$ we say that there is a split $\JStwo$ proof of $\Gamma \| \Delta$, if there is a split \JStwo proof of which the root is labelled by $\epsilon \proves \Gamma^{\epsilon} \| \epsilon \proves \Delta^{\epsilon}$.

\subsection{Soundness and completeness of split proofs}

For proving soundness of the split system we aim to translate a 
split \JStwo proof $\pi$ to a \JStwoInfty proof $\rho$. 
To do so we will translate split sequents of the form $\theta \proves \Sigma
\| \kappa \proves \Pi$ to sequents of the form $\zeta \proves \Sigma,\Pi$ 
where $\zeta$ is a combination of the names in $\theta$ and $\kappa$ according
to the following definition. 

Let $\theta$ and $\kappa$ be disjoint non-repeating sequences of names. We call a non-repeating sequence of names $\zeta$ a \emph{merger} of $\theta$ and $\kappa$, if $\zeta$ consists of exactly those names occurring in $\theta$ and $\kappa$ and such that $\theta \sqsubseteq \zeta$ and $\kappa \sqsubseteq \zeta$. 
Recall that we write $\theta \sqsubseteq \zeta$ if $\zeta$ contains all names occurring in $\theta$ and whenever a name $\dx$ occurs to the left of a name $\dy$ in $\theta$, then $\dx$ also occurs to the left of $\dy$ in $\zeta$.

\begin{lemma}\label{lem.splitJStwoSoundness}
	If there is a split \JStwo proof of $\Gamma \| \Delta$, then there is a regular \JStwoInfty proof of $\Gamma, \Delta$.
\end{lemma}
\begin{proof}
	Let $\pi$ be a split \JStwo proof of $\epsilon \proves \Gamma \| \epsilon \proves \Delta$.  Let $\pi^*$ be the infinite unfolding of $\pi$ -- the split \JStwo derivation obtained from $\pi$ by recursively replacing every discharged leaf $l$ with the subtree of $\pi$ rooted at the child node of $c(l)$.

	We inductively translate $\pi^*$ to a \JStwo derivation $\rho$ of $\Gamma,\Delta$, such that every node labelled by $\theta \proves \Sigma \| \kappa \proves \Pi$ in $\pi^*$ is translated to a node in $\rho $ labelled by $\zeta \proves \Sigma, \Pi$, where $\zeta$ is a merger of $\theta$ and $\kappa$. 
	
	This can be achieved by translating all rules of the form $\Ru^l$ and $\Ru^r$ to the corresponding rule $\Ru$ and split axioms to axioms \AxLit.
	It remains to show that $\rho$ is a \JStwoInfty proof, in other words, that every infinite path $\alpha$ in $\rho$ is successful. 
	Because $\pi$ is a split \JStwo proof, on every infinite path $\beta$ in $\pi^*$ there is a name $\dx$ and $d = l,r$ such that 
	\begin{enumerate}
		\item $\dx$ occurs in the control of $\Sigma^d$ on cofinitely many sequents $\Sigma$ on $\beta$ and
		\item there are infinitely many applications of $\RuReset[]_\dx^d$ on $\beta$.
	\end{enumerate} 
Therefore the analogous definition holds for every infinite path $\alpha$ in $\rho$ and thus $\rho$ is a $\JStwoInfty$ proof of $\Gamma,\Delta$. By construction $\rho$ is regular.
\end{proof}

In the soundness proof it sufficed to translate split \JStwo proofs to \JStwoInfty
proofs. The converse translation from \JStwoInfty proofs to split \JStwo proofs is more tricky, as we have to choose in which component formulas are put. 

\begin{lemma}\label{lem.JStwoSplitCompleteness}
If there is a regular \JStwoInfty proof of a pure sequent $\Gamma,\Delta$, then there is a split \JStwo proof of $\Gamma \| \Delta$.
\end{lemma}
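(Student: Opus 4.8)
The plan is to build the split proof on the infinite unfolding and then fold it back, so that I never have to match split sequents at companions by hand. First I would pass from $\rho$ to a regular $\JStwoInfty$ proof $\rho^*$ of $\Gamma,\Delta$ (Lemma \ref{lem.InftyJSiffJS}), and by Lemma \ref{lem.JStwoQfEmptyAnn} assume that every fixpoint-free formula in $\rho^*$ carries the empty annotation, so that any such formula is available as a cut formula in the shape $\phi^\epsilon$ demanded by \RuCut. I would then transform $\rho^*$, node by node from the root upwards, into a regular infinite split derivation $\pi^*$, refining each sequent by a left/right assignment that starts from the root split $\Gamma\|\Delta$. Throughout I maintain the \emph{closure invariant} that every left formula lies in $\ClosN(\Gamma)$ and every right formula in $\ClosN(\Delta)$; since $\rho$ is analytic this is consistent, and it is exactly this invariant that will certify the cuts below as globally analytic. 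On the infinite tree there are no companions to respect, so the only tasks are to replace each rule by a one-sided instance and to keep every leaf axiomatic.

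For every rule of $\rho^*$ other than \RuDia the translation is immediate. If the rule has a principal formula it lies in exactly one component, and I apply the matching left or right instance, sending every formula and trace atom freshly produced in the premiss to that same component; rules without a principal formula act on a single side and are treated likewise. None of \RuAnd, \RuOr, \RuMu, \RuNu, \RuTrans, \RuWeak, \RuExp, \RuReset, \RuCut, \RuTCut, nor any axiom apart from \AxLit, mixes the two sides, so both the split structure and the closure invariant are preserved. A complementary pair that has been split across the two components still closes its leaf, since the split \AxLit is explicitly permitted to match a left formula with its negation on the right.

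The one genuine obstacle is the modal rule, and this is where the globally analytic cuts enter. Suppose $\rho^*$ applies \RuDia with principal diamond $\ldiap\phi$ which, without loss of generality, sits in the left component, so that I intend to use $\mathsf{R}_{\ldiap}^l$. In $\rho^*$ the premiss contains, for each formula $\gamma$ of the non-box rest, the converse $\ldiap[\conv{a}]\gamma$ whenever $\ldiap[\conv{a}]\gamma\in\ClosN(\Psi)$; but the split rule produces this converse on a side only if $\gamma$ lies on that side \emph{and} $\ldiap[\conv{a}]\gamma$ belongs to that side's closure. Since $\ClosN(\Psi)=\ClosN(\Psi^l)\cup\ClosN(\Psi^r)$, a converse with $\gamma$ on the left but $\ldiap[\conv{a}]\gamma\in\ClosN(\Psi^r)\setminus\ClosN(\Psi^l)$ is dropped by $\ldiap[\conv{a}]^l\Gamma^{\epsilon}$ and, its source lying on the left, is not produced by $\ldiap[\conv{a}]^r\Delta^{\epsilon}$ either, leaving a premiss weaker than that of $\rho^*$ and possibly unprovable. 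To repair this I insert, directly below the modal rule, a $\RuCut^r$ on $\mybar{\gamma}^\epsilon$: the branch carrying $\mybar{\gamma}$ on the right closes at once by split \AxLit against the left occurrence of $\gamma$, while the other branch places a copy of $\gamma$ in the right rest, after which $\mathsf{R}_{\ldiap}^l$ generates $\ldiap[\conv{a}]\gamma$ via $\ldiap[\conv{a}]^r\Delta^{\epsilon}$. As $\ldiap[\conv{a}]\gamma\in\ClosN(\Psi^r)$ forces $\mybar{\gamma}\in\ClosN(\Psi^r)\subseteq\ClosN(\Delta)$, this cut is globally analytic. Doing this for every offending $\gamma$ (and symmetrically with $\RuCut^l$ when the sides are reversed), together with an analogous use of \RuTCut to realign each converse trace atom of $\Gamma^{\ldiap\phi}$ whose target box lands in the opposite closure, produces a split premiss whose two components jointly cover the premiss of $\rho^*$; the recursion then continues on the subderivation above it. Note that these duplications live only at the inserted cut nodes and are consumed by the very next modal rule, so the split assignment at nodes inherited from $\rho^*$ stays clean.

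It remains to fold $\pi^*$ into a finite proof. Since $\pi^*$ is regular, only finitely many split sequents occur, and since every branch of $\rho^*$ is successful, so is every branch of $\pi^*$: the name witnessing success of a branch is introduced on one side and, because duplications copy formulas with the empty annotation, it annotates formulas on that fixed side cofinitely, making the branch left- or right-successful. Folding by the method of Lemmas \ref{lem.InftyJSiffJS} and \ref{lem.JSinvariant} — choosing recurrence points of the split sequent across a left- or right-successful stretch that also satisfies the initial-segment condition, and inserting the corresponding discharge rule — yields a finite, invariant split \JStwoCutG proof $\pi$ of $\Gamma\|\Delta$, so in particular $\pi$ is invariant whenever $\rho$ is. The closure invariant guarantees that every cut formula lies in $\ClosN(\Gamma)$ or $\ClosN(\Delta)$, so all cuts are globally analytic. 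The hard part will be the modal case: one must verify simultaneously that the cut-and-axiom duplication recovers every converse formula and converse trace atom present in the premiss of $\rho^*$, and that each inserted cut stays globally analytic by appeal to the closure invariant.
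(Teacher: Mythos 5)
Your overall architecture (unfold to a regular \JStwoInfty proof, translate node by node into a split infinite derivation maintaining the closure invariant, then fold back) differs from the paper, which works directly on the finite cyclic proof and preserves repeats by making the split \emph{canonical} (outside of \RuExp, the left component is exactly $\Sigma \cap \ClosN(\Gamma)$), but that architectural choice is workable. The genuine gap is in your treatment of trace atoms in the modal case. Your ``analogous use of \RuTCut to realign each converse trace atom'' cannot be carried out: in the split system, \AxTraceNeg may only be applied when a trace atom and its negation occur in the \emph{same} component --- the paper states explicitly that, unlike \AxLit, there is no cross-split interaction for trace atoms. So after cutting with $\RuTCut^r$ on $\phi \trace[k] \psi$, the auxiliary branch carries $\phi \ntrace[k] \psi$ on the right against $\phi \trace[k] \psi$ on the left and has no way to close; it remains an open leaf. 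The cut-and-\AxLit trick works for formulas precisely because \AxLit is permitted to match $\gamma$ on the left with $\mybar{\gamma}$ on the right; there is no analogue for trace atoms.

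The paper avoids this problem rather than solving it, via two reductions you omit: (i) it first $\alpha$-renames so that the bound variables of $\Gamma$ and $\Delta$ are disjoint, which forces every formula in $\ClosN(\Gamma)\cap\ClosN(\Delta)$ to be fixpoint-free (and translates the renaming back at the end); (ii) it assumes, from the provenance of $\rho$ through the \NWtwo/\JStwo completeness construction, that all trace atoms are \emph{relevant}, i.e.\ both sides contain fixpoints. Together these guarantee that every trace atom lies wholly in $\ClosN(\Gamma)$ or wholly in $\ClosN(\Delta)$, so no trace atom of $\Lambda^{\ldiap\phi}$ ever needs realignment, and the offending $\gamma$ in the modal case is always fixpoint-free (hence, by Lemma~\ref{lem.JStwoQfEmptyAnn}, carries the empty annotation needed for the cut). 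These reductions are also what licenses your blanket claim that rules without a principal formula ``act on a single side'': for \RuJump (and \RuTrans, \AxTraceNeg) this holds only because $\phi^\sigma$, $\psi^\tau$ and $\phi \trace[k] \psi$ all lie in the same closure, which is exactly the relevance-plus-disjointness argument. Without (i) and (ii) your closure invariant does not prevent straddling trace atoms, and your modal and \RuJump cases break; with them, the \RuTCut step you propose becomes unnecessary. A smaller point: your folding step needs $\pi^*$ to be regular, which does not follow immediately from regularity of $\rho^*$ since your split assignment is history-dependent; it does follow because the pair (subtree of $\rho^*$, split sequent) ranges over a finite set, but this deserves a sentence.
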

\begin{proof}
	%We use the completeness of \JStwo proofs Theorem \ref{thm.JStwoSoundnessCompleteness}. 
Let $\rho$ be a \JStwoInfty proof of $\Gamma, \Delta$. 
We first translate $\rho$ to a split \JStwo derivation $\pi$ of $\Gamma \| \Delta$.
For the time being we assume that the bound variables in $\Gamma$ and $\Delta$ are disjoint.
Therefore all formulas in $\ClosN(\Gamma) \cap \ClosN(\Delta)$ are fixpoint-free.

In the completeness proof of \NWtwo an \NWtwo proof $\rho'$ was constructed such 
that all trace atoms  $\phi \trace[k] \psi$ in $\rho'$ are relevant, meaning that 
(i) $\psi \in \ClosN(\phi)$ and (ii) $\phi$ and $\psi$ contain fixpoints. 
In the completeness proof of \JStwo we translated $\rho'$ to a \JStwo proof
$\rho$ without adding extra trace atoms. 
Thus we may assume for every trace atom $\phi \trace[k] \psi$ in $\rho$ that $\phi,\psi \notin \ClosN(\Gamma) \cap \ClosN(\Delta)$ and either $\phi,\psi \in \ClosN(\Gamma)$ or $\phi,\psi \in 
\ClosN(\Delta)$. 
For simplicity, we write $\phi \trace[k] \psi \in \ClosN(\Sigma)$ in the case
that $\phi,\psi \in \ClosN(\Sigma)$.
	
%Lets put all of this together: Let $\rho$ be an \JStwo proof of $\Gamma \cup \Delta$ satisfying the above restrictions. 
We inductively translate $\rho$ to a split \JStwo derivation $\pi$ of 
$\Gamma \| \Delta$, such that every node $u$ labelled by $\theta \proves \Sigma$
in $\rho$ is translated to a node $v$ (possibly with some additional nodes) in $\pi$ labelled by $\theta^l \proves \Sigma^l \| \theta^r \proves \Sigma^r$ such that 
	\begin{enumerate}
		\item $\Sigma = \Sigma^l \cup \Sigma^r$, 
		\item $\Sigma^l \subseteq \ClosN(\Gamma)$ and $\Sigma^r \subseteq \ClosN(\Delta)$,
		\item $\Sigma^l \cap \Sigma^r = \nada$,
		\item $\theta^l \sqsubseteq \theta$ consists of exactly all names occurring in $\Sigma^l$,
		\item $\theta^r \sqsubseteq \theta$ consists of exactly all names occurring in $\Sigma^r$.
	\end{enumerate}
	
The root $\epsilon \proves \Gamma, \Delta$ is translated to \[
\begin{prooftree}
	\hypo{\epsilon \proves 
		\Gamma \| \epsilon \proves \Delta\setminus\Gamma}
	\infer[left label= $\RuWeak^r$]1{\epsilon \proves \Gamma \| \epsilon \proves \Delta}
\end{prooftree}\]
For every rule in $\rho$ we apply a corresponding left or right rule in $\pi$. By a case distinction on the applied rule we show how to satisfy conditions 1 and 2.
%, where we focus on the critical rules, the other cases can be found in Appendix \ref{app.splitProofs}.
	\begin{itemize}
		\item \AxLit can either be translated to $\AxLit^l$, $\AxLit^r$ or to a split axiom $\AxLit'$, depending on in which components the formulas $\phi$ and $\mybar{\phi}$ are located.
		
		\item Assume that in $\rho$ the following \RuDia rule is applied:
		\[\begin{prooftree}
			\hypo{\theta \proves\phi^\sigma,\Sigma, \ldiap[\conv{a}]\Lambda^{\epsilon}, \Lambda^{\ldiap\phi},\Pi,\ldiap[\conv{a}]\Theta^{\epsilon},\Theta^{\ldiap\phi}}
			\infer[left label= $\mathsf{R}_{\ldiap}$]1[]{\theta \proves\ldiap \phi^\sigma, \lboxp \Sigma, \Lambda, \lboxp \Pi,\Theta}
		\end{prooftree}\]
		 Let the split of the translation of the conclusion in $\pi$ be 
		 \[ \Psi = \theta^l \proves \ldiap \phi^\sigma, \lboxp \Sigma, \Lambda \| \theta^r \proves \lboxp \Pi,\Theta.\]
		 Let $\Psi^l$ be the left, and $\Psi^r$ be the right component of $\Psi$. If we just try to apply $\Ru_{\ldiap}^l$ to $\Psi$ this will not work: It could be that there is $\gamma^{\tau} \in \Lambda\setminus \Theta$ and $\ldiap[\conv{a}] \gamma^\epsilon \in \ldiap[\conv{a}] \Lambda^{\epsilon}$ such that $\ldiap[\conv{a}] \gamma \in \ClosN(\Psi)$ but $\ldiap[\conv{a}] \gamma \notin \ClosN(\Psi^l)$. Hence, $\ldiap[\conv{a}] \gamma^\epsilon$ would be added neither in the left nor the right component of the premiss of $\Ru_{\ldiap}^l$, yet in $\rho$ the formula is added to the premiss of $\RuDia$.
				 
		In this case we must have $\ldiap[\conv{a}] \gamma \in \ClosN(\Psi^r)$. 
		Thus $\gamma \in \ClosN(\Psi^r)$ as well, and thence $\gamma \in \ClosN(\Lambda)
		\cap \ClosN(\Theta)$. 
		This yields that $\gamma$ is fixpoint-free. For any such $\gamma$ we apply a 
		$\RuCut^r$ rule with cut-formula $\gamma$, where $\Lambda = 
		\Lambda',\gamma$:\footnote{In addition we implicitly weakened all unimportant side-formulas in the left premiss and applied \RuExp rules to obtain the split sequent $\epsilon \proves \gamma^\epsilon \| \epsilon \proves \mybar{\gamma}^\epsilon$.}
		\[\begin{prooftree}
		\hypo{\gamma \| \mybar{\gamma}}
		\hypo{\theta^l \proves\ldiap \phi^\sigma, \lboxp \Sigma, \Lambda', \gamma^\tau 
		   \| \theta^r \proves \lboxp \Pi,\Theta, \gamma^{\epsilon}}
		\infer[left label = $\RuCut^r$]2[]{
		   \theta^{l} \proves\ldiap \phi^\sigma, \lboxp \Sigma, \Lambda', \gamma^\tau 
		   \| \theta^{r} \proves \lboxp \Pi,\Theta}
		 \end{prooftree}\]
	 	Applying the modal rule will now make $\ldiap[\conv{a}] \gamma $ land in the proper (right) component of the premiss.
	 	Likewise, applying a $\RuCut^l$ rule for every $\ldiap[\conv{a}]\delta^\epsilon \in \ldiap[\conv{a}] \Theta^\epsilon$, where $\ldiap[\conv{a}]\delta \in \ClosN(\Psi^l)\setminus \ClosN(\Psi^r)$ yields a split sequent, where we may apply $\Ru_{\ldiap}^l$ and satisfy conditions 1 -- 2.
	 	
	 	For trace atoms $\gamma \trace[k] \chi$  (and negated trace atoms $\gamma \ntrace[k] \chi$) occurring in $\Lambda^{\ldiap\phi}$ this is not a problem, as there are no trace atoms where $\gamma$ is fixpoint-free.
	 	
		\item Any discharge rule $\RuDischarge{\dd}$ is deleted. 
		\item \RuReset[] rules: By induction on $\pi$ we can show that in every annotated sequent no name $\dx$ occurs in both the left and the right component. This holds as only fresh names are introduced and in no rule do names cross the split. Thus \RuReset[] can always be translated to either $\RuReset[]^l$ or $\RuReset[]^r$.
		\item If the applied rule is \RuCut (or \RuTCut), add $\phi$ and $\mybar{\phi}$ (or $\phi \trace[k] \psi$ and $\phi \ntrace[k] \psi$) to the respective left components if $\phi$ is in $\ClosN$ of the left component of the conclusion and to the respective right components otherwise. 
		\item In the rule \RuJump it holds that $\phi^{\sigma}, \psi^{\tau}, \phi \trace[k] \psi$ are all either in $\ClosN(\Gamma)$ or in $\ClosN(\Delta)$, since all trace atoms are relevant. Similarly, for \RuTrans and \AxTraceNeg all explicitly written formulas in its conclusion belong to the same component of the sequent.
		\item All other rules have only one explicitly written formula in the conclusion and thus can easily be translated to a left or right rule.
	\end{itemize}
	Condition 3 can easily be satisfied by applying $\RuWeak^r$ if necessary. In order to satisfy conditions 4 and 5 we apply $\RuExp^l$ and $\RuExp^r$ rules whenever necessary.
	Thus we obtain a split \JStwo derivation satisfying the specified conditions.
	
	Next we want to fold the split \JStwo derivation $\pi$ into a split \JStwo proof. To do so the following claim is crucial:
	
	\claim{1}
	Let $\beta$ be an infinite branch in $\pi$. Then there is a name $\dx$ and $d = l,r$ such that 
	\begin{enumerate}
		\item $\dx$ occurs in the control of $\Sigma^d$ on cofinitely many sequents $\Sigma$ on $\beta$ and
		\item there are infinitely many applications of $\RuReset[]_\dx^d$ on $\beta$.
	\end{enumerate} 

	\claimproof{1}
	Let $\alpha$ be the corresponding infinite path of $\beta$ in $\rho$. Because $\alpha$ is successful there is a name $\dx$ such that 
	\begin{enumerate}
		\item $\dx$ occurs in the control of $\Sigma$ on cofinitely many sequents $\Sigma$ on $\alpha$ and
		\item there are infinitely many applications of $\RuReset[]_\dx$ on $\alpha$.
	\end{enumerate} 
	We can show inductively that $\dx$ either occurs in cofinitely many left, or in cofinitely many right components of $\beta$.
	If $\dx$ occurs in cofinitely many left components of $\beta$, then infinitely many $\RuReset[\dx]$ rules are translated to $\RuReset[]_{\dx}^l$ rules in $\pi$.  Analogously, if $\dx$ occurs in cofinitely many right components of $\beta$.
	\claimproofend
	
	Using Claim 1 we can fold $\rho$ into a split \JStwo proof using an analogous argument as in the proof of Lemma \ref{lem.InftyJSiffJS}.

	\smallskip
	Lastly we deal with the general case, where $\Gamma$ and $\Delta$ may share bound variables. Let $\Gamma'$ be an $\alpha$-equivalent sequent of $\Gamma$, where all bound variables in $\Gamma'$ and $\Delta$ are disjoint; for example replace every bound variable in $\Gamma$ by a fresh new variable not occurring in either $\Gamma$ or $\Delta$.
	By the above reasoning we obtain a split $\JStwo$ proof $\pi'$ of $\Gamma'\| \Delta$. In $\pi'$ we can translate back all newly introduced bound variables. This yields a split \JStwo proof $\pi$ of $\Gamma\| \Delta$.
\end{proof}

\begin{theorem}\label{thm.JStwoSplitSoundComplete}
	$\Gamma,\Delta$ is unsatisfiable iff there is a split \JStwo proof of $\Gamma \| \Delta$ for any pure sequents $\Gamma$ and $\Delta$.
\end{theorem}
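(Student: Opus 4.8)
The plan is to treat this statement as a corollary, assembling it by chaining the soundness and completeness results already obtained for the split system together with the corresponding results for the ordinary cyclic system and the reduction of globally analytic cuts to analytic ones. There are two directions, and in each the only genuine work is to check that the qualifier \emph{invariant} survives every translation in the chain; the substantive combinatorics have already been discharged in Lemmas~\ref{lem.JStwoSplitCompleteness} and~\ref{lem.analyticGlobLoc}.

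For the direction from unsatisfiability to the existence of a proof, I would argue as follows. Assume $\Gamma,\Delta$ is unsatisfiable. By Theorem~\ref{thm.JStwoSoundnessCompleteness} there is a \JStwo proof of $\Gamma,\Delta$, and by Lemma~\ref{lem.JSinvariant} this proof may be taken to be invariant; call it $\rho$. Feeding $\rho$ into Lemma~\ref{lem.JStwoSplitCompleteness} produces a split \JStwoCutG proof $\pi$ of $\Gamma \| \Delta$, and since $\rho$ is invariant so is $\pi$. Finally, Lemma~\ref{lem.analyticGlobLoc} converts $\pi$ into a split \JStwo proof of $\Gamma \| \Delta$ that is again invariant because $\pi$ is. This yields the desired invariant split \JStwo proof.

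For the converse, suppose we are given an invariant split \JStwo proof of $\Gamma \| \Delta$. Forgetting the splitting, Lemma~\ref{lem.JStwoIfSplitJStwo} turns it into a \JStwo proof of $\Gamma,\Delta$, whence Theorem~\ref{thm.JStwoSoundnessCompleteness} gives unsatisfiability of $\Gamma,\Delta$ directly. Here the invariance hypothesis is not even needed, so this half is immediate.

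The step I expect to be most delicate is not in the present argument but in being entitled to invoke the chain above: one must confirm that the notion of invariance used in Lemma~\ref{lem.JSinvariant} (for the unsplit system), the invariance inherited by the output of Lemma~\ref{lem.JStwoSplitCompleteness} (where a \JStwoCutG proof keeps its controls, and hence its repeat invariants, through the component-splitting translation), and the invariance preserved by Lemma~\ref{lem.analyticGlobLoc} (where the locally proved subtrees are spliced into the analytic part $\pi_r$ without disturbing the initial segments witnessing each repeat) all refer to the same condition on controls. Once this coherence is confirmed, the theorem follows purely by composition, with no new construction required.
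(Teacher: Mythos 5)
Your proposal is correct and follows essentially the same route as the paper, which likewise proves soundness via Lemma~\ref{lem.JStwoIfSplitJStwo} plus Theorem~\ref{thm.JStwoSoundnessCompleteness} and completeness by chaining Lemma~\ref{lem.JStwoSplitCompleteness} and Lemma~\ref{lem.analyticGlobLoc} with the completeness of ordinary \JStwo proofs. Your explicit invocation of Lemma~\ref{lem.JSinvariant} to obtain an invariant \JStwo proof before splitting is a point the paper leaves implicit but is indeed needed for the invariance clauses of the two lemmas to fire, so your version is, if anything, slightly more careful.
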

\begin{proof}
	Lemma \ref{lem.JSinftyIffNW} together with Theorem \ref{thm.SoundnessNWtwo} and Theorem \ref{thm.CompleteenessNWtwo} yields that $\Gamma,\Delta$ is unsatisfiable iff there is a regular \JStwoInfty proof of $\Gamma,\Delta$.
	The soundness of split \JStwo proofs follows from Lemma \ref{lem.splitJStwoSoundness} and the completeness from Lemma \ref{lem.JStwoSplitCompleteness}.
\end{proof}

\section{Interpolation}\label{sec.interpolation}

In the previous section we saw that a sequent $\Gamma, \Delta$ is 
unsatisfiable iff there is a split \JStwo proof $\pi$ of $\Gamma \| \Delta$.
Given such a $\pi$ we will define an interpolant $I$ and construct split proofs $\pi^l$ of
$\Gamma \| I$ and $\pi^r$ of $\mybar{I} \| \Delta$.

%An \emph{interpolant} of $\phi$ and $\psi$ is a formula $I$, such that 
%\begin{enumerate}
%\item $\phi \modImpl I$ and $I \modImpl \psi$ and
%\item $\Voc(I) \subseteq \Voc(\phi) \cap \Voc(\psi)$.
%\end{enumerate}

\begin{theorem}[Craig interpolation]\label{thm.CraigInterpolation}
Let $\phi$ and $\psi$ be two $\muMLtwo$ formulas such that $\phi \modImpl \psi$.
Then there is an interpolant of $\phi$ and $\psi$.
\end{theorem}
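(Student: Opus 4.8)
The plan is to reduce Craig interpolation to the split-proof machinery of Section~\ref{sec.splitJS}. First I would observe that the local consequence $\phi \modImpl \psi$ is equivalent to the unsatisfiability of the sequent $\{\phi, \mybar{\psi}\}$. One direction is immediate; for the converse I would invoke the positional determinacy of the evaluation game together with the standard duality of the negation operation, namely that $\eloi$ wins at $(\mybar{\psi},s)$ exactly when $\abel$ wins at $(\psi,s)$. Consequently $\bbS,s \sat \psi$ fails precisely when $\bbS,s \sat \mybar{\psi}$ holds, so a pointed model witnessing the failure of $\phi \modImpl \psi$ is exactly a model with $\bbS,s \sat \{\phi,\mybar{\psi}\}$. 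Hence $\phi \modImpl \psi$ holds iff $\{\phi, \mybar{\psi}\}$ is unsatisfiable, and by Theorem~\ref{thm.JStwoSplitSoundComplete} this yields an invariant split \JStwo proof $\pi$ of $\phi \| \mybar{\psi}$.

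The heart of the argument is then a leaf-to-root induction on $\pi$ (a version of Maehara's method) assigning to every node, labelled by a split sequent $\theta \proves \Gamma \| \Delta$, an interpolant $I$ such that (i) there are split proofs of $\Gamma \| I$ and of $\mybar{I} \| \Delta$, and (ii) $\Voc(I) \subseteq \Voc(\Gamma) \cap \Voc(\Delta)$. Evaluating this invariant at the root $\phi \| \mybar{\psi}$ produces an interpolant $I$ together with split proofs $\pi^l$ of $\phi \| I$ and $\pi^r$ of $\mybar{I} \| \mybar{\psi}$. By Lemma~\ref{lem.JStwoIfSplitJStwo} and the soundness direction of Theorem~\ref{thm.JStwoSoundnessCompleteness}, the sequents $\{\phi, I\}$ and $\{\mybar{I}, \mybar{\psi}\}$ are unsatisfiable; translating back through determinacy gives $\phi \modImpl \mybar{I}$ and $\mybar{I} \modImpl \psi$. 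Setting $\theta \isdef \mybar{I}$, we have $\phi \modImpl \theta$, $\theta \modImpl \psi$, and $\Voc(\theta) = \Voc(I) \subseteq \Voc(\phi) \cap \Voc(\psi)$ (negation preserving vocabulary), so $\theta$ is the desired interpolant, and the Beth property follows as a routine corollary.

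The main obstacle is the inductive construction underlying property (i)--(ii), that is, making Maehara's method work in this nonstandard setting, and three features must be accommodated at once. Since $\pi$ is cyclic, a discharged leaf carries no axiomatic interpolant; following Afshari \& Leigh~\cite{Afshari2019} and Marti \& Venema~\cite{mart:focu21}, I would attach a fresh fixpoint variable to each discharged leaf as a pre-interpolant and bind it at the companion with a $\mu$- or $\nu$-operator, reading off the fixpoint type from the resetting name that witnesses success of the repeat. Since $\pi$ uses \RuCut and \RuTCut, the cut formula (or trace atom) sits on one side of the split and must be folded into that branch's interpolant by a conjunction or disjunction, in the style of Kowalski \& Ono~\cite{kowa:anal17}; the global-versus-analytic distinction is already neutralised by Lemma~\ref{lem.analyticGlobLoc}, so only analytic cuts need be treated. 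The genuinely new difficulty is the modal rule \RuDia with converse modalities and trace atoms: here the left and right components interact, so the interpolant must be prefixed by suitable box/diamond modalities (and their converses) in a way that both keeps $\Voc(I)$ inside the shared vocabulary and preserves the trace-atom bookkeeping certifying success of the infinite branches of $\pi^l$ and $\pi^r$. Verifying that the constructed $\pi^l$ and $\pi^r$ are genuine split proofs, with every leaf closed and, after folding, every repeat witnessed by an appropriate name, is where the bulk of the technical work lies.
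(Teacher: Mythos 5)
Your proposal is correct and takes essentially the same route as the paper: it reduces $\phi \modImpl \psi$ to unsatisfiability of $\{\phi, \mybar{\psi}\}$, invokes Theorem~\ref{thm.JStwoSplitSoundComplete} to obtain an invariant split \JStwo proof of $\phi \| \mybar{\psi}$, and then runs Maehara's leaf-to-root induction with fixpoint variables at discharged leaves bound at companions, cuts folded in by conjunction/disjunction, and modal prefixing at \RuDia --- which is precisely the content and proof strategy of Lemma~\ref{lem.Interpolation}, from which the paper derives the theorem, ending with the same interpolant $\mybar{I}$. The only difference is cosmetic: you make explicit the determinacy argument identifying the failure of $\bbS,s \sat \psi$ with $\bbS,s \sat \mybar{\psi}$, which the paper leaves implicit.
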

\begin{proof}
Follows from Lemma \ref{lem.Interpolation}.
\end{proof}

\begin{corollary}[Beth definability]
	Let $p,q \in \Prop$ and let $\phi(p)$ be a $\muMLtwo$ formula. If\footnote{Here $\phi(q)$ is an abbreviation of $\phi(p)\subst{q}{p}$} $\phi(p),\phi(q) \modImpl p \liff q$, then there is a formula $\chi$ with $\Voc(\chi) \subseteq \Voc(\phi)\setminus\{p\}$ and $\phi(p) \modImpl p \liff \chi$.
\end{corollary}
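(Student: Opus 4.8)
The plan is to derive Beth definability from Craig interpolation (Theorem~\ref{thm.CraigInterpolation}) in the textbook fashion, the only points requiring attention being the manipulation of the local consequence relation $\modImpl$ and a single substitution step. Throughout I would assume, as is implicit in writing $\phi(q) = \phi(p)\subst{q}{p}$, that $q$ is fresh, i.e. $q \notin \Voc(\phi(p))$; consequently $\Voc(\phi(q)) = (\Voc(\phi)\setminus\{p\})\cup\{q\}$. I also read the comma in the hypothesis conjunctively, so that it reads $\phi(p) \land \phi(q) \modImpl p \liff q$.

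First I would record the key observation that, since $\modImpl$ is a \emph{local} consequence relation and the point-semantics is classical and determined, any implication of the form $\alpha \land \beta \modImpl \gamma \lor \delta$ holds iff $\alpha \land \mybar{\delta} \modImpl \mybar{\beta} \lor \gamma$, because both are equivalent to the unsatisfiability of $\alpha \land \beta \land \mybar{\gamma} \land \mybar{\delta}$ at a point. Taking the $p \impl q$ conjunct of the hypothesis gives $\phi(p) \land \phi(q) \modImpl \mybar{p} \lor q$, and the above rearrangement (with $\beta = \phi(q)$ and $\gamma = \mybar{p}$) converts this into $\phi(p) \land p \modImpl \mybar{\phi(q)} \lor q$. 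Next I would apply Theorem~\ref{thm.CraigInterpolation} to this last implication, obtaining a formula $\chi$ with $\phi(p) \land p \modImpl \chi$ and $\chi \modImpl \mybar{\phi(q)} \lor q$, whose vocabulary lies in the common vocabulary of the two sides. As $\Voc(\phi(p)\land p) = \Voc(\phi)$ and $\Voc(\mybar{\phi(q)} \lor q) = (\Voc(\phi)\setminus\{p\})\cup\{q\}$, and since $q$ is fresh while $p$ does not occur on the right, the common vocabulary is exactly $\Voc(\phi)\setminus\{p\}$; this already secures the required constraint $\Voc(\chi) \subseteq \Voc(\phi)\setminus\{p\}$.

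It then remains to assemble the two halves into $\phi(p) \modImpl p \liff \chi$. From $\phi(p) \land p \modImpl \chi$ the rearrangement yields directly $\phi(p) \modImpl \mybar{p} \lor \chi$, one direction of the biconditional. For the converse I would start from $\chi \modImpl \mybar{\phi(q)} \lor q$, equivalently $\chi \land \phi(q) \modImpl q$, and apply the substitution $\subst{p}{q}$: since $q \notin \Voc(\chi)$, $\phi(q)\subst{p}{q} = \phi(p)$ and $q\subst{p}{q} = p$, this produces $\chi \land \phi(p) \modImpl p$, i.e. $\phi(p) \modImpl \mybar{\chi} \lor p$. Conjoining the two implications gives $\phi(p) \modImpl p \liff \chi$, which together with the vocabulary bound completes the proof.

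The step I expect to need the most care is the substitution $\subst{p}{q}$ applied to a semantic consequence, i.e. the closure of $\modImpl$ under uniform substitution of propositions. The hard part will be to spell out why this is sound, but it follows from the fact that a valuation assigns each proposition an arbitrary subset: given any pointed model $\bbS,s$ witnessing $\chi \land \phi(p)$, one passes to the model $\bbS'$ that agrees with $\bbS$ except for setting $V'(q) \isdef V(p)$. Since neither $\chi$ nor $\phi(p)$ contains $q$, their truth at $s$ is unaffected, while $\phi(q)$ evaluated under $V'$ coincides with $\phi(p)$ evaluated under $V$; invoking $\chi \land \phi(q) \modImpl q$ at $\bbS',s$ then forces $s \in V'(q) = V(p)$, i.e. $\phi(p) \modImpl p$ along this substitution as claimed. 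The material-implication rearrangements used earlier are justified by the same pointwise classical reasoning and are routine.
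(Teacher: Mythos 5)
Your proposal is correct and follows exactly the paper's route: the paper's entire proof is the instruction ``apply Craig interpolation to $\phi(p), p \modImpl \phi(q) \impl q$'', and you carry out precisely this application, with the standard follow-up steps (vocabulary computation and the substitution $\subst{p}{q}$ justified by re-interpreting $V(q)$ as $V(p)$) spelled out. Your explicit freshness assumption $q \notin \Voc(\phi(p))$ and the closure of $\modImpl$ under uniform substitution are exactly the details the paper leaves implicit, and both are handled soundly.
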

\begin{proof}
	Apply Craig interpolation to $\phi(p), p \modImpl \phi(q) \impl q$.
\end{proof}

\begin{comment}
Let $\theta \proves \Lambda \| \Pi$ be a split sequent occurring in a split \JStwo proof. 
Then no name $\dx$ in $\theta$ occurs both in $\Lambda$ and $\Pi$. 
To see that this holds, recall that the root of a split \JStwo proof of an (unannotated) split sequent $\Gamma \| \Delta$ is labelled by $\epsilon \proves \Gamma^\epsilon \| \Delta^\epsilon$. 
A straightforward induction (as in the proof of Lemma \ref{lem.JStwoSplitCompleteness}) will show that for any node in a split \JStwo proof labelled by $\theta \proves \Lambda \| \Pi$ no name $\dx$ in $\theta$ occurs both in $\Lambda$ and $\Pi$. 

We define $\theta^l$ and $\theta^r$
to be the subwords of $\theta$ consisting of all names in $\theta$ that occur 
in the left component $\Lambda$ and the right component $\Pi$, respectively.
\end{comment}

Let $\pi$ be a proof and $u$ be a companion node in $\pi$. A \emph{strongly connected subgraph} $S$ of $\pi$ is a strongly connected subgraph of $\calT_{\pi}^C$. 
The \emph{strongly connected subtree} $\scs(u)$ of $u$ is the maximal strongly connected subgraph $S$ of $\pi$, such that $u \in S$ and all nodes $v \in S$ are descendants of $u$ in $\calT_{\pi}$.

\begin{lemma}\label{lem.Interpolation}
Let $\pi$ be a split \JStwo proof of $\Gamma \| \Delta$. 
Then there is a formula $I$ such that $\Voc(I) \subseteq \Voc(\Gamma) \cap
\Voc(\Delta)$ and for which there are split \JStwo proofs $\pi^l$ of $\Gamma \| I$
and $\pi^r$ of $\mybar{I} \| \Delta$ .
\end{lemma}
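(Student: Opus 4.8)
The plan is to adapt Maehara's method~\cite{maeh:crai61} to the two-sided, annotated and cyclic setting of \JStwo, following the strategy used for cyclic systems in~\cite{Afshari2019,mart:focu21}. Concretely, I would argue by induction on the finite tree $\calT_\pi$, processing nodes from the leaves towards the root and assigning to every node $v$ labelled by $\theta \proves \Lambda_v \| \Pi_v$ an annotated \emph{interpolant} $I_v$ with $\Voc(I_v) \subseteq \Voc(\Lambda_v) \cap \Voc(\Pi_v)$, together with two partial split \JStwo derivations, one of $\Lambda_v \| I_v$ and one of $\mybar{I_v} \| \Pi_v$. The interpolant of the lemma is then $I \isdef I_r$ at the root $r$, and the two derivations at $r$ are the desired proofs $\pi^l$ and $\pi^r$. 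The bound variables of the interpolant range over $\Var$ and hence never enter $\Voc$, so the vocabulary bound is maintained as soon as it holds for the subformulas; the annotations placed on $I_v$ are governed by the subwords $\theta^l$ and $\theta^r$ defined above, so that names flow consistently into the component carrying the interpolant.

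For the axioms and the non-cyclic rules the construction is the classical Maehara induction, read refutationally. At \AxLit the interpolant is $\top$, $\bot$, or the shared literal, according to whether both complementary literals lie in $\Lambda_v$, both in $\Pi_v$, or one on each side; at \AxBot and the trace axioms it is $\bot$ or $\top$ depending on the active component. A unary left rule leaves $\Pi_v$ untouched and simply propagates the premise interpolant, while the branching rules $\RuOr$, $\RuCut$ and $\RuTCut$ combine the two premise interpolants conjunctively when applied on the left and disjunctively when applied on the right; this is sound precisely because sequents are read conjunctively. Analytic cut causes no trouble with the vocabulary constraint: in a left cut the cut formula lies in $\ClosN(\Lambda_v)$ and so contributes nothing new to $\Voc(\Lambda_v)$, whence the premise interpolants remain in the intersection, which is the observation underlying the treatment of analytic cut in~\cite{kowa:anal17}. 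The modal rule $\RuDia$ wraps the premise interpolant in a single (possibly converse) modality; here the vocabulary condition forces a case split according to whether the relevant action is shared, the unshared case yielding a trivial interpolant.

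The genuinely new ingredient, and the main obstacle, is the cyclic structure. At a repeat leaf $v$ whose companion is $c(v)$, forming a repeat $R$, the sequent is not axiomatic, so in place of a concrete formula I would assign a fresh fixpoint variable $x_R$ as a \emph{pre-interpolant}, setting $I_v \isdef x_R$; note that $\mybar{x_R} = x_R$, so the same variable serves both derivations, and that $x_R$ occurs positively by construction. At the companion this variable is bound: having computed over the repeat path a body $\hat{I}$ mentioning $x_R$, I set $I_{c(v)} \isdef \eta x_R.\,\hat{I}$, the choice of fixpoint operator being dictated by which component witnesses $R$ in $\pi$, namely $\eta = \nu$ if $R$ is left-successful (discharged by $\RuDischarge{\dd}^l$) and $\eta = \mu$ if $R$ is right-successful. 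In the constructed derivations this binding is realized by applying \RuNu, respectively \RuMu, to the interpolant at the node corresponding to $c(v)$, so that the repeat leaf replays the same annotated sequent as its companion and a \JStwo back edge is formed. The point of the operator choice is that the interpolant must supply the progress that is missing on the side where $\pi$ does not itself succeed: if $R$ is left-successful then in $\pi^r$ the interpolant sits on the left as $\mybar{I_{c(v)}} = \mu x_R.\,\mybar{\hat{I}}$, producing a $\mu$-trace and hence a successful repeat, while in $\pi^l$ the original left component already resets and the $\nu$-labelled interpolant trace on the right is harmless; the right-successful case is symmetric.

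The hard part is to verify that $\pi^l$ and $\pi^r$ are genuine \JStwo proofs, that is, that $I$ is a well-formed sentence and that every repeat is discharged. Well-formedness requires the bindings $\eta x_R$ to nest correctly, and this is exactly where the invariance of $\pi$ enters: by Lemma~\ref{lem.invariantPreservesSubword} the control of a companion $u$ is an initial segment of the control of every node in its strongly connected subtree $\scs(u)$, so nested repeats of $\pi$ induce properly nested companions, hence a well-formed interpolant whose fixpoint priorities respect the dependence order. For discharge, I would show that the annotation of $x_R$ at the repeat leaf can be made to agree with its unfolding at the companion -- again using the stability of names around an invariant repeat -- so that the back edges of $\pi^l$ and $\pi^r$ are legitimate, and that the fresh name introduced for the interpolant's own fixpoint is reset precisely along the repeat path, converting the left- or right-success of $R$ in $\pi$ into success of the corresponding repeat in $\pi^r$ or $\pi^l$. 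Collecting the derivations at the root then yields $I$ together with $\pi^l$ and $\pi^r$ as required.
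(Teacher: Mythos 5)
Your proposal is correct and follows essentially the same route as the paper's proof: a leaf-to-root Maehara induction on the split proof, with fresh fixpoint variables as pre-interpolants at repeat leaves, bound at companions by $\nu$ for left-successful and $\mu$ for right-successful repeats (matching the paper's $\RuDischarge{}^l$/$\RuDischarge{}^r$ cases), conjunctive/disjunctive combination at left/right branching rules including analytic cuts, a single modality at \RuDia with the trivial-interpolant case split, and the invariance condition (via Lemma~\ref{lem.invariantPreservesSubword}) used exactly as in the paper to insert the fresh name into the controls of the strongly connected subtree and to discharge the new repeats by resetting the interpolant's name on the side where the original success is unavailable.
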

\begin{proof} 
Let $D$ be the set of nodes in $\pi$ that are labelled by \RuDischarge{}. 
%Let $c \in D$ be labelled by $\theta_c \proves \Gamma_c \| \kappa_c \proves \Delta_c$. 
For a node $u \in \pi$ we define the set of \emph{active repeats} $A_u \isdef
\{c \in D \| u \in \scs(c) \}$.  
We define a priority function $\Omega_{\pi}: D \to \Nat^+$ satisfying 
$\Omega_{\pi}(c) < \Omega_{\pi}(d)$ if $c$ is a proper descendent of $d$ in $\calT_{\pi}$. 
Let $V_D \isdef \{x_c \| c \in D\}$ be a set of fresh new variables such 
that $x_c \in N_{k}$ with $k = \Omega_{\pi}(c)$ and define $V_u \isdef \{x_c \in V_D 
\| c \in A_u\}$. 
Our interpolant will be a formula with bound variables in $V_D$. 
		
For each node $u \in \pi$ labelled by $\theta_u \proves \Gamma_u \| \kappa_u \proves \Delta_u$ 
we define 
\begin{enumerate}
\item a formula $I_u$ with $\FV(I_u) \subseteq V_u$ and $\Voc(I_u) \subseteq \Voc(\Gamma_u) \cap \Voc(\Delta_u)$,
\item a derivation $\pi_u^l$ of $\theta_u \proves \Gamma_u \| \epsilon \proves I_u$ such that 
all open assumptions in $\pi_u^l$ are labelled by $\theta_c \proves \Gamma_c 
\| \epsilon \proves x_c$ for some $c \in A_u$ and  
\item a derivation $\pi_u ^r$ of $\epsilon \proves \mybar{I_u} \| \kappa_u \proves\Delta_u$ 
such that all open assumptions in $\pi_u^r$ are labelled by $\epsilon \proves
x_c \| \kappa_c \proves\Delta_c$ for some $c \in A_u$.
\end{enumerate}
We define $I_u,\pi_u^l, \pi_u^r$ by induction on the proof tree of $\pi$,
starting from the leaves. 
For the root $r$ of $\pi$ this will yield $I \isdef I_r$ such that $\Voc(I) \subseteq \Voc(\Gamma) \cap \Voc(\Delta)$ and proofs 
$\pi^l$ of $\Gamma \| I$ and $\pi^r$ of $\mybar{I} \| \Delta$. 
The construction is defined by a case distinction on the last applied rule. %We only focus on the crucial cases and refer to Appendix \ref{app.interpolation} for the remaining ones.
\begin{itemize}
	\item \emph{Axioms}: 
	If $u$ is labelled by an axiom of the form $\theta \proves \phi, \Gamma \|
	\mybar{\phi}, \Delta$, then $I_u \isdef \mybar{\phi}$ and dually $I_u \isdef \phi$ if $\phi$ and $\mybar{\phi}$ are
	swapped. 
	Otherwise an axiom is applied, where either $\phi,\mybar{\phi}$; $\bot$; $\phi 
	\trace[k] \psi, \phi \ntrace[k] \psi$ or $\phi \trace[2k] \phi$ is on the left 
	or the right side of the split. 
	If it is on the left, let $I_u \isdef \top$ and otherwise $I_u \isdef \bot$.
	It is straightforward to check the conditions 1--3.
	
\item \emph{Discharged leaves:} 
For every discharged leaf $u$ labelled by $\theta_u \proves \Gamma_u \| \kappa_u \proves \Delta_u$
with companion node $c$, let $I_u \isdef x_c$. 
Define $\pi_u^l$ to be the derivation consisting of one open assumption 
$\theta_u \proves \Gamma_u \| \epsilon \proves x_c$ and $\pi_u^r$ to be $\epsilon \proves 
x_c \| \kappa_u \proves \Delta_u$.
Clearly, the conditions 1--3 hold.
		
\item \emph{Companion nodes:} 
Let $u$ be labelled by $\RuDischarge{\dd}^r$ and let $v$ be its child. 
By induction hypothesis there is a formula $I_v$ and derivations $\pi_v^l$ and
$\pi_v^r$ satisfying conditions 1--3. 
We define $I_u \isdef \mu x_u. I_v$. 
In order to define $\pi_u^l$ we transform the derivation $\pi_v^l$ of 
$\theta_u \proves \Gamma_u \| \epsilon \proves I_u$. 
Let $O$ be the set of open assumptions in $\pi_v^l$ labelled by $\theta_u
\proves \Gamma_u \| \epsilon \proves x_u$ and let $P$ be the set of all other open assumptions. 

By uniformly substituting every occurrence of $x_u$ in $\pi_v^l$ by $\mu x_u. I_v$ we obtain a derivation $\rho_{v}$ of $\theta_u \proves \Gamma_u \| \epsilon \proves I_v\subst{\mu x_u. I_v}{x_u}$, where all open assumptions are either in $P$ or labelled by $\theta_u \proves \Gamma_u \| \epsilon \proves \mu x_u. I_v$. 
		
Let $\rho_v^{\dx}$ be obtained from $\rho_v$ by replacing every node $w$ in the strongly connected subtree $\scs(u)$ of $u$ labelled by $\theta_w \proves \Gamma_w \| \sigma(w) \proves I_w^{\sigma(w)}$ with $w'$, where $w'$ is labelled by the sequent $\theta_w \proves \Gamma_w \| \dx \cdot \sigma(w) \proves I_w^{\dx \sigma(w)}$. If a node $w$ is not in the strongly connected subtree of $u$, but its parent is, then add an \RuExp rule to remove the name $\dx$. This results in a well-formed derivation because $\dx$ is a higher-ranking name than all names in $\sigma(w)$ for all $w$.
		
We define the following derivation $\pi_u^l$, where all open assumptions are 
in $P$ and all assumptions from $O$ are discharged as follows.
\[
\begin{prooftree}
\hypo{\discharge{\theta_u \proves \Gamma_u \| \dx \proves I_v\subst{\mu x_u. I_v}{x_u}^{\dx}}{\de}}				
\infer[left label= $\RuReset[]_\dx^r$]1{\theta_u \proves \Gamma_u \| \dx \dy \proves I_v\subst{\mu x_u. I_v}{x_u}^{\dx\dy}}
\infer[left label=$\RuMu^r$]1[]{\theta_u \proves \Gamma_u \| \dx \proves \mu x_u. I_v^{\dx}}
\infer[no rule]1{\vdots}
\infer[no rule]1{\rho_v^\dx}
\infer[no rule]1{\vdots}
\infer[no rule]1[]{\theta_u \proves \Gamma_u \| \dx \proves I_v\subst{\mu x_u. I_v}{x_u}^{\dx}}
\infer[left label=$\RuDischarge{\de}^r$]1[]{\theta_u \proves \Gamma_u \| \dx \proves I_v\subst{\mu x_u. I_v}{x_u}^{\dx}}
\infer[left label=$\RuMu^r$]1[]{\theta_u \proves \Gamma_u \| \epsilon \proves \mu x_u. I_v}
\end{prooftree}
\]
	For the definition of $\pi_u^r$ we let $\rho_v^r$ be obtained from $\pi_v^r$ by uniformly substituting every occurrence of $x_u$ in $\pi_v^r$ by $\nu x_u. \mybar{I_v}$. We let $\pi_u^r$ be the following derivation
	\begin{center} %\begin{align*}
		\begin{prooftree}
			\hypo{\discharge{\epsilon \proves \nu x_u. \mybar{I_v} \| \kappa_u \proves \Delta_u}{\df}}
			\infer[no rule]1{\vdots}
			\infer[no rule]1{\rho_v^r}
			\infer[no rule]1{\vdots}
			\infer[no rule]1{\epsilon \proves \mybar{I_v}\subst{\nu x_u. \mybar{I_v}}{x_u} \| \kappa_u \proves \Delta_u}
			\infer[left label=$\RuNu^l$]1[]{\epsilon \proves \nu x_u. \mybar{I_v} \| \kappa_u \proves\Delta_u}
			\infer[left label=$\RuDischarge{\df}^r$]1[]{\epsilon \proves \nu x_u. \mybar{I_v} \| \kappa_u \proves \Delta_u}
		\end{prooftree}
	\end{center} %	\end{align*}
It holds that $\FV(I_u) = \FV(I_v)\setminus\{x_u\} \subseteq V_u$ and $\Voc(I_u) = \Voc(I_v)$, thus the conditions 1--3 are satisfied.	
The case where a $\RuDischarge{}^l$ rule is applied is dual with $I_u \isdef \nu x_u. I_v$.
	
	\item \emph{Modal rules:} Let $u$ be labelled by $\RuDiaR$ and let $v$ be its child. 
If the left component of $v$ is empty, define $I_u \isdef \bot$, then $\pi_u^l$ is an instance of \AxBot and $\pi_u^r$ is obtained from $\pi_v$ by applications of \RuDia and \RuWeak. The conditions 1--3 are clearly satisfied.
	
	Otherwise both components of the premiss of \RuDia in $\pi_u$ are non-empty. Then it follows that the action $a$ belongs to the vocabulary of both $\Gamma_u$ and $\Delta_u$. To see that for the left component, let $\Gamma_v = \Sigma, \ldiap[\conv{a}] \Pi_u^{\epsilon}$ be non-empty. If $\Sigma$ is non-empty, then clearly $a \in \Voc(\Gamma_u)$. Otherwise there is  $\ldiap[\conv{a}] \gamma^{\epsilon} \in \ldiap[\conv{a}]\Pi_u^{\epsilon}$, yet this is only the case if $\ldiap[\conv{a}] \gamma \in \ClosN(\Gamma_u)$, which implies $a \in \Voc(\Gamma_u)$ indeed. 
	
	We define $I_u \isdef \ldiap I_v$, The proofs $\pi_u^d$ are obtained from $\pi_v^d$ by applying a \RuDiaR rule for $d = l,r$. It holds that $\Voc(I_u) = \Voc(I_v) \cup \{a\} \subseteq \Voc(\Gamma_u) \cap \Voc(\Delta_u)$ and therefore the conditions 1--3 are satisfied. The case of a $\RuDiaL$ rule is dual with $I_u \isdef \top$ or $I_u \isdef \lboxp I_v$.
	
	\item \emph{Unary rules:} If $u$ is the conclusion of a unary rule different than $\RuDischarge{}$ and \RuDia with premiss $v$, define $I_u \isdef I_v$. The proofs $\pi_u^l$ and $\pi_u^r$ are defined straightforwardly. 
	\item \emph{Binary rules:} 
	If $u$ is the conclusion of a binary rule $\Ru$ with premisses $v$ and $w$, than $I_u \isdef I_v \land I_u$ or $I_u \isdef I_v \lor I_w$, depending on whether $\Ru$ is a left or right rule. The proofs $\pi_u^l$ and $\pi_u^r$ are defined straightforwardly. For example, if $\Ru = \RuCut^l$, then $\pi_u^l$ is the following proof. Note that in applications of \RuCut, \RuAnd and \RuOr the control remains the same, we therefore omit the control in the following proof due to space constraints.  Additionally, we apply $\RuWeak^r$ rules  implicitly on both branches.
	\[ %\begin{align*}
	\begin{prooftree}
		\hypo{\pi_v^l}
		\infer[no rule]1{ \phi^\epsilon, \Gamma_u \|  I_v}
		\infer[left label=$\RuAnd^r$]1[]{ \phi^\epsilon, \Gamma_u \|  I_v \land I_w}
		\hypo{\pi_w^l}
		\infer[no rule]1{ \mybar{\phi}^\epsilon, \Gamma_u \|  I_w}
		\infer[left label=$\RuAnd^r$]1[]{ \mybar{\phi}^\epsilon, \Gamma_u \| I_v \land I_w}
		\infer[left label=$\RuCut^l$]2[]{ \Gamma_u \|  I_v \land I_w}
	\end{prooftree}
	\] %\end{align*}
	
	The proof $\pi_u^r$ is defined as follows.%, where we again apply $\RuWeak^l$ implicitly.
	\[ %\begin{align*}
	\begin{prooftree}
		\hypo{\pi_v^r}
		\infer[no rule]1{\theta_v \proves \mybar{I_v} \| \kappa_v \proves \Delta_u}
		\hypo{\pi_w^r}
		\infer[no rule]1{\theta_w \proves \mybar{I_w} \| \kappa_w \proves \Delta_u}
		\infer[left label=$\RuOr^l$]2{\theta_u \proves \mybar{I_v} \lor \mybar{I_w} \| \kappa_u \proves \Delta_u}
	\end{prooftree}
	\] %\end{align*}
	As every application of \RuCut is analytic it holds that $\FV(\phi) \subseteq \FV(\Gamma_u)$. Therefore $\Voc(I_v \land I_w) \subseteq \Voc(\Gamma_u,\phi) \cap \Voc(\Delta_u) = \Voc(\Gamma_u) \cap \Voc(\Delta_u)$, hence conditions 1--3 are satisfied.		

	\end{itemize}
\end{proof}

\section{Conclusions and Questions}
\label{sec.conclusion}

The main contribution of this paper is the result that the two-way $\mu$-calculus
has the Craig interpolation property.

Below we mention some questions for further research.
\begin{enumerate}
\item Do natural \emph{fragments} of $\muMLtwo$ have interpolation?
Based on Rooduijn \& Venema's focus system an affirmative answer seems within 
reach for the alternation-free fragment.
Building on recent developments on the interpolation problem for propositional 
dynamic logic~\cite{borz:prop25}, in collaboration with Valentina Trucco Dalmas
we recently established the Craig Interpolation Property for Converse 
PDL~\cite{kloi-inte25}.
\item
Since the two-way $\mu$-calculus does not have the finite model property, an 
interesting question is whether it has interpolation \emph{in the finite}, that
is, for the finite-model-theory version $\modImpl_{f}$ of $\modImpl$ given by
$\phi \modImpl_{f} \psi$ if $\bbS,s \sat \phi$ implies $\bbS,s \sat \psi$
for every \emph{finite} pointed model $\bbS,s$.
\item
In the introduction we mentioned work by Benedikt and collaborators on 
\emph{guarded fixpoint logics}~\cite{bene:inte15,bene:defi19}.
It would be interesting to compare these results to ours, and to see whether 
our approach could lead to proof systems for their logics, or 
whether their model-theoretic approach would also work for the two-way 
$\mu$-calculus.
\item
A similar question applies to the work of French~\cite{fren:bisi06,fren:idem07}. 
Given the connection between uniform interpolation and bisimulation 
quantifiers~\cite{dago:logi00}, French's results might even lead to an (indirect)
proof that the two-way $\mu$-calculus has the Uniform Interpolation Property.
\item
The uniform interpolation property of $\muML$~\cite{dago:logi00} is related to the fact that every 
$\muML$-formula has a so-called \emph{disjunctive normal form}~\cite{Janin1995}
using the cover modality $\nabla$.
Can we prove a similar result for the two-way $\mu$-calculus?
\item
To the best of our knowledge many other natural questions about the two-way
$\mu$-calculus are open as well, for instance: is $\muMLtwo$ the fragment of
monadic second-order logic that is invariant under two-way bisimulations?
\end{enumerate}

\bibliographystyle{IEEEtran}
\bibliography{TeX/bib.TwoWay,TeX/tw-extra}

\appendix

\section{Appendix}\label{sec.appendix}

\subsection{Parity games}\label{app.parityGames}
In this appendix we briefly define infinite two-player games, for more details we refer to \cite{Graedel2002}.
We fix two players that we shall refer to as $\eloi$ (Eloise, female) and $\abel$
(Abelard, male) and use $\Pi$ as a variable ranging over the set $\{ \eloi, \abel\}$.

A \emph{two-player game} is a quadruple $\bbG = (V,E,\Own,\WC)$ where $(V,E)$ is a graph, $\Own$ is a map $V \to \{ \eloi, \abel \}$, and $\WC$ is a set of infinite paths in $(V,E)$.
	An \emph{initialised game} is a pair consisting of a game $\bbG$ and an element
	$v$ of $V$, usually denoted as $\bbG@v$.

We will refer to $(V,E)$ as the \emph{board} of the game. 
Elements of $V$ will be called \emph{positions}, and $\Own(v)$ is the 
\emph{owner} of $v$.
 Given a position $v$ for player $\Pi$, the set $E[v]$ 
denotes the set of \emph{moves} that are \emph{admissible
	for} $\Pi$ at $v$. We denote $V_{\Pi} \isdef \Own^{-1}(\Pi)$.
The set $\WC$ is called the \emph{winning condition} of the game.

A \emph{match} of the game $\bbG = (V,E,\Own,\WC)$
%starting at position $v_{I}$ 
is a path $\pi$ through the graph $(V,E)$.
%such that $\first(\pi) = v_{I}$.
Such a match $\pi$ is \emph{full} if it is maximal as a path, that is, either
finite with $E[\last(\pi)] = \nada$, or infinite.
If a position has no $E$-successors, the owner of that positions \emph{gets stuck} and 
loses the match.
Infinite matches are won by $\eloi$ if the match, as an $E$-path, belongs to the set $\WC$ and won by $\abel$ otherwise.

%Given these definitions, it should be clear that it does not matter which player owns a state that has a unique successor; for this reason we often take $\Own$ to be a \emph{partial} map, provided $\Own(v)$ is defined whenever $|{E[v]}|  \neq 1$.

Let $\PM_{\Pi}$ denote the collection of partial matches $\pi$ ending in a position\footnote{For a finite sequence $s = v_0...v_n$ we define $\first(s) \isdef v_0$ and $\last(s) \isdef v_n$.} $\last(\pi) \in V_{\Pi}$.
%, and define $\PM_{\Pi}@v$ asthe set of partial matches in $\PM_{\Pi}$ starting at position $v$.
A \emph{strategy} for a player $\Pi$ is a partial function $f: \PM_{\Pi} \to V$ such that $f(\pi)
\in E[\last(\pi)]$ if $E[\last(\pi)] \neq \nada$.
A match $\pi = (v_{i})_{i<\kappa}$ is \emph{guided} by a $\Pi$-strategy $f$, in short $f$-guided, if $f(v_{0}v_{1}\cdots v_{n-1}) = v_{n}$ for all $n<\kappa$ 
such that $v_{0}\cdots v_{n-1}\in \PM_{\Pi}$.
%
%A position $v$ is \emph{reachable} by a strategy $f$ is there is an $f$-guided match $\pi$ with $v = \last(\pi)$.
A $\Pi$-strategy $f$ is \emph{winning for $\Pi$ from $v$} if $\Pi$ wins all $f$-guided full
matches starting at $v$.
%A position $v$ is a \emph{winning position} for player $P \in \{ \eloi, \abel\}$ if $P$ has a winning strategy in the game $\bbG@v$; the set of these positions is denoted as $\Win_{P}(\bbG)$.
The game $\bbG$ is \emph{determined} if every position is winning for either 
$\eloi$ or $\abel$.

A strategy is \emph{positional} if it only depends on the last position of a 
partial match, namely, if $f(\pi) = f(\pi')$  whenever $\last(\pi) = \last(\pi')$;
such a strategy can and will be presented as a map $f: V_{\Pi} \to V$.

A \emph{parity game} is a board game $\bbG = (V,E,\Own,\WC_\Omega)$ in which the
winning condition $\WC_\Omega$ is given by a priority map $\Omega: V \to \Nat$ as follows: $\pi \in \WC_\Omega$ iff $\max\{\Omega(v) \| v \text{ occurs infinitely often in } \pi \}$ is even.
Such a parity game is usually denoted as $\bbG = (V,E,\Own,\Om)$.
The following theorem is independently due to Emerson \& Jutla~\cite{Emerson99}
and Mostowski~\cite{Mostowski1991}.

\begin{theorem}[Positional Determinacy]
	\label{thm.gamesPosDeterm}
	Let $\bbG = (G,E,\Own,\Om)$ be a parity game.
	Then $\bbG$ is determined, and both players have positional winning strategies.
\end{theorem}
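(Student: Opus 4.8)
The plan is to run the standard Zielonka attractor argument, establishing determinacy and positionality for \emph{both} players simultaneously by induction on the number of positions $|V|$. Given a parity game $\bbG = (V,E,\Own,\Om)$, write $\Win_\eloi(\bbG)$ and $\Win_\abel(\bbG)$ for the sets of positions from which the respective player has some winning strategy; the goal is to show $V = \Win_\eloi(\bbG) \cup \Win_\abel(\bbG)$ and that on each region the owning player has a \emph{positional} winning strategy. To keep the induction clean I would first handle positions with no moves at the outset — such a position is lost immediately by its owner — and otherwise assume every position has a successor, so that no player ever gets stuck and the subgames below remain well defined.

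For the inductive step let $p \isdef \max \Om(V)$ be the largest priority occurring and let $\sigma$ be the player favoured by it, i.e.\ $\sigma = \eloi$ if $p$ is even and $\sigma = \abel$ otherwise, with $\bar{\sigma}$ the opponent. Set $N \isdef \Om^{-1}(p)$ and let $A \isdef \mathrm{Attr}_\sigma(N)$ be the $\sigma$-attractor of $N$: the least set from which $\sigma$ can force the play to reach $N$, realised by a positional \emph{attractor strategy}. The subgame $\bbG_1$ induced on $V \setminus A$ has strictly fewer positions, so the induction hypothesis splits $V \setminus A$ into positional winning regions $\Win_\eloi(\bbG_1)$ and $\Win_\abel(\bbG_1)$.

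I then distinguish two cases. If $\Win_{\bar\sigma}(\bbG_1) = \nada$, I claim $\sigma$ wins all of $\bbG$: combine the positional attractor strategy on $A$ with $\sigma$'s positional winning strategy on $\Win_\sigma(\bbG_1) = V \setminus A$. Every resulting play either meets $N$ infinitely often, in which case $p$, the global maximum and favourable to $\sigma$, is the largest priority seen infinitely often and $\sigma$ wins, or it eventually stays in $V \setminus A$ and is won there by the subgame strategy. If instead $X \isdef \Win_{\bar\sigma}(\bbG_1) \neq \nada$, set $B \isdef \mathrm{Attr}_{\bar\sigma}(X)$; since $B \supseteq X \neq \nada$ the subgame $\bbG_2$ on $V \setminus B$ is strictly smaller and the induction hypothesis applies to it. Here one shows $\bar\sigma$ wins on all of $B$ (attract into $X$, then follow the $\bbG_1$-strategy), concluding $\Win_{\bar\sigma}(\bbG) = B \cup \Win_{\bar\sigma}(\bbG_2)$ and $\Win_\sigma(\bbG) = \Win_\sigma(\bbG_2)$, with the required strategies assembled as disjoint unions of the constituent positional strategies.

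The main obstacle, and the only genuinely delicate point, is verifying that the glued strategies are both positional and winning, which hinges on the \emph{no-escape} properties of attractors. For instance, every $\sigma$-owned position outside the $\sigma$-attractor $A$ has \emph{all} its edges outside $A$ (otherwise it would itself be attracted), so from $X \subseteq V \setminus A$ player $\sigma$ cannot escape into $A$ and the $\bbG_1$-strategy for $\bar\sigma$ remains valid verbatim in the full game; the symmetric closure facts for $B$ justify the region identities above. Positionality is then essentially free: each constituent strategy is positional and they live on pairwise disjoint sets of positions, so their union is a legitimate positional strategy. The winning claims reduce to the same dichotomy used in the first case — an infinite play either returns infinitely often to an attracted target, forcing the decisive priority, or is eventually trapped in a subregion where the inductive strategy wins. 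Since $A$ and $B$ are nonempty, both recursive calls strictly decrease $|V|$, so the induction terminates; this yields determinacy together with positional winning strategies for both players.
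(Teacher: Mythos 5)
Your proposal is correct, but note that the paper itself does not prove Theorem~\ref{thm.gamesPosDeterm}: it is stated as a known result, with the proof outsourced to Emerson \& Jutla~\cite{Emerson99} and Mostowski~\cite{Mostowski1991}. What you supply instead is the now-standard self-contained argument due to Zielonka: recursion on $|V|$, the attractor $A$ of the maximal-priority set $N$, the case split on whether $\Win_{\bar\sigma}(\bbG_1)$ is empty, and the no-escape facts ($\sigma$-owned positions outside a $\sigma$-attractor have all edges outside it, $\sigma$-owned positions in $\Win_{\bar\sigma}(\bbG_1)$ have all $\bbG_1$-edges back into it) that make the glued positional strategies legitimate. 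This buys something the citation does not: an elementary, algorithmically explicit construction of the positional strategies, at the cost of an exponential-depth recursion. Three points you should tighten to make the sketch airtight. First, the paper's games genuinely allow positions with $E[v] = \nada$ (the stuck player loses), and ``handle them at the outset'' is not quite enough, since declaring them lost changes nothing about their predecessors; the clean fix is either to seed the recursion by attracting each player to the set of opponent-owned dead ends, or to replace every dead end owned by $\Pi$ with a self-loop of priority losing for $\Pi$. Your own no-escape observation then shows that removing an attractor creates no new dead ends, so the subgames $\bbG_1$ and $\bbG_2$ stay well formed. Second, you implicitly use prefix-independence of the parity condition every time a strategy is restarted mid-play (the ``eventually trapped in a subregion'' branches); this deserves one explicit line, since it is exactly what lets a positional strategy winning from $v$ also win from any position reached along a consistent play. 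Third, in the first case the attractor strategy is undefined on $N$ itself; you must fix an arbitrary positional choice there and observe that any such choice is harmless because $p$ is the global maximum, so forcing infinitely many visits to $N$ already decides the parity in $\sigma$'s favour. With these routine repairs the proof is complete and matches the textbook argument behind the cited result.
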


\subsection{Soundness of \NWtwo}\label{app.soundness}
The following lemma deals with the local soundness of our rules and can be 
proven straightforwardly:
\begin{lemma}\label{lem.soundLocal}
	Let 
	\[ %\begin{align*}
	\begin{prooftree}
		\hypo{\Delta_1}
		\hypo{\cdots}
		\hypo{\Delta_m}
		\infer[left label =\Ru]3{\Gamma}
	\end{prooftree}
	\] %\end{align*}
	be a rule instance of Figure \ref{fig.NWtwo}. 
	If $\Gamma$ is satisfiable, then there is an $i = 1,...,m$ such that $\Delta_i$ 
	is satisfiable. 
	
	In particular, if $\Ru \neq \RuDia$, and given a pointed model $\bbS,s$ and positional strategy
	$f$ for $\eloi$ in $\calE(\Land\Phi, \bbS)$ such that $\bbS,s \sat_f \Gamma$, then $\bbS,s 
	\sat_f \Delta_i$. 
	If $\Ru = \RuOr$ and $\phi_0 \lor \phi_1$ is the principal formula, then 
	$\bbS,s \sat_f \phi_i, \phi_0 \lor \phi_1 \trace[1] \phi_i, \Gamma$, where 
	$f(\phi_0 \lor \phi_1,s) = (\phi_i,s)$.
	
	If $\Ru = \RuDia$ with principal formula $\ldiap \phi$ and given a pointed model 
	$\bbS,s$ and positional strategy $f$ for $\eloi$ in $\calE(\Land\Phi, \bbS)$ such that $\bbS,s \sat_f 
	\ldiap \phi, \lboxp \Sigma, \Gamma$, then $\bbS,t \sat_f \phi,\Sigma, 
	\ldiap[\conv{a}]\Gamma, \Gamma^{\ldiap\phi}$, where $f(\ldiap \phi,s) = 
	(\phi,t)$. 
\end{lemma}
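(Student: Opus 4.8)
The plan is to prove Lemma~\ref{lem.soundLocal} by a case distinction on the rule $\Ru$, in each case tracing through the moves of the evaluation game $\calE(\Land\Phi,\bbS)$. The one preparatory step I would make explicit is that, by positional determinacy of parity games (Theorem~\ref{thm.gamesPosDeterm}), we may take $f$ to be a \emph{uniform} positional winning strategy for $\eloi$, that is, one winning from \emph{every} position in her winning region simultaneously. Under such an $f$ we have $\bbS,s\sat_f A$ iff $(A,s)$ lies in the winning region of $\eloi$, so that $\sat_f$ coincides with truth at $s$ and a position can be certified winning merely by exhibiting one winning successor. This is exactly what makes the ``same $f$'' formulations in the statement go through, also for the cut rules, where the cut formula is not literally reachable from the conclusion.

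For the axioms we show that the conclusion is unsatisfiable. For \AxLit we use that $(\phi,s)$ and $(\mybar{\phi},s)$ are won by opposite players, so no single strategy can win both; \AxBot uses that $(\bot,s)$ is a position owned by $\eloi$ with no admissible move; \AxTraceNeg uses that $\bbS,s\sat_f\phi\ntrace[k]\psi$ is by definition the negation of $\bbS,s\sat_f\phi\trace[k]\psi$; and for \AxTraceLoop I observe that an $f$-guided match from $(\phi,s)$ back to $(\phi,s)$ of even maximal priority $2k$ can, by positionality, be iterated into an infinite $f$-guided match whose induced trace is a $\mu$-trace, contradicting that $f$ wins at $(\phi,s)$. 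For the logical rules \RuAnd, \RuOr, \RuEta and \RuWeak the witnessing strategy is simply $f$ itself: winning the principal position forces winning its (forced, or $\eloi$-chosen) successors, and each freshly introduced trace atom $\chi\trace[k]\chi'$ is witnessed by the single game step from $\chi$ to $\chi'$, whose priority is $1$ for boolean or modal principals and $\Omega(\eta x.\phi)$ for \RuEta. For \RuTrans we concatenate the two given $f$-guided matches at their shared endpoint $(\psi,s)$; since $\psi$ now counts as an interior position, the maximal priority of the composite match is $\max\{k,l\}$, witnessing the new atom. Finally, \RuCut and \RuTCut follow from tertium non datur: either $\phi$ or $\mybar{\phi}$ is true at $s$ (respectively, the trace atom holds or it does not), and the uniform $f$ certifies whichever premiss is thereby satisfiable.

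The main work, and the step I expect to be the real obstacle, is the modal rule \RuDia, with principal formula $\ldiap\phi$ and move $f(\ldiap\phi,s)=(\phi,t)$, where $(s,t)\in R_a$ and hence $(t,s)\in R_{\conv{a}}$. That $\bbS,t\sat_f\phi$ and $\bbS,t\sat_f\Sigma$ is immediate, the latter because each box position $(\lboxp\psi,s)$ is won by $\eloi$ against every move of $\abel$, in particular the move to $(\psi,t)$. For $\ldiap[\conv{a}]\Gamma$ we use the converse step: from $(\ldiap[\conv{a}]\gamma,t)$ the move to $(\gamma,s)$ is admissible and lands in the winning region, so the uniform $f$ wins there. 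The delicate part is the trace sequent $\Gamma^{\ldiap\phi}$, which encodes how detour trails are reflected across the modal transition. For a positive atom such as $\lboxp[\conv{a}]\chi\trace[k]\phi$, arising from $\chi\trace[k]\ldiap\phi\in\Gamma$, I build the required $f$-guided match at $t$ by prefixing the position $(\lboxp[\conv{a}]\chi,t)$ (from which the move to $(\chi,s)$ is admissible, as $(t,s)\in R_{\conv{a}}$) to the given match $(\chi,s)\cdots(\ldiap\phi,s)$ and then appending the move of $\eloi$ to $(\phi,t)$; the two inserted modal positions have priority $1$, so the maximal priority stays $k$. Symmetrically, $\lboxp[\conv{a}]\chi\trace[k]\psi$ for $\lboxp\psi\in\Gamma$ is obtained by closing off with the move from $(\lboxp\psi,s)$ to $(\psi,t)$. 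The negated atoms $\phi\ntrace[k]\lboxp[\conv{a}]\chi$ and $\psi\ntrace[k]\lboxp[\conv{a}]\chi$ are handled by contraposition: a hypothetical $f$-guided match of maximal priority $k$ between the relevant formulas at $t$ would, after the same prefix/suffix surgery through $(\ldiap\phi,s)$ respectively $(\lboxp\psi,s)$, lift to an $f$-guided match at $s$ of priority $k$, contradicting the negated trace atom already present in the conclusion. The crux throughout is the priority bookkeeping: one must check that the auxiliary modal positions inserted at the two ends never exceed priority $1$, so that the maximal-priority labels are preserved exactly, both when producing a match for a positive atom and when lifting a hypothetical one for a negated atom.
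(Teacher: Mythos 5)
The paper offers no actual proof of this lemma --- Appendix~\ref{app.soundness} merely declares it ``can be proven straightforwardly'' --- so your case analysis is a reconstruction of omitted material rather than a divergence from a given argument. The core of your proposal, the \RuDia case, is correct and is exactly the crux: the prefix/suffix surgery through $(\ldiap\phi,s)$, $(\lboxp\psi,s)$ and $(\lboxp[\conv{a}]\chi,t)$, using the given move $f(\ldiap\phi,s)=(\phi,t)$ and the admissibility of the converse steps from $(s,t)\in R_a$, together with the bookkeeping that every inserted modal position carries priority $1$ (since $\Omega$ assigns $1$ to all non-fixpoint formulas), correctly preserves the weight $k$ both for the positive transfers in $\Gamma^{\ldiap\phi}$ and, by contraposition, for the negated ones. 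Your handling of \RuTrans (the shared endpoint $\psi$ becoming interior, so the composite weight is $\max\{k,l\}$) is likewise exactly right, and your observation that the literal same-$f$ statement fails for an arbitrary positional $f$ at \RuCut --- and, as your own argument implicitly shows, at the $\ldiap[\conv{a}]\Gamma$ part of \RuDia, where $f$'s own move at $(\ldiap[\conv{a}]\gamma,t)$ may differ from the winning move to $(\gamma,s)$ --- is a genuine and valuable catch.

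There are, however, two soft spots. First, your preparatory step ``we may take $f$ uniform'' is not a harmless WLOG \emph{inside} the lemma: satisfaction of trace atoms is relative to the particular $f$ (it asserts the existence of $f$-guided matches), so replacing the given $f$ by a uniform optimal strategy can destroy the hypothesis $\bbS,s\sat_f\Gamma$ on the trace atoms in $\Gamma$; relatedly, your slogan that $\sat_f$ then ``coincides with truth at $s$'' is correct only for formulas, not for trace atoms. The repair belongs one level up: in the proof of Theorem~\ref{thm.SoundnessNWtwo} the strategy $f$ is fixed once, and for a pure root sequent it may be chosen uniform from the start; the lemma should then be read with uniformity built into its hypothesis, all trace atoms encountered later being generated and witnessed with respect to that single $f$ (note that \RuTCut needs no uniformity at all, being excluded middle for the fixed $f$). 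Second, your \AxTraceLoop case concludes ``contradicting that $f$ wins at $(\phi,s)$'', but nothing in the hypothesis supplies this: $\phi$ need not occur in $\Gamma$, and the paper's definition of $\bbS,s\sat_f\phi\trace[2k]\phi$ demands only a guided match, not that $f$ wins at its endpoints --- and uniformity does not help when $\phi$ is false at $s$ (consider $\mu x.\ldiap x$ at a reflexive point, where every positional $f$ admits the guided loop). Your iteration argument correctly shows that $f$ \emph{loses} at $(\phi,s)$, which under the literal definitions is not yet a contradiction; closing the case requires the (necessary) convention that trace atoms are only ever asserted at positions won by $f$. This is a looseness in the paper's semantics rather than an error in your strategy, but a complete proof must make it explicit.
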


\setcounter{theorem}{\getrefnumber{thm.SoundnessNWtwo}}
\addtocounter{theorem}{-1}
\begin{theorem}
	If $\NWtwo \proves \Gamma$, then $\Gamma$ is unsatisfiable.
\end{theorem}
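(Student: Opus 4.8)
The plan is to argue by contraposition, exploiting the game-theoretic reading of derivations noted just above: an $\NWtwo$ proof of $\Gamma$ is precisely a winning strategy for Prover in $\calG(\Phi)@\Gamma$, so it suffices to show that if $\Gamma$ is satisfiable then Builder has a strategy defeating every Prover strategy. Accordingly, suppose $\bbS,s \sat_f \Gamma$ for some pointed model $\bbS,s$ and some strategy $f$ for $\eloi$ in $\calE(\Land\Phi,\bbS)$; by positional determinacy of the evaluation game I may assume $f$ to be positional and winning for $\eloi$ at every position $(\phi,s)$ with $\phi \in \Gamma$.

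First I would define Builder's strategy $\overline{f}$ together with an auxiliary bookkeeping of a model state, maintaining the invariant that whenever the play reaches a sequent $\Delta$ there is an associated state $s'$ with $\bbS,s'\sat_f\Delta$. When Prover offers a rule instance with conclusion $\Delta$ and premisses $\Delta_1,\dots,\Delta_m$, Lemma~\ref{lem.soundLocal} guarantees that some premiss is again satisfiable: for every rule other than $\RuDia$ the state is unchanged and Builder picks a premiss $\Delta_i$ with $\bbS,s'\sat_f\Delta_i$ (for $\RuOr$ the choice is dictated by $f$'s move at $(\phi\lor\psi,s')$); for $\RuDia$ with principal formula $\ldiap\phi$ the unique premiss is satisfiable at the updated state $t$ determined by $f(\ldiap\phi,s')=(\phi,t)$, which Builder records. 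This yields $\overline{f}$ and, along each play, a sequence of states $(s_i)_{i}$ witnessing the invariant $\bbS,s_i\sat_f\Delta_i$ at every sequent $\Delta_i$ on the resulting branch.

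The heart of the argument is to show that no infinite $\overline{f}$-guided branch $\alpha$ carries a $\mu$-trail, so that every infinite match is won by Builder. Suppose toward a contradiction that $\alpha$ carries a $\mu$-trail $\tau$; since its first formula lies in the corresponding sequent, $\tau$ starts at a position $(\phi_N,s_N)$ at which $f$ is winning. I claim $\tau$ induces an $f$-guided match in $\calE(\Land\Phi,\bbS)$ whose induced trace is a $\mu$-trace. Each upward trail step corresponds directly to a legal move between positions $(\phi_i,s_i)$ and $(\phi_{i+1},s_{i+1})$, which is $f$-guided at $\eloi$-owned positions and respects the recorded state (the modal step advancing to the new state $t$, to which all box- and diamond-formulas move consistently, while every other step keeps the state fixed); the weight of such a step equals the priority $\Omega$ of its source whenever it unfolds a fixpoint and is $1$ otherwise. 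Each detour trail, arising from a trace atom $\phi\trace[k]\psi$ in the current sequent, is replaced -- using $\bbS,s_i\sat_f\phi\trace[k]\psi$ -- by an $f$-guided match segment from $(\phi,s_i)$ to $(\psi,s_i)$ whose maximal priority is exactly $k$. Concatenating these pieces gives an infinite $f$-guided match along which the maximal priority occurring infinitely often equals the maximal weight occurring infinitely often on $\tau$; as $\tau$ is a $\mu$-trail this value is even, and since even priorities are assigned to $\mu$-formulas the induced trace is a $\mu$-trace, whence the match is won by $\abel$. This contradicts $f$ being winning for $\eloi$. Therefore no infinite branch carries a $\mu$-trail, every infinite $\overline{f}$-guided match is won by Builder, and consequently $\Gamma$ is not provable.

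I expect the main obstacle to be precisely this last translation: faithfully replacing detour trails by evaluation-game segments via the semantics of trace atoms and verifying the weight-to-priority bookkeeping, so that the $\mu$-trail condition on the branch transfers to the $\mu$-trace condition in $\calE(\Land\Phi,\bbS)$. The modal step, where the model state advances while several box- and diamond-formulas must all move consistently to the same witness $t$, is the delicate point both in setting up the invariant and in checking that the induced match is genuinely $f$-guided.
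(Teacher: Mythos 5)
Your proposal is correct and follows essentially the same route as the paper's own proof: contraposition, a Builder strategy $\overline{f}$ built from the local soundness lemma together with a state-bookkeeping map maintaining $\bbS,s_i \sat_f \Delta_i$, and the translation of a hypothetical $\mu$-trail into an infinite $f$-guided evaluation match (upward trails as direct moves, detour trails expanded via the semantics of trace atoms) whose maximal priority occurring infinitely often is even, contradicting that $f$ is winning. Even the details match, including the assumption that $f$ is positional (needed so the concatenated segments remain $f$-guided) and the observation that a trail may begin at a cut formula, which your sequent-wide invariant handles just as the paper does.
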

\begin{proof}
	By contraposition we show that, if $\Gamma$ is satisfiable, then Builder has a 
	winning strategy in $\calG := \calG(\Phi)@\Gamma$. 
	So assume that there is a pointed model $\bbS,s$ and a positional strategy $f$ for $\eloi$ 
	in the game $\calE := \calE(\Land\Phi, \bbS)$ such that $\bbS,s \sat_f \Gamma$.
	We will  construct a winning strategy $\overline{f}$ for Builder in $\calG$ and
	a function $s_f: \PM(\Phi) \to \bbS$, mapping partial $\calG$-matches to states 
	of $\bbS$, such that $\bbS,s_f(\calM) \sat_f \last(\calM)$ for every 
	$\overline{f}$-guided $\calM \in \PM_P(\Phi)$.
	
	The functions $\overline{f}$ and $s_f$ can be defined inductively by a case distinction based on the rule instance. For the base case $|\calM| = 1$ it holds that $\calM = \Gamma$. We define $s_f(\calM) \isdef s$ and do not have to define $\overline{f}$ as this is a position owned by Prover. Otherwise we follow the specifications of the rule instance. If the rule is $\RuDia$, define $s_f$ as given by $f$ and let $\overline{f}$ choose the only premiss. For any other rule $s_f$ remains the same and we invoke Lemma \ref{lem.soundLocal} for the definition of $\overline{f}$.
	
	We need to show that $\overline{f}$ is a winning strategy for Builder in $\calG$.
	Because of Lemma \ref{lem.soundLocal} we know that all finite matches are won by
	Builder. 
	Thus, assume by contradiction that Prover wins an infinite 
	$\overline{f}$-guided $\calG$-match $\calM$. 
	Then there is a $\mu$-trail $\tau= \tau_0\tau_1\cdots$ on $\calM$.
	Note that $\tau$ is not necessarily a trail starting from the root; it might
	also be starting from a cut formula. We will use $\tau$ to obtain an infinite
	$f$-guided $\calE$-match $\calN$ that is won by $\abel$.
	
	Let $\tau_i = (\phi_i,\psi_i,k_i)$ for $i \geq 0$, remember that $\phi_{i+1} = \psi_i$ for all $i$. Let $\calM_i$ be the initial partial match of $\calM$, such that $\tau_i$ ``is a trace starting at $\last(\calM_i)$''. We will define $f$-guided partial $\calE$-matches $\calN_i$ starting at $(\phi_i,s_f(\calM_i))$ and ending at $(\phi_{i+1},s_f(\calM_{i+1}))$ for every $i \geq 0$, such that $k_i = \max\{\Omega(\phi) \| (\phi,s) \text{ is a position in } \calN_i \text{ for some } s\}$.
	
	If $\tau_i$ is an upward trace in $\sfT_{u,v}$ for $u \neq v$, then we can
	define $\calN_i$ straightforwardly. 
	Otherwise $\tau_i$ is a detour trace in $\sfT_{u,u}$ for some $u$. 
	Then $\phi_i \trace[k_i] \phi_{i+1} \in \last(\calM_i)$. As $\bbS, s_f(\calM_i) 
	\sat_f \last(\calM_i)$ it holds that $\bbS, s_f(\calM_i) \sat_f \phi_i 
	\trace[k_i] \phi_{i+1}$. 
	This exactly means that there is an $f$-guided match $\calN_i$ starting at 
	$(\phi_i,s_f(\calM_i))$ and ending at $(\phi_{i+1},s_f(\calM_{i}))$ as needed.
	
	Glueing together the matches $\calN_i$ we obtain an infinite $f$-guided $\calE$-match
	$\calN= \calN_0\calN_1\cdots$, such that $\max\{\Omega(\phi) \| \phi 
	\text{ occurs infinitely often in } \calN\} = \max\{k \| k \text{ appears 
		infinitely often on } \tau\}$. Thus we conclude that $\calN$ is won by $\abel$ and therefore $\bbS,s_f(\calM_0) \not\sat_f \last(\calM_0)$,
	which is the desired contradiction.
\end{proof}

\subsection{Completeness of \NWtwo}\label{app.completeness}

\setcounter{theorem}{\getrefnumber{lem.soundLocal}}
\begin{lemma}[Saturation]\label{lem.NWcomplSaturation}
	For every state $\rho$ in $\bbS^f$ the set $\sfS(\rho)$ is \emph{saturated}, meaning that the following conditions are satisfied:
	\begin{enumerate}
		\item For all $\phi \in \ClosN(\sfS(\rho))$ it holds that $\phi \in 
		\sfS(\rho)$ iff $\mybar{\phi} \notin \sfS(\rho)$.
		\item Never $\bot \in \sfS(\rho)$.
		\item For all $\phi \in \ClosN(\sfS(\rho))$ and relevant trace atoms 
		$\phi \trace[k] \psi$ it holds that $\phi \trace[k] \psi \in \sfS(\rho)$ or
		$\phi \ntrace[k] \psi \in \sfS(\rho)$.
		\item For no relevant trace atom
		$\phi \trace[k] \psi$ it holds that $\phi \trace[k] \psi \in \sfS(\rho)$ and
		$\phi \ntrace[k] \psi \in \sfS(\rho)$.
		\item For no $\phi, k $ it holds that $\phi \trace[2k] \phi \in \sfS(\rho)$.
		\item If $\phi_0 \land \phi_1 \in \sfS(\rho)$, then for both $i=0,1$ it holds
		that $\phi_i \in \sfS(\rho)$ and $\phi_0\land \phi_1 \trace[1] \phi_i \in 
		\sfS(\rho)$.
		\item If $\phi_0 \lor \phi_1 \in \sfS(\rho)$, then for some $i=0,1$ it holds 
		that $\phi_i  \in \sfS(\rho)$ and $\phi_0\lor \phi_1 \trace[1] \phi_i \in 
		\sfS(\rho)$.
		\item If $\eta x. \phi \in \sfS(\rho)$, then $\phi[\eta x . \phi / x], \eta x.
		\phi \trace[\Omega(\eta x. \phi)] \phi[\eta x . \phi / x] \in \sfS(\rho)$.
		\item If $\phi \trace[k] \psi, \psi \trace[l] \chi \in \sfS(\rho)$, then
		$\phi \trace[\max\{k,l\}] \chi \in \sfS(\rho)$.
	\end{enumerate}
\end{lemma}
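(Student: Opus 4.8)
The plan is to read off all nine conditions from the priority discipline that Prover follows in $\calT$ together with the assumption that $f$ is \emph{winning for Builder}. The first step is to pin down $\sfS(\rho)$: I claim each state $\rho$ is a finite path whose union $\sfS(\rho)$ equals the sequent $\Delta_\rho$ labelling its top node $u_\rho$. Along $\rho$ only non-modal rules occur, and by Prover's strategy these are either axioms (priority $1$, which have no premisses and hence cannot be internal to $\rho$) or cumulative and productive instances of priority $2$ and $3$ rules. Each such instance strictly enlarges the current sequent while keeping it inside the fixed finite set of formulas and relevant trace atoms determined by $\Phi$; so $\rho$ must be finite, and as every rule on it is cumulative the sequents increase along $\rho$, whence $\sfS(\rho) = \Delta_\rho$.

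Next I would isolate two exhaustion facts about the top node $u_\rho$. (a) \emph{No axiom is applicable to $\Delta_\rho$}: Prover gives axioms top priority, so an applicable axiom would let Prover close that branch at an (unsatisfiable) axiomatic leaf, which a Builder-winning play must avoid. (b) \emph{No cumulative, productive instance of a priority $2$ or $3$ rule is applicable to $\Delta_\rho$}: all such rules outrank the modal rule, so Prover would apply any available one, extending $\rho$ by a further non-modal node and contradicting maximality of the state.

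The nine conditions then follow by a uniform case analysis using (a) and (b). Conditions $2$, $4$, $5$ and the left-to-right half of $1$ are immediate from (a) via the non-applicability of \AxBot, \AxTraceNeg, \AxTraceLoop and \AxLit. For the converse half of $1$ and for $3$ I would use (b) on \RuCut and \RuTCut: given $\phi \in \ClosN(\sfS(\rho))$ the corresponding (trace-)cut is analytic, hence a legal instance with conclusion $\Delta_\rho$; were both $\phi$ and $\mybar{\phi}$ (resp.\ both $\phi \trace[k] \psi$ and $\phi \ntrace[k] \psi$) absent it would be cumulative \emph{and} productive, contradicting (b), so at least one is present. Conditions $6$, $8$, $9$ follow from (b) applied to \RuAnd, \RuEta, \RuTrans: when the principal formula(s) lie in $\Delta_\rho$, the instance that \emph{retains} them has conclusion exactly $\Delta_\rho$ and is cumulative, so any missing premiss-formula would make it productive, again contradicting (b). Condition $7$ is the one binary case: with $\phi_0 \lor \phi_1 \in \Delta_\rho$, the retaining \RuOr instance is cumulative and, by (b), not productive, forcing one premiss to be contained in $\Delta_\rho$, i.e.\ $\phi_i, \phi_0\lor\phi_1 \trace[1] \phi_i \in \sfS(\rho)$ for some $i$.

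The main difficulty is bookkeeping rather than conceptual. One must use the fact that the \emph{cumulative} instances of \RuAnd, \RuOr and \RuEta are precisely those that retain the principal formula in the context, so that such an instance is available as a productive move exactly when the relevant saturation clause fails; and one must check that the analyticity side-conditions $\phi \in \ClosN(\Gamma)$ of \RuCut and \RuTCut are met, which is guaranteed because the conditions are stated over $\ClosN(\sfS(\rho))$ and, for trace atoms, restricted to the \emph{relevant} ones to which \RuTCut applies.
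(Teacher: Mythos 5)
Your proposal is correct and takes essentially the same approach as the paper, whose proof is the one-line remark that the lemma ``follows from our restriction on the strategy of Prover''. You have merely spelled out the details implicit in that remark: finiteness of states via productivity of non-modal rules within the finite universe over $\Phi$, non-applicability of axioms because $f$ is winning for Builder, and exhaustion of cumulative-productive instances of the priority~2 and~3 rules (including the analytic \RuCut and \RuTCut) at the top node by maximality of $\rho$.
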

\begin{proof}
	Follows from our restriction on the strategy of Prover. 
\end{proof}

\begin{lemma}\label{lem.thruthLemmaFormulas}
	Let $\rho_0$ be a state of $\bbS^f$ containing the root $\Gamma$ of $\calT$ and let $\psi_0 \in \Gamma$.
	Let $\calM$ be an $\underline{f}$-guided $\calE$-match with starting position $(\psi_0,\rho_0)$. Then for every position $(\psi, \rho)$ in $\calM$ it holds that $\psi \in \sfS(\rho)$.
\end{lemma}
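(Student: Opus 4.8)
The plan is to prove the statement by induction on the index $n$ of the position $(\psi_n,\rho_n)$ in the match $\calM$. For the base case, the starting position is $(\psi_0,\rho_0)$; since $\rho_0$ contains the root $\Gamma$ of $\calT$ we have $\Gamma \subseteq \sfS(\rho_0)$, and as $\psi_0 \in \Gamma$ this yields $\psi_0 \in \sfS(\rho_0)$. For the inductive step I take a position $(\psi,\rho)$ in $\calM$ with $\psi \in \sfS(\rho)$ that is not the last one, and show that its $\calM$-successor $(\psi',\rho')$ again satisfies $\psi' \in \sfS(\rho')$. This is a case distinction on the shape of $\psi$, following the admissible moves of Table~\ref{tb:EG}.

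The \emph{local} cases, in which the state is unchanged ($\rho'=\rho$), are immediate from the Saturation Lemma (Lemma~\ref{lem.NWcomplSaturation}). If $\psi = \phi_0 \land \phi_1$, then $\abel$ moves to some $(\phi_i,\rho)$ and condition~6 gives $\phi_i \in \sfS(\rho)$; if $\psi = \phi_0 \lor \phi_1$, then $\eloi$ follows $\underline{f}$, which by condition~7 is well-defined and selects a disjunct $\phi_i \in \sfS(\rho)$; if $\psi = \eta x.\phi$, the unique successor $\phi[\eta x.\phi/x]$ lies in $\sfS(\rho)$ by condition~8. The literal positions $(p,s),(\mybar{p},s)$ are terminal and need no treatment.

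For the modal cases the state changes, and I would first record an auxiliary observation: because Prover applies all non-modal rules before any modal rule, and (by Prover's choice of cumulative instances) these keep their principal formula, the sequents along a state $\rho$ grow upward; hence $\sfS(\rho)$ equals the topmost sequent of $\rho$, while the bottommost sequent of $\rho$ is the premiss of the modal rule immediately below it. For the \emph{diamond} case $\psi = \ldiap[a]\phi$, the strategy $\underline{f}$ picks the child $\tau$ with $\rho \to[a]\tau$ whose connecting \RuDia has principal formula $\ldiap[a]\phi$; such a $\tau$ exists because $\calT$ contains, at the top of $\rho$, a modal rule for every diamond in $\sfS(\rho)$ (the priority discipline constrains only the order of rules, leaving the choice of principal diamond open), and since $\phi$ occurs in the premiss of that \RuDia we obtain $\phi \in \sfS(\tau)$. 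For the \emph{box} case $\psi = \lboxp[a]\phi$, player $\abel$ may move to any $\tau$ with $\rho\, R_a^f\, \tau$, and by the definition of $R_a^f$ there are two sub-cases. If $\rho \to[a]\tau$ (forward), then $\lboxp[a]\phi$ belongs to the conclusion of the connecting \RuDia, hence to its box-part $\lboxp[a]\Sigma$, so $\phi \in \Sigma$ occurs in the premiss and $\phi \in \sfS(\tau)$.

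The remaining sub-case $\tau \to[\conv{a}]\rho$ (the converse direction) is the main obstacle, since information flows \emph{upward} from $\tau$ to $\rho$ and $\phi$ must be recovered from $\sfS(\tau)$ through the converse-diamond bookkeeping $\ldiap[\conv{a}]\Gamma$ of Definition~\ref{def.ModalRule}. My plan is to argue by consistency: assuming $\phi \notin \sfS(\tau)$ and using that Prover applies cuts before the modal rule (so $\sfS(\tau)$ decides $\phi$), we get $\mybar{\phi} \in \sfS(\tau)$; tracing $\mybar{\phi}$ through the conclusion $\ldiap[\conv{a}]\phi',\lboxp[\conv{a}]\Sigma',\Gamma'$ of the $\conv{a}$-modal rule, the principal case is $\mybar{\phi}\in\Gamma'$, where the converse-diamond clause places $\ldiap[a]\mybar{\phi} = \mybar{\lboxp[a]\phi}$ in the premiss and hence in $\sfS(\rho)$, contradicting condition~1 of the Saturation Lemma together with $\lboxp[a]\phi \in \sfS(\rho)$. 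The delicate points I expect to require care are the side-condition $\ldiap[a]\mybar{\phi}\in\ClosN(\Psi)$ governing when the converse diamond is actually recorded, and the nested situations where $\mybar{\phi}$ matches $\ldiap[\conv{a}]\phi'$ or lies in $\lboxp[\conv{a}]\Sigma'$ rather than in $\Gamma'$; these should reduce to the principal case after unfolding the negation and using that every formula lives in $\Phi = \ClosN(\Phi)$.
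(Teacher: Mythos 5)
Your induction set-up, the local cases via Lemma~\ref{lem.NWcomplSaturation}, and the forward modal cases all match the paper's proof. The genuine gap sits in the backward box subcase ($\tau \to[\conv{a}] \rho$, with $\psi = \lboxp \phi \in \sfS(\rho)$), at exactly the two points you flagged as ``delicate''. First, to get $\mybar{\phi} \in \sfS(\tau)$ from $\phi \notin \sfS(\tau)$ you need condition~1 of the Saturation Lemma, which applies only to formulas in $\ClosN(\sfS(\tau))$; your justification --- ``Prover applies cuts before the modal rule, so $\sfS(\tau)$ decides $\phi$'' --- is circular, because \RuCut is analytic and can only decide formulas already in $\ClosN(\sfS(\tau))$, and establishing $\phi \in \ClosN(\sfS(\tau))$ is precisely what is missing. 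Second, the converse-diamond clause of Definition~\ref{def.ModalRule} records $\ldiap \mybar{\phi}$ in the premiss only if $\ldiap\mybar{\phi} \in \ClosN(\Psi)$ where $\Psi$ is the \emph{local} conclusion sequent (here $\sfS(\tau)$); your proposed appeal to ``every formula lives in $\Phi = \ClosN(\Phi)$'' conflates this local side condition with the fixed global set $\Phi$ and does not give it.

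The paper closes both holes with an idea absent from your proposal: \emph{strong} induction on the match positions combined with a forest/revisit argument. Since $\bbS^f$ is a forest and $\rho_0\cdots\rho_{n+1}$ is a path starting at a root whose last step goes downward, there must be an earlier index $i \in \{0,\ldots,n-1\}$ with $\rho_i = \rho_{n+1}$; because evaluation-game moves follow the closure relation, $\lboxp \chi \in \Clos(\psi_i)$, and the strong induction hypothesis $\psi_i \in \sfS(\rho_i) = \sfS(\rho_{n+1})$ then yields $\lboxp \chi \in \Clos(\sfS(\rho_{n+1}))$ --- exactly the membership needed both to invoke saturation condition~1 and to satisfy the side condition $\ldiap \mybar{\chi} \in \ClosN(\Psi)$. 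Your one-step induction on the immediate predecessor cannot recover any information about $\sfS(\tau)$ for the state \emph{below} $\rho$, so the case cannot be completed as written. (A repair closer in spirit to your own remarks would be to prove, from the analyticity of all \NWtwo rules and the cumulativity of Prover's instances, that $\sfS(\rho) \subseteq \ClosN(\sfS(\tau))$ whenever $\rho$ sits directly above $\tau$; together with closure of $\ClosN$ under negation this also yields both memberships from $\lboxp\phi \in \sfS(\rho)$ --- but this claim must be stated and proved, and your write-up does neither.)
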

\begin{proof}
	We write $(\psi_n,\rho_n)$ for the $n$-th position of $\calM$ and prove the claim by strong induction on $n$. The base case is clear. For the induction step let $\psi_n \in \sfS (\rho_n)$, we have to show that $\psi_{n+1} \in \rho_{n+1}$. We proceed by a case distinction based on the shape of $\psi$. If $\psi$ is not a modal formula, then $\rho_{n+1} = \rho_n$ and the claim follows from Lemma \ref{lem.NWcomplSaturation} and the definition of $\underline{f}$.
	
	Now assume that $\psi_n = \lboxp \chi$, then $\psi_{n+1} = \chi$. In this case $\rho_n R_a^f \rho_{n+1}$, so either $\rho_n \to[a] \rho_{n+1}$  or  $\rho_{n+1} \to[\conv{a}]\rho_{n}$. If $\rho_n \to[a] \rho_{n+1}$, then $\lboxp \chi$ is in the conclusion of \RuDia, hence $\chi$ is in its premiss and thus $\chi \in \rho_{n+1}$. 
	
	Because $\bbS^f$ is a forest and $\rho_0...\rho_{n+1}$ forms a path in $\bbS^f$ starting at one of the roots, where the last step of the path is downwards, there has to be an $i \in \{ 0,...,n-1 \}$ with $\rho_i = \rho_{n+1}$. 
	As $\calM$ is a match with positions $(\psi_i,\rho_i)$ and $(\lboxp \chi, \rho_n)$ for some $\psi_i$, it holds that $\lboxp \chi \in \Clos(\psi_i)$. As by induction hypothesis $\psi_i \in \sfS(\rho_i) = \sfS(\rho_{n+1})$, this yields $\lboxp \chi \in \Clos(\sfS(\rho_{n+1}))$ and also $\chi \in \Clos(\sfS(\rho_{n+1}))$.
	
	Towards a contradiction assume that $\chi \notin \sfS(\rho_{n+1})$. Because $\chi \in \ClosN(\sfS(\rho_{n+1}))$ it holds that $\mybar{\chi} \in \sfS(\rho_{n+1})$ by Lemma \ref{lem.NWcomplSaturation}. If $\mybar{\chi}$ is in the conclusion of $\RuDia[\conv{a}]$, then $\ldiap \mybar{\chi}$ is in its premiss as $\ldiap \mybar{\chi} \in \ClosN(\sfS(\rho_{n+1}))$, therefore $\ldiap \mybar{\chi} \in \rho_n$. Again by Lemma \ref{lem.NWcomplSaturation} we conclude that $\lboxp \chi \notin \rho_n$, which is a contradiction.
	
	Finally the case where $\psi_n = \ldiap \chi$ is similar to the first direction
	of the previous case. 
\end{proof}

\begin{lemma}\label{lem.truthLemmaTraces}
	Let $\rho \in S^f$, $\phi \in \sfS(\rho)$ and $\phi \trace[k] \psi$ be a relevant trace atom. If $\bbS^f,\rho \Vdash_{\underline{f}} \phi \trace[k] \psi$, then $\phi \trace[k] \psi \in \sfS(\rho)$.
\end{lemma}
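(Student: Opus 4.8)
The plan is to unfold the definition of satisfaction and argue by induction on the length of the witnessing match. Assuming $\bbS^f,\rho \sat_{\underline{f}} \phi \trace[k] \psi$, there is an $\underline{f}$-guided $\calE$-match
\[ (\phi,\rho) = (\phi_0,\rho_0)\cdots(\phi_N,\rho_N) = (\psi,\rho), \qquad N>0, \]
with $k = \max\{\Omega(\phi_i) \mid i < N\}$. Since every game move is a $\tracestep$-step, the formula projection is a trace, so $\psi \in \Clos(\phi_i)$ for all $i$. As $\psi$ contains a fixpoint (the atom is relevant) and the closure of a fixpoint-free formula is fixpoint-free, each $\phi_i$ must contain a fixpoint as well. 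Hence every atom $\phi_i \trace[\cdot] \phi_j$ with $i\le j$ is again relevant, which lets me invoke saturation (Lemma \ref{lem.NWcomplSaturation}) for all atoms arising below; moreover Lemma \ref{lem.thruthLemmaFormulas} gives $\phi_i \in \sfS(\rho_i)$ throughout.

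First I would decompose the match at its visits to $\rho$: let $0 = i_0 < i_1 < \cdots < i_M = N$ enumerate the indices with $\rho_{i_j} = \rho$. Each segment from $(\phi_{i_j},\rho)$ to $(\phi_{i_{j+1}},\rho)$ is either a single local move, or a \emph{detour} that leaves $\rho$ on a modal move and returns on a modal move without meeting $\rho$ in between. It then suffices to produce, for each segment, an atom $\phi_{i_j}\trace[k_j]\phi_{i_{j+1}} \in \sfS(\rho)$ with $k_j = \max\{\Omega(\phi_l) \mid i_j \le l < i_{j+1}\}$; gluing these with \RuTrans (saturation condition 9) yields $\phi\trace[k]\psi$, since $\max_j k_j = k$. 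For a local segment the generating rule (\RuOr, \RuAnd or \RuEta) introduces exactly the required atom of priority $\Omega(\phi_{i_j})$, present by saturation conditions 6--8.

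The interesting segments are the detours. A \emph{downward} detour first moves to the parent state $\rho'$ (an Abelard box-move along $\conv{a}$) and returns upward; its middle is a strictly shorter match at $\rho'$, so the induction hypothesis supplies an atom there, and the second and fourth clause of $\Gamma^{\ldiap \phi}$ (Definition \ref{def.ModalRule}) of the \RuDia step connecting $\rho'$ and $\rho$ translate it into the desired backward-box atom at $\rho$. An \emph{upward} detour $(\ldiap \phi',\rho)\to(\phi',\tau)\to^{*}(\lboxp[\conv{a}]\chi',\tau)\to(\chi',\rho)$ is the genuine obstacle, since here the \RuDia step between $\rho$ and $\tau$ generates trace atoms of the premiss $\tau$ out of atoms of the conclusion $\rho$, i.e. in the \emph{wrong} direction for transporting an atom obtained at $\tau$ back down to $\rho$. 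My plan is to resolve this indirectly, exploiting that the target atom is relevant and decided: by saturation condition 3 either $\ldiap \phi' \trace[k_j] \chi' \in \sfS(\rho)$ (and we are done), or its negation $\ldiap \phi' \ntrace[k_j] \chi' \in \sfS(\rho)$. In the latter case the first clause of $\Gamma^{\ldiap \phi'}$ transports the negated atom \emph{upward} to $\phi' \ntrace[k_j] \lboxp[\conv{a}]\chi' \in \sfS(\tau)$ (its side condition holds since $\lboxp[\conv{a}]\chi' \in \Clos(\phi') \subseteq \ClosN(\Psi)$). Meanwhile the middle match $(\phi',\tau)\to^{*}(\lboxp[\conv{a}]\chi',\tau)$ is strictly shorter, so the induction hypothesis gives the \emph{positive} atom $\phi' \trace[k_j] \lboxp[\conv{a}]\chi' \in \sfS(\tau)$; a short computation shows the two priorities agree, as the boundary diamond and backward box each carry priority $1$, absorbed by $k_j \ge 1$. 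This contradicts saturation condition 4, ruling out the negated atom and forcing $\ldiap \phi' \trace[k_j]\chi' \in \sfS(\rho)$.

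The main difficulty is precisely this upward-detour case: making the \RuDia trace-atom machinery work against the grain by playing off the decidedness (condition 3) and consistency (condition 4) of relevant trace atoms, and verifying that the transported negated atom and the inductively obtained positive atom collide at the \emph{same} priority. A secondary point of care is the degenerate upward detour with empty middle (where $\phi' = \lboxp[\conv{a}]\chi'$ and no inductive atom is available); this base case must be closed separately, appealing to the axioms \AxTraceLoop and \AxTraceNeg together with the fact that such a self-returning modal step carries priority $1$.
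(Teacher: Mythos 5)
Your proposal follows, in all essentials, the paper's own proof. The paper likewise decomposes the witnessing match at its visits to $\rho$ (writing it as $\calA_1\calB_1\cdots\calB_{m-1}\calA_m$ with the $\calB_i$ the maximal blocks away from $\rho$, which by the forest structure of $\bbS^f$ start and end at the same state $\tau_i$), handles the segments at $\rho$ via saturation conditions 6--9, transports atoms up through a \RuDia step via the second and fourth clauses of Definition~\ref{def.ModalRule} for the detours through the parent, and handles detours through a child exactly by your ``indirect'' argument: your proof by contradiction (decidedness gives the negated atom at $\rho$, clause~1 transports it to $\tau$, where it collides with the inductively obtained positive atom, violating condition~4) is precisely the paper's chain $\beta_i \trace \lboxp[\conv{a}]\alpha_{i+1} \in \sfS(\tau_i) \Rightarrow \beta_i \ntrace \lboxp[\conv{a}]\alpha_{i+1} \notin \sfS(\tau_i) \Rightarrow \gamma_i \ntrace \alpha_{i+1} \notin \sfS(\rho) \Rightarrow \gamma_i \trace \alpha_{i+1} \in \sfS(\rho)$ read contrapositively; your closure-membership side conditions and the priority bookkeeping (modal boundary formulas contributing priority $1$) also match the paper's. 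Two small deltas: you induct on the length of the match where the paper inducts on the number of distinct states visited (with an inner length induction for the one-state base case) --- both are sound, since your detour middles are proper subsegments; and your upward-detour case as written covers only the Eloise diamond entry $(\ldiap\phi',\rho)\to(\phi',\tau)$, whereas an upward detour can equally be entered by an Abelard box move ($\gamma = \lboxp\beta$ with $\rho \to[a] \tau$), which the paper handles by the \emph{third} clause of $\Gamma^{\ldiap\phi}$ instead of the first --- same mechanism, so this is easily repaired.

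The one substantive divergence is your degenerate case, and there your proposal is commendably more honest than the paper but its fix does not work as stated. You correctly observe that an empty-middle detour (e.g.\ through $\ldiap\lboxp[\conv{a}]\chi'$ or $\lboxp[a]\lboxp[\conv{a}]\chi'$) defeats the induction, since a trace atom requires a witnessing match of length $n>0$; the paper silently applies its induction hypothesis to every $\calB_i$, including single-position blocks where no such match exists. However, appealing to \AxTraceLoop and \AxTraceNeg does not close it: after transporting the negated atom upward you hold $\lboxp[\conv{a}]\chi' \ntrace[1] \lboxp[\conv{a}]\chi'$, a \emph{negated} self-loop of \emph{odd} priority, while \AxTraceLoop (saturation condition~5) refutes only \emph{positive} self-loops of \emph{even} priority $\phi \trace[2k] \phi$, and \AxTraceNeg fires only in the presence of the matching positive atom $\lboxp[\conv{a}]\chi' \trace[1] \lboxp[\conv{a}]\chi'$ --- which is exactly what the empty middle fails to supply, semantically as well as inductively. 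So this subcase remains genuinely open in your writeup, at precisely the point where it is left implicit in the paper's proof; it needs a separate argument rather than those two axioms.
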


\begin{proof}
	Let $\calN$ be an $\underline{f}$-guided $\calE$-match witnessing $\bbS^f,\rho \Vdash_{\underline{f}} \phi \trace[k] \psi$, meaning that $\calN$ is of the form
	\[
	(\phi,\rho) = (\phi_0,\rho_0) \cdots (\phi_n,\rho_n) = (\psi,\rho), \qquad n > 0
	\] 
	such that $k = \max\{\Omega(\phi_i) \| i =0,\ldots,n-1\}$.
	
	We prove the lemma by an induction on the number of distinct states occurring in $\calN$. For the base case, where the only state occurring in $\calN$ is $\rho$, we proceed with an inner induction on the length of $\calN$. The claim then follows straightforwardly from the definition of $\underline{f}$ and the fact that $\sfS(\rho)$ is saturated (Lemma \ref{lem.NWcomplSaturation}). 
	
	For the induction step let $\calN = \calA_1 \calB_1 \cdots \calB_{m-1} \calA_m$, such that for any position $(\chi,\tau)$ in $\calN$ we have: If $\tau = \rho$, then $(\chi,\tau) \in \calA_i$ for some $i = 1,...,m$; and if $\tau \neq \rho$, then $(\chi,\tau) \in \calB_i$ for some $i = 1,...,m-1$. Because $\bbS^f$ is a forest it follows that there is $\tau_i$ such that $\first(\calB_i) = (\beta_i,\tau_i)$ and $\last(\calB_i) = (\delta_i,\tau_i)$ for $i = 1,...,m-1$. We will fix the following notation for each $i$: 
	\begin{align*}
		\first(\calA_i) &= (\alpha_i,\rho)  \qquad &\first(\calB_i) &= (\beta_i,\tau_i),\\
		\last(\calA_i) &= (\gamma_i,\rho)  \qquad &\last(\calB_i) &= (\delta_i,\tau_i),
	\end{align*}
	By the base case of the induction it holds that $\alpha_i \trace[k_i] \gamma_i \in \sfS(\rho)$ for some $k_i \leq k$. For readability we will omit the subscripts in the trace atoms in the rest of the proof, for instance we will write $\alpha_i \trace \gamma_i$ instead of $\alpha_i \trace[k_i] \gamma_i$. As we closely follow the match $\calN$ we will end up with a trace atom of the form $\phi \trace[k] \psi$, where $k$ is as required. In the matches $\calB_i$ less states occur than in $\calN$, therefore we may apply the induction hypothesis to obtain $\beta_i \trace \delta_i \in \sfS(\tau_i)$ for $i = 1,...,m-1$. We aim to show that $\gamma_i \trace \alpha_{i+1} \in \sfS(\rho)$ for $i = 1,...,m-1$.
	
	Since the match $\calN$ transitions from $\rho$ to $\tau_i$ it holds that $\rho R_a^f \tau_i$ for some action $a$. First consider the case that $\rho \to[a] \tau_i$, then $\gamma_i$ must be of the form $\ldiap[a] \beta_i$ or of the form $\lboxp[a] \beta_i$. Looking at the transition from $\tau_i$ to $\rho$, because by the definition of $\underline{f}$ Eloise only moves upwards in $\bbS^f$, it must hold that $\delta_i = \lboxp[\conv{a}]\alpha_{i+1}$. We then obtain:
	\begin{align*}
		\beta_i &\trace \lboxp[\conv{a}]\alpha_{i+1} \in \sfS(\tau_i) \quad &&(\text{Induction hypothesis})\\
		\Rightarrow \beta_i &\ntrace \lboxp[\conv{a}]\alpha_{i+1} \notin \sfS(\tau_i) ~ &&(\text{Saturation})\\
		\Rightarrow \gamma_i &\ntrace \alpha_{i+1} \notin \sfS(\rho) ~ &&(\text{Definition of }\RuDia)\\
		\Rightarrow \gamma_i &\trace \alpha_{i+1} \in \sfS(\rho) ~ &&(\text{Saturation})
	\end{align*}
	Note that in the second implication we use the fact that $\lboxp[\conv{a}]\alpha_{i+1} \in \ClosN(\sfS(\rho))$. This follows as $\phi \in \sfS(\rho)$ by assumption and $\lboxp[\conv{a}]\alpha_{i+1} \in \Clos(\phi)$.
	In the third implication we rely on $\gamma_i \in \Clos(\phi) \subseteq \ClosN(\sfS(\rho))$.
	
	Now consider the case that $\tau_i \to[\conv{a}] \rho$. Then $\gamma_i = \lboxp[a]\beta_i$ for some action $a$ and $\delta_i$ is of the form $\ldiap[\conv{a}]\alpha_{i+1}$ or $\lboxp[\conv{a}]\alpha_{i+1}$. Here we find:
	\begin{align*}
		\beta_i &\trace \delta_i \in \sfS(\tau_i) \quad &&(\text{Induction hypothesis})\\
		\Rightarrow \lboxp[a]\beta_i &\trace \alpha_{i+1} \in \sfS(\rho) ~ &&(\text{Definition of }\RuDia)
	\end{align*}
	For this implication to hold it is required that $\lboxp[a]\beta_i \in \ClosN(\sfS(\tau_i))$. As $\lboxp[a]\beta_i \in \Clos(\sfS(\rho))$ and all \NWtwo rules are analytic, this implies $\lboxp[a]\beta_i \in \ClosN(\sfS(\tau_i))$ indeed.
	
	In both cases $\gamma_i \trace \alpha_{i+1} \in \sfS(\rho)$ and, as also $\alpha_i \trace \gamma_i \in \sfS(\rho)$ for $i = 1,...,m$, we can combine these statements using saturation to obtain $\alpha_1 \trace \gamma_m \in \sfS(\rho)$. Because we closely followed the match $\calN$ this yields $\phi \trace[k] \psi \in \sfS(\rho)$ as required.
\end{proof}

\begin{proposition}\label{prop.Completeness}
	Let $\rho_0$ be a state of $\bbS^f$ containing the root $\Gamma$ of $\calT$ and let $\psi_0 \in \Gamma$. Then the strategy $\underline{f}$ is winning for $\eloi$ in $\calE@(\psi_0,\rho_0)$.
\end{proposition}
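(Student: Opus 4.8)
The plan is to show that every $\underline{f}$-guided match $\calM$ of $\calE$ starting at $(\psi_0,\rho_0)$ is won by $\eloi$, treating in turn the well-definedness of $\underline{f}$, finite matches, and infinite matches. The two standing tools are the truth lemma for formulas (Lemma~\ref{lem.thruthLemmaFormulas}), which guarantees $\psi\in\sfS(\rho)$ for every position $(\psi,\rho)$ occurring in $\calM$, and the saturation of each $\sfS(\rho)$ (Lemma~\ref{lem.NWcomplSaturation}).

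First I would verify that $\underline{f}$ is total on the $\eloi$-positions reachable in $\calM$. At a disjunction $(\phi_0\lor\phi_1,\rho)$ we have $\phi_0\lor\phi_1\in\sfS(\rho)$, so by condition~(7) of Lemma~\ref{lem.NWcomplSaturation} some disjunct lies in $\sfS(\rho)$ and may be selected. At a diamond $(\ldiap\phi,\rho)$ we have $\ldiap\phi\in\sfS(\rho)$, and the construction of $\bbS^f$ supplies a state $\tau$ with $\rho\to[a]\tau$ whose modal rule has principal formula $\ldiap\phi$, so that $(\phi,\tau)$ is an admissible move; this is where the two-way proof-search discipline is needed. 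For a finite full match, $\calM$ ends in a literal position with no admissible moves. If it ends at $(p,\rho)$ then $p\in\sfS(\rho)$ by Lemma~\ref{lem.thruthLemmaFormulas}, hence $\rho\in V^f(p)$ and the position is owned by $\abel$, who is stuck; if it ends at $(\mybar{p},\rho)$ then $\mybar{p}\in\sfS(\rho)$, so $p\notin\sfS(\rho)$ by condition~(1) of Lemma~\ref{lem.NWcomplSaturation}, i.e. $\rho\notin V^f(p)$, and again $\abel$ is stuck. Either way $\eloi$ wins.

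The heart of the matter is the infinite case. Suppose for contradiction that some infinite $\underline{f}$-guided match $\calM=(\psi_n,\rho_n)_{n<\omega}$ is a $\mu$-trace; let $\delta$ be its dominant (fixpoint) formula and $k=\Omega(\delta)$ the maximal priority occurring infinitely often. I would distinguish whether $\delta$ occurs at infinitely many distinct states along $\calM$. If it occurs at only finitely many, then it occurs infinitely often at a single state $\rho^{*}$; choosing two occurrences of $(\delta,\rho^{*})$ far enough apart that every priority strictly in between is $\le k$ while $k$ is attained, the intervening part of $\calM$ witnesses $\bbS^f,\rho^{*}\sat_{\underline{f}}\delta\trace[k]\delta$. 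Since $\delta$ is a fixpoint formula this atom is relevant, so Lemma~\ref{lem.truthLemmaTraces} places $\delta\trace[k]\delta$ in $\sfS(\rho^{*})$; as $k$ is even this contradicts condition~(5) of Lemma~\ref{lem.NWcomplSaturation} (equivalently, the axiom \AxTraceLoop). If instead $\delta$ occurs at infinitely many states, these have unbounded depth in the finitely-branching tree $\calT$, and by König's Lemma one extracts an infinite branch $\alpha$ along which $\calM$ makes cofinal upward progress. Reading $\calM$ off along $\alpha$ yields an infinite trail $\tau$: the climbs from $\alpha(d)$ to $\alpha(d+1)$ give upward trails, while each excursion that leaves and returns to a node of $\alpha$ is recorded as a detour trail, whose existence is again furnished by the relevant trace atom provided by Lemma~\ref{lem.truthLemmaTraces}. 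The weights along $\tau$ track the priorities $\Omega(\psi_n)$ (the rules $\RuEta$ contribute the weight of the unfolded fixpoint, a detour the maximal priority of its excursion), so the maximal weight occurring infinitely often is $k$ and $\tau$ is a $\mu$-trail. But every infinite branch of $\calT$ is an $f$-guided match won by Builder and hence carries no $\mu$-trail, which is the desired contradiction.

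I expect the last step to be the main obstacle: organising the up-and-down walk of $\calM$ — which, because of the converse modalities, moves both towards and away from the root — into a genuine branch-with-detours of $\calT$, and checking that the weight bookkeeping faithfully turns the $\mu$-trace of $\calM$ into a $\mu$-trail on $\alpha$. The delicate points are ensuring that every downward excursion really returns to a branch node (so that it is captured by a detour trail rather than escaping), reconciling the ``some state recurs'' and ``escaping to infinity'' behaviours while keeping the self-loop atom relevant, and matching priorities exactly so that the even dominant priority $k$ is preserved. This is precisely the completeness-side mirror of the trace-atom reasoning used in the soundness proof, where detour trails were unpacked into evaluation matches rather than read off from them.
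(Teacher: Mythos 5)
Your overall strategy matches the paper's: rule out an infinite $\underline{f}$-guided $\mu$-trace by contradiction, using Lemma~\ref{lem.thruthLemmaFormulas} and Lemma~\ref{lem.truthLemmaTraces} to push trace atoms into the sequents, with \AxTraceLoop{} (saturation condition~(5)) killing the ``recurrent'' behaviour and a glued $\mu$-trail on a branch of $\calT$ killing the ``escaping'' behaviour. But your case split is on the wrong parameter, and this creates a genuine gap. You split on whether the dominant formula $\delta$ occurs at finitely or infinitely many states of $\bbS^f$; the structural dichotomy the argument needs is whether some \emph{state} is visited infinitely often by the match. These do not align: it is perfectly possible that a state $\rho^{*}$ recurs cofinally while $\abel$ drives the match on ever deeper excursions, so that $\delta$ occurs at infinitely many distinct states but $(\delta,\rho^{*}')$ recurs at no single state. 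In that mixed scenario your case (b) is triggered, yet its key claim --- that K\"onig's Lemma yields a branch $\alpha$ ``along which $\calM$ makes cofinal upward progress'' --- is false: the match keeps returning to $\rho^{*}$, so it converges to no end of the tree, and the excursion-into-detour bookkeeping as you describe it has nothing to attach to. You flag exactly this point (``ensuring that every downward excursion really returns to a branch node'') as the expected obstacle, but it is not a technicality to be checked; as stated, the extraction step fails on a whole class of matches.

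The paper avoids this by splitting on state recurrence. If some state is visited infinitely often, it invokes positional determinacy: with $\abel$ assumed positional (and $\underline{f}$ positional by construction), a recurring position forces $\calM = \calK\calN^{*}$, so only finitely many states are visited, the dominant $\mu$-formula recurs at a single state, and \AxTraceLoop{} applies --- this covers the mixed scenario that your split misses. If every state is visited finitely often, the walk genuinely escapes along an end, and the paper defines indices $\alpha(n)\leq\beta(n)$ with $\rho_{\alpha(n)}=\rho_{\beta(n)}$ and a single modal step $\rho_{\beta(n)}\to[a_n]\rho_{\alpha(n+1)}$, tiling the match into returning excursions (one detour trace atom $\psi_{\alpha(n)}\trace[k(n)]\psi_{\beta(n)}$ each, via Lemma~\ref{lem.truthLemmaTraces}) plus weight-$1$ climbs --- which is your construction, now legitimate because the branch is the walk's limit. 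Your case (a), incidentally, is sound and even avoids positional determinacy (the ``two occurrences far enough apart'' trick directly witnesses $\delta\trace[k]\delta$, with relevance secured since $\delta$ is a fixpoint formula); your architecture could be repaired either by re-splitting on state recurrence and applying the self-loop argument to \emph{any} recurring position (relevance then follows because the dominant formula is a subformula of cofinitely many match formulas), or by tiling along the first upward crossings of the edges of an infinite branch of the visited subtree. But as written, the proposal does not close its central case.
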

\begin{proof}
	Let $\calM$ be an arbitrary $\underline{f}$-guided $\calE@(\psi_0,\rho_0)$-match.
	If $\calM$ is a finite match, then it is straightforward to check that it is won
	by $\eloi$.
	
	Suppose that $\calM= (\psi_n,\rho_n)_{n \in \omega}$ is infinite, and to arrive
	at a contradiction assume that $\abel$ wins $\calM$.
	By positional determinacy we may assume that his strategy is positional. We make a case distinction.
		
	First assume that there is a state $\rho$ that is visited infinitely often. 
	Then there must be a segment $\calN$ of $\calM$ such that $\first(\calN) =
	\last(\calN) = (\psi,\rho)$ for some formula $\psi$. 
	As the match is positional this means that $\calM = \calK \calN^*$ for some 
	initial segment $\calK$ of $\calM$, meaning that only finitely many states are 
	visited.
	By our assumption the match $\calM$ is winning for $\abel$, thus the most 
	important fixpoint formula occurring infinitely often is of the form $\mu x.
	\psi$. 
	Let $k = \Omega(\mu x. \psi)$. Because only finitely many states are visited, 
	there has to be a position $(\tau, \mu x. \psi)$ occurring infinitely often in 
	$\calM$ and thus $\bbS^f, \tau \sat_{\underline{f}} \mu x. \psi \trace[k] 
	\mu x. \psi$. 
	Then Lemma \ref{lem.thruthLemmaFormulas} yields $\mu x. \psi \in \sfS(\tau)$ and Lemma \ref{lem.truthLemmaTraces} gives $\mu x. \psi \trace[k] 
	\mu x. \psi \in \sfS(\tau)$. But in this case \AxTraceLoop would be applicable, 
	contradicting the assumption that $\calM$ is infinite.
	
	Now consider the case that $\calM= (\psi_n,\rho_n)_{n \in \omega}$ visits each state at most finitely often. Then there are sequences of indices $(\alpha(n))_{n \in \omega}, (\beta(n))_{n \in \omega} \in \omega^\omega$, such that 
	\begin{itemize}
		\item $\alpha(n) \leq \beta(n)$ and $\beta(n)+1 = \alpha(n+1)$ for all $n \in \omega$,
		\item $\rho_{\alpha(n)} = \rho_{\beta(n)}$ for all $n \in \omega$ and
		\item $\psi_{\beta(n)}$ is modal for every $n \in \omega$ and there is an action $a_n$ such that $\rho_{\beta(n)} \to[a_n] \rho_{\alpha(n+1)}$.
	\end{itemize}
	These indices can be defined by induction rather straightforwardly.
	
	Again assume that $\calM$ is winning for $\abel$, then there is $N \in \omega$ 
	such that for some even $k$ it holds that $\Omega(\psi_n)\leq k$ for all $n \geq N$ 
	and $\Omega(\psi_n) = k$ for infinitely many $n \geq N$. 
	Therefore it holds that $\bbS^f, \rho_{\alpha(n)} \sat_{\underline{f}} 
	\psi_{\alpha(n)} \trace[k(n)] \psi_{\beta(n)}$, where $k(n) \leq k$ for all 
	$n \geq N$ and $k(n) = k$ for infinitely many $n> N$. 
	Lemma \ref{lem.truthLemmaTraces} together with Lemma \ref{lem.thruthLemmaFormulas} yields that $\psi_{\alpha(n)} \trace[k(n)]
	\psi_{\beta(n)} \in \sfS(\rho_{\alpha(n)})$ and because Prover only applies cumulative rules in $\calT$ this implies $\psi_{\alpha(n)} \trace[k(n)]
	\psi_{\beta(n)} \in \last(\rho_{\alpha(n)})$.
	
	Clearly there is a trail $\tau_n$ from $\psi_{\beta(n)}$ at $\last(\rho_{\alpha(n)})$ to $\psi_{\alpha(n+1)}$ at $\first(\rho_{\alpha(n+1)})$ of weight $1$; there is only one modal rule applied. 
	Again because Prover only applies cumulative rules in $\calT$ there are trails $\tau_n'$ of weight 1 from $\psi_{\alpha(n)}$ at $\first(\rho_{\alpha(n)})$ to $\psi_{\alpha(n)}$ at $\last(\rho_{\alpha(n)})$.
	
	Thus we obtain the weighted trail
	\[\tau = (\psi_{\alpha(0)}, \psi_{\beta(0)},k(0))\cdot \tau_0 \cdot \tau_1' \cdot (\psi_{\alpha(1)}, \psi_{\beta(1)},k(1))\cdot \tau_1\cdots\]
	where  $\max\{l \| l\text{ appears infinitely often on } \tau\} = k$ is even and therefore $\tau$ is a $\mu$-trail. Yet this contradicts the fact that $\calG(\Phi)@\Gamma$ is winning for Builder.
\end{proof}

\setcounter{theorem}{\getrefnumber{thm.CompleteenessNWtwo}}
\addtocounter{theorem}{-1}
\begin{theorem}[Completeness]
	If a pure sequent $\Gamma$ is unsatisfiable, then $\Gamma$ is provable in \NWtwo.
\end{theorem}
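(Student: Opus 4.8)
The plan is to prove the contrapositive: assuming that $\Gamma$ is not provable in \NWtwo, I would produce a pointed model satisfying $\Gamma$. The bridge between provability and satisfiability is the proof search game $\calG(\Phi)$. Since an \NWtwo proof of $\Gamma$ is precisely a winning strategy for Prover in $\calG(\Phi)@\Gamma$, and since $\calG(\Phi)$ is an $\omega$-regular game and hence determined, the failure of $\Gamma$ to be provable hands Builder a winning strategy $f$ in $\calG(\Phi)@\Gamma$. Everything else is then a matter of reading off a model and an \eloi-strategy from $f$.

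Concretely, I would form the tree $\calT$ of the $f$-guided plays in which Prover follows the fixed rule-priority discipline (axioms first, then the saturating propositional, fixpoint and \RuTrans rules, then the analytic cuts \RuCut and \RuTCut, and finally \RuDia), and take $\bbS^f$ to be the model whose states are the maximal modal-rule-free upward paths $\rho$ in $\calT$, with $R_a^f$ and $V^f$ defined as above. I would also take the positional \eloi-strategy $\underline{f}$ that at a disjunction picks a disjunct present in $\sfS(\rho)$ and at a diamond $\ldiap\phi$ follows the corresponding \RuDia step. The saturation of each $\sfS(\rho)$ (Lemma \ref{lem.NWcomplSaturation}) guarantees that $\underline{f}$ is indeed well-defined.

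To finish, I would fix a state $\rho_0$ containing the root $\Gamma$ of $\calT$, so that $\Gamma \subseteq \sfS(\rho_0)$. By Proposition \ref{prop.Completeness}, for every $\phi_0 \in \Gamma$ the strategy $\underline{f}$ is winning for \eloi in $\calE(\Land\Phi,\bbS^f)@(\phi_0,\rho_0)$, that is, $\bbS^f,\rho_0 \sat_{\underline{f}} \phi_0$. As this holds for every $\phi_0 \in \Gamma$, we obtain $\bbS^f,\rho_0 \sat_{\underline{f}} \Gamma$, hence $\bbS^f,\rho_0 \sat \Gamma$ and $\Gamma$ is satisfiable, contradicting the assumption --- which is exactly the required contrapositive.

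The genuine difficulty is entirely contained in Proposition \ref{prop.Completeness}, and within it in the trace truth lemma (Lemma \ref{lem.truthLemmaTraces}): one must show that whenever \eloi can realise a detour $\phi \trace[k] \psi$ inside $\bbS^f$, the corresponding trace atom is already present in $\sfS(\rho)$. This is where the two-way nature bites, since such a detour leaves $\rho$ along an $a$-edge and returns along a converse edge, and the argument has to convert the presence of $\beta_i \trace \delta_i$ at the neighbouring state back into a trace atom at $\rho$, using the precise bookkeeping built into \RuDia together with saturation. Granting that proposition, the completeness theorem itself is the short assembly sketched above.
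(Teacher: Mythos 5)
Your proposal is correct and follows essentially the same route as the paper: contraposition via determinacy of the $\omega$-regular proof search game, the model $\bbS^f$ built from maximal modal-rule-free paths in the tree $\calT$ induced by Builder's strategy and Prover's rule-priority discipline, and the single positional strategy $\underline{f}$ shown winning at every $(\phi_0,\rho_0)$ by Proposition~\ref{prop.Completeness}. You also correctly locate the real work in the trace truth lemma (Lemma~\ref{lem.truthLemmaTraces}), which is exactly where the paper's effort is concentrated.
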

\begin{proof}
	Follows by contraposition from Proposition \ref{prop.Completeness}.
\end{proof}

\subsection{Tracking automaton}\label{app.trackingAut}

\setcounter{theorem}{\getrefnumber{lem.slimTrails}}
\addtocounter{theorem}{-1}
\begin{lemma}
	Let $\pi$ be a saturated $\NWtwo$ proof of $\Gamma$. On every infinite branch of $\pi$ there is a slim $\mu$-trail.
\end{lemma}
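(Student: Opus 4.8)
The plan is to start from an arbitrary $\mu$-trail $\tau$ on the given infinite branch $\alpha$ — such a trail exists precisely because $\pi$ is assumed to be an $\NWtwo$ proof — and to massage it into a slim $\mu$-trail by two local rewriting steps, each justified by one of the saturation hypotheses. Throughout, write $m$ for the (even) maximal weight occurring infinitely often on $\tau$; the key background fact I will lean on repeatedly is that then every weight strictly above $m$ occurs only finitely often on $\tau$, and that $m \geq 2$.

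First I would remove the unfolding upward trails, thereby establishing condition (ii) of slimness. Each such trail has the form $(\eta x.\phi, \phi\subst{\eta x.\phi}{x}, k)$ with $k = \Omega(\eta x.\phi)$, and arises from an application of $\RuEta$ with principal formula $\eta x.\phi$ and premiss node $v$. Since $\pi$ is saturated, this application is cumulative, so $\eta x.\phi$ persists into $\sfS(v)$ and the stay trail $(\eta x.\phi, \eta x.\phi, 1) \in \sfT_{v',v}$ is available (here $v'$ is the conclusion node). Moreover the premiss of $\RuEta$ explicitly contains the trace atom $\eta x.\phi \trace[k] \phi\subst{\eta x.\phi}{x}$, which yields a detour trail $(\eta x.\phi, \phi\subst{\eta x.\phi}{x}, k) \in \sfT_{v,v}$. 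I would therefore replace every unfolding upward trail by the stay trail followed by this detour. The composite still runs from $\eta x.\phi$ to $\phi\subst{\eta x.\phi}{x}$, so the trail stays well formed as a word in the required regular expression, and afterwards all upward trails carry weight $1$ while every weight exceeding $1$ sits on a detour trail.

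Second I would merge consecutive detour trails to obtain condition (i). A maximal block of consecutive detours lives at a single node $v$, since they all come from one factor $(\sfT_{v,v})^*$ of the trail; because $\pi$ is saturated, $\RuTrans$ has been applied exhaustively, so the composite trace atom is again present at $v$ and the whole block collapses to a single detour trail whose weight is the maximum of the block. This step touches only detour trails, so it preserves condition (ii) already obtained, and the resulting trail satisfies both (i) and (ii).

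The step I expect to be the main obstacle is verifying that neither rewriting destroys the $\mu$-condition, i.e. that $m$ remains the maximal weight occurring infinitely often. For the first step this is immediate, as it only inserts weight-$1$ trails and $m \geq 2$. For the merging step one must argue two things. First, no weight above $m$ becomes infinitely frequent: a collapsed block has weight equal to the largest weight inside it, and a block producing a weight $> m$ must already contain such a weight, of which there are only finitely many. Second, the weight $m$ still occurs infinitely often: after the first step there are infinitely many weight-$m$ detours, and only finitely many of them can be absorbed into a block whose maximum exceeds $m$, since such blocks are finite in number (each requires a weight $> m$) and each is finite in length; all remaining weight-$m$ detours sit in blocks of maximum $m$ and hence survive with weight $m$. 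Consequently the maximal weight occurring infinitely often on the rewritten trail is still $m$, which is even, so it is a slim $\mu$-trail on $\alpha$, as required.
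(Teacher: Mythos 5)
Your proposal follows essentially the same route as the paper's proof: you obtain condition (ii) by using cumulativity of \RuEta to replace each unfolding upward trail $(\eta x.\phi,\phi[\eta x.\phi/x],k)$ by the stay trail $(\eta x.\phi,\eta x.\phi,1)$ followed by the detour trail supplied by the trace atom $\eta x.\phi \trace[k] \phi[\eta x.\phi/x]$ in the premiss, and condition (i) by collapsing consecutive detours via \RuTrans-saturation. There is, however, one concrete inaccuracy in your merging step: you assert that the composite trace atom is ``again present at $v$'' whenever a block of detours sits at $v$, but this fails precisely when $v$ is the \emph{conclusion} of the relevant \RuTrans application, since the rule places $\phi \trace[\max\{k,l\}] \chi$ only in the premiss, i.e.\ in the child's sequent, not in $\sfS(v)$. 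The paper patches this with a preliminary normalization that you omit: because the upward trail relation of \RuTrans consists solely of stay trails $(\phi,\phi,1)$, any detour trail at a \RuTrans-labelled node can be relocated to its child, so one may assume all detour blocks sit at nodes \emph{not} labelled \RuTrans, and at such nodes saturation (``\RuTrans is applied whenever applicable'') really does force the composite atom to be in $\sfS(v)$; note that this relocation terminates, as each productive \RuTrans application adds a new atom from the finite set of trace atoms over $\Phi$. Apart from this easily repaired wrinkle your argument is correct, and your closing verification that the rewriting preserves the $\mu$-condition --- inserted trails carry weight $1 < m$, blocks with maximum exceeding $m$ are finite in number and in length, and infinitely many weight-$m$ detours survive in blocks of maximum exactly $m$ --- is actually more careful than the paper, which leaves this preservation implicit.
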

\begin{proof}
	Let $\pi$ be a saturated \NWtwo proof of $\Gamma$. Let $\alpha$ be a branch of $\pi$ and $\tau$ be a $\mu$-trail on $\alpha$. 			
	For condition (ii) assume that there is an upward trail $(\eta x. \phi,\phi\subst{\eta x. \phi}{x},k) \in \sfT_{u,v}$ on $\tau$. The sequent $\sfS_v$ contains $\eta x. \phi \trace[k] \phi\subst{\eta x. \phi}{x}$ and $\eta x. \phi$, because the application of \RuEta is cumulative. Thus we can replace the trail $(\eta x. \phi,\phi\subst{\eta x. \phi}{x},k) \in \sfT_{u,v}$ in $\tau$ by $(\eta x. \phi,\eta x. \phi,1)(\eta x. \phi,\phi\subst{\eta x. \phi}{x},k) \in \sfT_{u,v}\sfT_{v,v}$.
	
	Regarding condition (i) we first assume that in $\tau$ there are only detour trails at nodes, that are labelled by a different rule than \RuTrans. This is not a restriction as all upward trail relations for the rule \RuTrans are of the form $(\phi,\phi,1)$, thus we can apply the same detour trail at its child. Assume that there is a subword of $\tau$ consisting of two detour trails $(\phi,\psi,k)(\psi,\chi,l)$, where $(\phi,\psi,k),(\psi,\chi,l) \in \sfT_{u,u}$. As $u$ is not labelled by \RuTrans and \RuTrans is always applied if applicable, also $(\phi,\chi,\max\{k,l\}) \in \sfT_{u,u}$. Hence we can replace $(\phi,\psi,k),(\psi,\chi,l)$ by $(\phi,\chi,\max\{k,l\})$. Doing this for all upward trails of the form $(\eta x. \phi, \phi\subst{\eta x. \phi}{x},k)$ and subwords consisting of two consecutive detour trails results in a slim $\mu$-trail $\tau'$.
\end{proof}

\subsection{Determinisation of $\epsilon$-parity automata}\label{app.determinisation}

\setcounter{theorem}{\getrefnumber{thm.determinisationCorrectness}}
\addtocounter{theorem}{-1}
\begin{theorem}
	The automaton $\bbA^S$ is equivalent to $\bbA$.
\end{theorem}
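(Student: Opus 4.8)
The plan is to prove the two language inclusions $\calL(\bbA)\subseteq\calL(\bbA^S)$ and $\calL(\bbA^S)\subseteq\calL(\bbA)$ separately, following the structure of Safra's correctness argument for Büchi determinisation \cite{Safra1988} but adapting it to the two features specific to our setting: the $\epsilon$-transitions, which are handled through the priority-indexed closures $\eClos_k$, and the parity acceptance condition, which is handled through the level structure of the stacks $T_n\cdots T_0$. Throughout I would exploit that $\bbA^S$ is deterministic, so each stream $w=z_0z_1\cdots$ induces a unique run $S_0S_1\cdots$, and it suffices to analyse this single run.

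First I would establish the fundamental reachability invariant. Writing $A_i$ for the macrostate of $S_i$, an induction on $i$ using the Basic-move and $\epsilon$-Move steps of $\delta_A$ shows that $A_i$ is exactly the set of states of $\bbA$ reachable from $a_I$ by a run reading $z_0\cdots z_{i-1}$, so that $\bbA^S$ performs a faithful subset construction. The semantic content to pin down alongside this is the meaning of the stack $f(a)$ attached to $a\in A_i$: a $k$-name $\dx$ occurring in $f(a)$ should witness a run reaching $a$ on which $\dx$ was introduced (at a position of priority $k$, or in an even $\epsilon$-move) and never since displaced by a strictly higher priority. Making this correspondence between names and families of runs precise, and stable under the Thin and Reset operations, is the technical heart of the whole argument.

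For $\calL(\bbA)\subseteq\calL(\bbA^S)$ I would fix an accepting run $\rho=a_0a_1\cdots$ of $\bbA$ on $w$ and let $2k$ be the largest priority occurring infinitely often on $\rho$; by acceptance $2k$ is even. Using the reachability invariant, each $a_i$ lies in $A_i$ and carries a stack. Restricting to the level-$2k$ prefix of these stacks and using that priorities strictly above $2k$ occur only finitely often, I would argue that from some point on the $2k$-name tracking $\rho$ stabilises to a fixed name $\dx$ that thereafter remains active. Each time $\rho$ visits a position of priority $2k$, the runs tracked below $\dx$ are forced to re-collapse into $\dx$'s cell, making $\dx$ invisible and hence coloured $\green$ in the Reset step; since priority $2k$ recurs infinitely often, $\dx$ is reset infinitely often, so the Rabin condition of $\bbA^S$ holds.

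The converse inclusion $\calL(\bbA^S)\subseteq\calL(\bbA)$ is where I expect the real work to lie, mirroring the hardest part of Safra's proof. Suppose the run of $\bbA^S$ is accepting, witnessed by a $2k$-name $\dx$ active from some index $N$ onwards and coloured $\green$ at indices $t_0<t_1<\cdots$. For each $t_j$ I would extract, from the bookkeeping recorded by $\dx$, the finite $\bbA$-run segments collapsed into $\dx$'s cell between $t_{j-1}$ and $t_j$; the invariant to maintain is that each such segment has maximal priority exactly $2k$, reaching $2k$ because that is what triggers the collapse and never exceeding it because $\dx$ staying active means no higher-priority name supersedes it. Together with the reachability invariant these segments organise into a finitely branching tree of partial runs, so by König's Lemma I would extract an infinite run of $\bbA$ on $w$ whose priorities are bounded by $2k$ and hit $2k$ infinitely often; hence its largest infinitely recurring priority is $2k$, which is even, and the run is accepting. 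The main obstacle is precisely the segment-extraction and the verification that "active cofinitely often and green infinitely often" translates into an even dominant priority along a genuine run; once the invariants relating stacks, activeness and colouring to the priorities of underlying runs are stated carefully, both inclusions follow along the lines of \cite{Safra1988}.
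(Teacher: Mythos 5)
Your proposal takes essentially the same route as the paper's proof: both inclusions are handled Safra-style, with the subset/run-tracking invariant, a stabilised name that is coloured $\green$ infinitely often for $\calL(\bbA)\subseteq\calL(\bbA^S)$, and, for the converse, extraction of run segments of maximal priority exactly $k$ between consecutive green colourings of the witnessing name, glued into an accepting run of $\bbA$ by K\"onig's Lemma (the paper's Claims 3 and 4). The only divergence is bookkeeping: the paper stabilises the length-$h$ prefix of the tracked stacks with $h = \liminf |\tau_j|$ (using well-foundedness of the control order, its Claim 2) and then \emph{derives} that the $h$-th name is a $k$-name, rather than tracking a $2k$-name directly as you do.
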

\begin{proof}
	``$\supseteq$'': Let $r = (a_0,n_0)(a_1,n_1)\cdots$ be an accepting run of $\bbA$ on some word $w$. We want to show that the unique run $\rho = S_0S_1\cdots$ of $\bbA^S$ on $w$ is accepting.
	
	We define a sequence of natural numbers $m(0)<m(1)<\cdots$ such that $m(j) = \max\{i \| n_i = j\}$ for $j \geq 0$. Intuitively $m(j)$ is the last index in the run $r$ such that $j$-many basic transitions were applied. In other words in the run $r$ at index $m(j)$ the $j+1$-th basic transition is applied.
	
	\claim{1}
	For every $j \in \omega$ there is a a unique stack $\tau_j$ such that 
	$a_{m(j)}^{\tau_j}$ is in the Safra-state $S_j$. 
	
	\claimproof{1}
	By induction on $j$. It holds that $a_0 = a_I$ and $m(j) = 0$, as we assume that $\Delta_{\epsilon}(a_I) = \emptyset$. By definition $S_0 = \{a_I^\epsilon\}$.
	
	Now assume that $a_{m(j)}^{\tau_j} \in S_j$. After step 1 of the transition function $a_{m(j)+1}^{\tau_j} \in S_j'$. In the run $r$ between $(a_{m(j)+1},j)$ and $(a_{m(j+1)},j)$ all transitions are $\epsilon$-transitions. Therefore after step 3 of the transition function $a_{m(j+1)}^{\tau'} \in S_j'$ for some $\tau'$. After that elements are removed such that we end up with an unique $\tau_{j+1}$ with $a_{m(j+1)}^{\tau_{j+1}} \in S_{j+1}$.
	\claimproofend
	
	We will now analyse the sequence $(\tau_j)_{j\in \omega}$. 
	Let $h \isdef \lim\inf |\tau_j|$, that is, $h$ is the maximal number such that 
	cofinitely many $\tau_{j}$ have size at least $h$.
	Let $J_0$ be such that $|\tau_j| \geq h$ for all $j \geq J_0$. 
	For $0 \leq l \leq h$ we let $\tau \fina{l}$ denote the stack consisting of the first $l$ names in $\tau$. We say that $\tau_j\fina{l}$ is constant for $j \geq J$ if for all $i,j \geq J$ it holds that $\tau_i\fina{l} = \tau_j\fina{l}$.
	
	\claim{2}
	There exists $J \in \omega$ such that $\tau_j\fina{h}$ is constant for $j \geq J$.
	
	\claimproof{2}
	By induction on $l$ we prove that there exist $J_l \geq J_0$ such that $\tau_j\fina{l}$ is constant for $j \geq J_l$, for all $0 \leq l \leq h$. 
	For $l = 0$ this is trivial. Now assume that it holds for $l < h$. 
	For simpler notation write $g \isdef J_l$, and let $\dx$ and $\sigma_{g}$ be such 
	that $\tau_g = \tau_g\fina{l} \cdot \dx \cdot \sigma_g$. 
	Let $\theta_j$ denote the control in the Safra-state $S_j$. 
	The only way that $\tau_j\fina{l+1}$ might change for $j \geq J_l$ is in step 4
	of the transition function, if $\tau_j = \tau_j\fina{l}\cdot \dy \cdot \sigma_j$ 
	with $\dy <_{\theta_j} \dx$. 
	As every newly introduced name is added as the last element in $\theta$ this 
	implies that already $\dy <_{\theta_g} \dx$. 
	If $\tau_i[l+1]$ changes again, then there is $\dz <_{\theta_i} \dy$, which already
	implies $\dz <_{\theta_g} \dy$ and so on. 
	As there are only finitely many names below $\dx$ in $<_{\theta_g}$ the stack 
	$\tau_j[l+1]$ can only change finitely often for $j \geq J_l$ and thus for
	some $J_{l+1} \geq J_l$ it must hold that $\tau_j\fina{l+1}$ is constant for 
	$j \geq J_{l+1}$.
	\claimproofend
	
	Let $J \in \omega$ be  as given in Claim 2 and let $\dx$ be the $h$-th name in $\tau_J$. For $j \geq J$ the name $\dx$ is always active. 
	The run $r$ is accepting, thus there is an even $k$ such that $\Omega(a_j) = k$ for infinitely many $j$ and $\Omega(a_j)\leq k$ for all $j \geq T$ from some time $T$ onwards. We assume that $J$ is picked big enough such that $J \geq m(T)$. Therefore for some $j \geq J$ a $k$-name $\dy$ is added to the stack $\tau_j$. But we have $|\tau_i| = h$ for some $i \geq j$, and this can only happen in step 4 of the transition function if $\dx$ was invisible and thus $\dx$ is coloured green. Note that this also implies that $\dx$ is a $k$-name. Repeating this argument yields that $\dx$ is active cofinitely often and is coloured green infinitely often in $\rho$.

	\bigskip
	``$\subseteq$'': Assume that there is an accepting run $\rho = S_0S_1\cdots$ of 
	$\bbA^S$ on $w$. 
	Let $\dx$ be a $k$-name, that is active cofinitely often and coloured green 
	infinitely often. Let $t(0) < t(1) < \cdots $ be the minimal indices such that $\dx$ is in play 
	in $S_j$ for every $j \geq t(0)$ and such that $\dx$ is green in $S_{t(i)}$ for every $i \in \omega$. 
	
	For $j \in \omega$ let $S_{t(j)} = (A_j, f_j, \theta_j,c_j)$.
	For $p,q \in \omega$ let $w[p,q)$ denote the segment $z_p\cdots z_{q-1}$ of the 
	infinite word $w= z_0z_1\cdots$. 
	In particular $w = w[0,t_0)\cdot w[t_0,t_1)\cdots$.
	Our goal is to find certain $[t(j),t({j+1}))$-labelled paths\footnote{That is, paths $a_0\cdots a_k$ in $\bbA$ on input $[t(j),t({j+1}))$ starting at state $a_0$.}
	in $\bbA$ which can be composed to obtain a successful run. 
	These will be formalized in the following claims. Afterwards we can glue together those paths to obtain an infinite run of $\bbA$ on the word $w$. 
	
For $j \in \omega$ let $B_j$ be the set of states in the macrostate $A_{j}$
which have $\dx$ in their stack. 
Formally, $B_j \isdef \{b \in A_j \| \dx \text{ occurs in } f_j(b)\}$.
	
	\claim{3}
	For every $a \in B_0$ there is an $w[0,t_0)$-labelled path from $a_I$ to $a$.
	
	\claimproof{3}
	For all $i = 0,\ldots,t(0)$ let $C_i \subseteq A$ be the macrostate in $S_i$. 
	For all $b \in C_{i+1}$ there is $a \in C_i$ such that there exists $c \in A$
	with $\Delta_b(a,c_i)=c$ and $b \in \eClos(a)$. 
	This follows from the definition of step 1 and 3 of the transition function. 
	The other steps only manipulate stacks but do not change macrostates.
	The claim then follows by induction.	
	\claimproofend
	
\claim{4}
For all $j > 0$ and all $b \in B_{j+1}$ there is a state $a \in B_j$ and a 
$w[t_j,t_{j+1})$-labelled path $c_0\cdots c_h$ with $a = c_0$, $b = c_h$
and	$\max\{\Omega(c_j) \mid i = 1,\ldots,h\} = k$.

\claimproof{4}
As in the proof of Claim 3 we can show that there is $a \in A_j$ and a 
$w[t_j,t_{j+1})$-labelled path $c_0\cdots c_h$ with $a = c_0$ and $b = c_h$. 
Because $\dx$ is in play in $S_j$ for all $j \geq t_0$ the name $\dx$ can never be 
	introduced in the transition function. 
	Thus we may conclude that $\dx$ was already present in the stack $\tau_j$ of $a$ 
	in $S_{t(j)}$, meaning that $a \in B_j$. It remains to show that there is such a path where $\max\{\Omega(c_j) \| i = 1,...,h\} = k$. In $S_{t(j)}$ the name $\dx$ is visible in all stacks, where $\dx$ occurs. In $S_{t(j+1)}$ the name $\dx$ is coloured green, indicating that after step 4 of the transition function in $S_{t(j+1)-1}'$ the name $\dx$ is invisible. This can only happen if a $k$-name $\dy$ was added to the stack $\tau_j$ in $S_{t(j)+1}...S_{t(j+1)}$ in step 2 or 3 of the transition function. Then $\Omega(c_j) = k$ for some $j = 1,...,h$. As $\dx$ is always in play we also have $\Omega(c_j) \leq k$ for all $j = 1,...,h$ and thus $\max\{\Omega(c_j) \| i = 1,...,h\} = k$.
	\claimproofend
	
	We will now glue paths together to obtain an infinite path through $\bbA$. This can be achieved using König's Lemma. Let $G = (V,E)$ where 
	\begin{align*}
		V = &\{a_I\} \cup \{(a,j) \| a \in B_j \text{ and } j \in \omega\},\\
		E = &\{(a_I,(a,0))\} \cup \\ 
		&\{((a,j),(b,j+1))\| b \in B_{j+1} \text{ and $a \in B_{j}$ as in Claim 4}\}
	\end{align*}
	Clearly $G$ is a connected, finitely branching and infinite graph. Hence we can apply König's Lemma to obtain an $w$-labelled path $r' = a_0a_1\cdots$ in $\bbA$, where $\Omega(a_j)\leq k$ for cofinitely many $j \in \omega$ and $\Omega(a_j) = k$ for infinitely many $j \in \omega$. In particular we find $r' \in \mathrm{Acc}$. By adding natural numbers $n_0, n_1,...$ in a straightforward way we obtain the accepting run $r = (a_0,n_0)(a_1,n_1)...$ of $\bbA$ on $w$, whose projection is $r'$. 
\end{proof}

%\subsection{Split proofs}\label{app.splitProofs}

\begin{comment}
	\begin{lemma}\label{lem.JStwoQfEmptyAnn}
	If $\JStwo \proves \Gamma$, then there is a $\JStwo$ proof $\pi$ of $\Gamma$, where every quantifier-free formula has an empty annotation. Formally this means that whenever $\phi^\sigma$ occurs at a node $u$, with $\phi$ quantifier-free and $\sigma \neq \epsilon$, then $u$ is labelled by \RuExp.
\end{lemma}
\begin{proof}
	Let $\pi$ be a \JStwo derivation of $\Gamma$. By applying an \RuExp rule to annotated formulas $\phi^\sigma$, whenever $\phi$ is quantifier-free and $\sigma \neq \epsilon$, we obtain a \JStwo derivation $\pi'$ with the desired condition. It remains to show that all leaves of $\pi'$ are closed. Clearly every repeat leaf $v$ in $\pi$ gets translated to a repeat leaf $v'$ in $\pi'$. Let $\alpha_v$ be the successful repeat path of $v$ in $\pi$, witnessed by a name $\dx$. As names are only added to quantified formulas and a $\RuReset[\dx]$ rule is applied on $\alpha_v$, each sequent of $\alpha_v$ must contain a quantified formula which annotation contains $\dx$. Therefore the repeat path $\alpha_{v'}$ in $\pi'$ is successful as well, as witnessed by the same name $\dx$. Therefore every repeat leaf is discharged.
\end{proof}
\end{comment}

%\subsection{Interpolation}\label{app.interpolation}

\end{document}